\documentclass[a4paper,12pt]{article}

% packages
\usepackage[english]{babel}
\usepackage{authblk}
\usepackage{dsfont}
\usepackage{amsfonts}
\usepackage{mathrsfs}
\usepackage{amssymb}        
\usepackage{amsmath}
\usepackage{graphicx,caption}
\usepackage{float}
\usepackage[a4paper]{geometry} 
\usepackage{verbatim}
\usepackage{pstricks}
\usepackage{amsthm}
\usepackage{mathrsfs}
\usepackage{enumitem}

%dimensions
\addtolength{\textwidth}{1in} \addtolength{\hoffset}{-0.5in}  
\addtolength{\textheight}{1in} \addtolength{\voffset}{-0.5in}  

\numberwithin{equation}{section}

% theorem, proposition, lemma...
\newtheorem{Theorem}{Theorem}
\newtheorem{Proposition}{Proposition}
\newtheorem{Corollary}{Corollary}
\newtheorem{Lemma}{Lemma}
\newtheorem{Definition}{Definition}

\newtheorem{Assumption}{Assumption}

% \mathbb{} letters

\newcommand{\R}{\mathbb{R}}

\newcommand{\Z}{\mathbb{Z}}
\newcommand{\N}{\mathbb{N}}

% \mathrm{} letters 
\newcommand{\dd}{\mathrm{d}}

\newcommand{\ed}{\mathrm{e}}

% others
\newcommand{\Const}{\mathrm{C}}
\newcommand{\const}{c}
\newcommand{\E}{\mathsf E}
\newcommand{\Proba}{\mathsf P}
\newcommand{\imag}{\mathrm i}
\newcommand{\dist}{\mathrm{dist}}

\newcommand{\har}{{\mathrm{har}}}
\newcommand{\an}{{\mathrm{an}}}

\newcommand{\loccen}{{\mathbf x}}
\newcommand{\interval}{{\mathcal I}}

\newcommand{\lscal}{[}
\newcommand{\rscal}{]}

% correction macro
%\newcommand{\bluecol}{\textcolor{blue}}
%\newcommand{\redcol}{\textcolor{red}}

% Beginning of the text
\begin{document}

% title
\title{Long Persistence of Localization in a Disordered Anharmonic Chain Beyond the Atomic Limit}

\author[1]{Wojciech De Roeck}
\author[2]{François Huveneers}
\author[3]{Oskar A. Prośniak}
\affil[1]{K.U.Leuven University, Leuven 3000, Belgium}
\affil[2]{Department of Mathematics, King’s College London, Strand, London WC2R 2LS, United Kingdom}
\affil[3]{Department of Physics and Materials Science, University of Luxembourg, L-1511 Luxembourg, G.\ D.\ Luxembourg}
\date{\today}
\maketitle 

\begin{abstract}
    We establish rigorous bounds on the decorrelation time and thermal transport in the disordered Klein-Gordon chain with a quartic on-site potential, governed by a parameter $\lambda$. At $\lambda = 0$, the chain is harmonic, and any form of transport is fully suppressed by Anderson localization. For the anharmonic system, at $\lambda > 0$, our results show that decorrelation and transport can occur only on time scales that grow faster than any polynomial in $1/\lambda$ as $\lambda \to 0$.
    From a technical perspective, the main novelty of our work is that we don't restrict ourselves to the atomic limit. Instead, we develop perturbation theory around the harmonic system with a fixed harmonic interaction between nearby oscillators. This allows us to compare our mathematical results with previous numerical work and contribute to resolving an ongoing debate, as detailed in a companion paper \cite{ourphysicspaper}. 
\end{abstract}

\section{Introduction}

The fate of Anderson localization in the presence of genuine many-body interactions, or anharmonicity, is a matter of considerable interest and debate, 
both from a mathematical and physical point of view. 
For extensive classical Hamiltonian systems at positive temperature, in which the total energy scales extensively, it is generally believed that a small, generic interaction destroys localization and integrability, leading to hydrodynamic behavior with either diffusive or superdiffusive transport. 

Before proceeding, we note that the above restriction to positive temperature is important. Indeed, if we consider a system close to the ground state, i.e.\@ close to the configuration of minimal energy, then the phenomenology is different and KAM tori can persist in the presence of small perturbations, see \cite{frohlich1986localization,johansson2010kam}.
It is also important to consider sufficiently generic interactions, see e.g.\  \cite{WDR_Huveneers_Olla} for an example of a many-body system that, despite probably being chaotic, exhibits subdiffusive transport. 

Our paper is not concerned with the fundamental yet uncontested question of whether chaoticity and normal hydrodynamic behaviour are restored at small interaction. 
Instead, we do investigate how much the dynamics at positive temperature and at weak interaction is slowed down by its proximity to a localized system. We will explain later how this connects to an interesting ongoing debate, but first, we introduce the model and set the stage. %{\red minor changes}

\subsection{Model}
The disordered Klein-Gordon chain with a quartic interaction is a prototypical example of an interacting classical many-body system, 
intensively studied in the physical literature. 
Newton's equations of motion read
\begin{equation}\label{eq: newton}
	\ddot q_x \; = \; - \omega_x^2 q_x + \eta (\Delta q)_x - \gamma q_x^3, \qquad x \in \Lambda_L
\end{equation}
with $\Lambda_L = [-L,L]\cap \Z$ for $L>0$ and 
\begin{enumerate}
    \item $\omega_x^2 > 0$ i.i.d.\@ random variables with a smooth compactly supported distribution bounded away from $0$,
    \item $\Delta$ the lattice Laplacian with free boundary conditions,  
    \item $\eta > 0$ the harmonic coupling, 
and $\gamma \ge 0$ the anharmonic coupling.
\end{enumerate}
We are interested in the large volume limit $L\to\infty$, but we do not attempt to define the dynamics itself in infinite volume. 
See Section~\ref{sec: model and results} for a more thorough description, 
and notice that $\gamma$ will later on be replaced by $\lambda$, the \emph{effective} strength of the anharmonic coupling at equilibrium, cf.~\eqref{eq: def of lambda} below. 

At $\gamma = 0$, the equations of motion are linear and the system is an Anderson insulator, i.e.\@ all vibration eigenmodes are spatially localized, as they are related to the eigenfunctions of the linear operator $\mathcal H_{\Lambda_L}$ defined in Section \ref{sec: the localized harmonic chain}.
Instead, taking $\gamma > 0$ induces an interaction among the modes and presumably restores ergodicity.  The model can also be cast as a Hamiltonian system, as we explain in Section~\ref{sec: model and results}, and hence there is a locally conserved quantity, namely the energy. 

\subsection{Atomic Limit and Asymptotic Localization}

The mathematical understanding of the long time behavior of this system remains so far out of reach. Some progress has however been made by considering the so-called \emph{atomic limit}, where both the harmonic hopping strength $\eta$ and the interaction strength $\gamma$
are regarded as perturbative parameters.
The appeal of such a limit is clear.
On the one hand, the description of the dynamics is fully explicit at $\eta=\gamma=0$, making it an ideal starting point for perturbative expansions, with the eigenfrequencies of the harmonic modes being simply the i.i.d.\@ random variables $\omega_x$ featuring in \eqref{eq: newton}. 
On the other hand, the size of $\eta$ mainly governs the localization length of the harmonic system 
and presumably does not affect the fundamental characteristics of the long-time behavior of the dynamics, so that the restriction to the atomic limit should not fundamentally alter the long-time behaviour of the system.

In the atomic limit $\eta\propto\gamma \to 0$, 
several mathematical works support the following global picture for systems of the type of \eqref{eq: newton}, cf.~\cite{bourgain_wang_2007,wang2009long,Huveneers_2013,cong_long-time_2021,DeRoeck2013,de2015asymptotic}:
\emph{Delocalization is non-perturbatively slow as a function of $\gamma$, 
i.e.\@ ergodicity is restored on time scales that diverge faster than any power of $1/\gamma$.} 
This phenomenon was called \emph{asymptotic localization} in \cite{DeRoeck2013,de2015asymptotic}.
This picture is also supported by a few numerical studies \cite{oganesyan_pal_huse_2009,kumar_transport_2020}, as well as theoretical works \cite{basko2011weak,fishman2009perturbation,fishman2012nonlinear}, and this support is \emph{not} restricted to the atomic limit.

\subsection{Numerics on Spreading Wavepackets}
Many numerical papers deal with an infinite system with a finite amount of energy: 
They analyze the spreading of an initially localized wave packet, and keep both $\eta$ and $\gamma$ as positive, fixed, parameters. 
The question then is how fast the wavepacket spreads. 
One can make this quantitative in many ways, but one could for example just define the width $w$ of a wavepacket as\footnote{A wiser choice is to replace $q_x^2$ by the local energy, but in practice the given expression will yield the same behaviour} 
$$
w^2 \; = \; \frac{\sum_x x^2 q_x^2}{\sum_x q_x^2}
$$
and ask how it evolves with time. 
The numerical answer seems to be unambiguous: one finds 
\begin{equation}\label{eq: numerical scaling}
    w(t) \;\propto\; t^{1/6},
\end{equation}
see e.g.\ \cite{pikovsky_destruction_2008,garcia-mata_delocalization_2009,flach_universal_2009} as well as references given in \cite{ourphysicspaper}.

Let us explain how, at least at a heuristic level,  
the long-time dynamics of a wave packet can be mapped to the limit of vanishing anharmonicity $\gamma$ or vanishing temperature $\beta^{-1}$, in an extensive system with positive energy density as studied here. 
Before proceeding, we note that these two last limits are equivalent for what matters: 
As we will see in the next section, in appropirate units, the statistics of the time-dependent observables considered here depend only on the product $\lambda = \beta^{-1}\gamma$, which we refer to as the effective anharmonicity, cf.~\eqref{eq: def of lambda} below.
Now, the main point is that, in a spreading experiment, the non-linear term $-\gamma q_x^3$ becomes negligible compared to the linear one in \eqref{eq: newton} as the packet spreads. 
Consequently, the system approaches the linear model, which is known to localize: 
when $\gamma = 0$, the time-dependent width $w$ is almost surely bounded from above. 
It is important to note, however, that the system does not approach the atomic limit, since the harmonic coupling does not scale down as the packet spreads. 
This is a key motivation for this work. 

With this in mind, the scaling law in \eqref{eq: numerical scaling} sharply contrasts with the concept of ``asymptotic localization" described above and with the theorems derived in the present paper: 
We would expect a spreading slower than any polynomial in time, 
see \cite{ourphysicspaper} for further details.

\subsection{Asymptotic Localization Beyond the Atomic Limit}
Our primary goal is to establish rigorous results that bridge the gap between the conjectured framework of ``asymptotic localization'' and the conflicting numerical evidence. To achieve this, we employ a dual strategy.
First, we present mathematical results that extend beyond the atomic limit by keeping $\eta$ fixed and treating $\gamma$ as a perturbative parameter. These results, which form the core of this paper, support the concept of ``asymptotic localization''.
Second, we offer insights into why state-of-the-art numerical simulations may struggle to reach the long-time limit in spreading dynamics.
This paper focuses exclusively on the rigorous results, and we now restrict our attention to this aspect. 
We discuss the numerical analysis and overall conclusions in our companion paper \cite{ourphysicspaper}.

In this paper, we set a controlled expansion around the localized system at $\gamma=0$ and we derive rigorous bounds on the transport of energy and the decorrelation of local observables. For the case of decorrelation rates, our bounds rule out any other behaviour than ``asymptotic localization". For the case of the diffusion constant, our bounds provide a strong hint, but no proof, that ``asymptotic localization" holds. 
Since a precise statement of these results necessarily involves quite some setup, we postpone it to Section \ref{sec: model and results}.  
 
In going beyond the atomic limit, we face new technical difficulties, since the unperturbed system is a full Anderson insulator, not simply a set of uncoupled sites. 
In particular, to control our expansions, we need lower bounds on KAM-like ``denominators'' involving linear combinations of the eigenfrequencies of the linear system: 
\[
    \frac{1}{\sigma_1 \nu_{k_1} + \dots + \sigma_n \nu_{k_n}}
\]
where $\nu_{k_j}^2$ are eigenvalues of the Anderson operator $\mathcal H_{\Lambda_L}$, cf.~\eqref{eq: Anderson hamiltonian} below, 
and where $\sigma_{j} =\pm 1$.

The main difficulty in estimating this denominator is the control of correlations between eigenvalues $\nu_k^2$ that are not necessarily close to each other. Therefore, this problem goes beyond Minami estimates \cite{minami1996local,graf2007remark} and we know of only one paper \cite{klopp2011decorrelation} tackling such correlations, but only for $n=2$. See also \cite{aizenman2008joint}. 
The technique we use in this paper relies crucially on the fact that the above denominator contains $\nu_k$ instead of $\nu_k^2$, see Section~\ref{sec: bound denominators}. 
We note that this approach would thus not have worked if we had considered the nonlinear Schrödinger equation instead of the Klein-Gordon equation, see \cite{fishman2009perturbation,fishman2012nonlinear}.

In addition to addressing this main technical difficulty, our paper develops a systematic and clear perturbative expansion that is well-suited for controlling local observables in extensive systems, see Section~\ref{sec: perturbative expansion}. Furthermore, we introduce a formalism to handle the fact that the eigenfunctions of the Anderson Hamiltonian \(\mathcal H_{\Lambda_L}\) are localized but not strictly local, i.e.\@ are not compactly supported with fixed support, see Sections~\ref{sec: control of local observables}–\ref{sec: properties of the Z random variables}.

\subsection{Organization of the Paper}
In Section~\ref{sec: model and results}, we provide the needed set-up to state our main results: 
Theorems~\ref{th: decorrelation} to \ref{th: Green Kubo} as well as an out-of-equilibrium generalization, Theorem~\ref{th: out of equilibrium}. 
In Section~\ref{sec: proof of the theorems}, we prove Theorems~\ref{th: decorrelation} to \ref{th: Green Kubo} under some assumptions, 
stated as Propositions~\ref{pro: assumptions theorem 1} to \ref{pro: assumptions theorem 3} below, that will be shown afterwards.  
In Section~\ref{sec: perturbative expansion}, we develop a perturbative expansion in $\lambda$ for the observables appearing in our theorems.
This expansion features ``small denominators'' involving energies of the eigenmodes of the chain. 
We provide a probabilistic bound to control them in Section~\ref{sec: bound denominators}. 
Next, in Sections~\ref{sec: control of local observables} and \ref{sec: properties of the Z random variables}, 
we define a special class of random variables and show some of their properties. 
We show in Section~\ref{sec: proof of the three propositions}
that the variables of interest for us are of this type, and we derive probabilistic bounds on the perturbative expansion from the properties established in 
Sections~\ref{sec: control of local observables} and \ref{sec: properties of the Z random variables}. 
Thanks to this, we conclude the proof of Propositions~\ref{pro: assumptions theorem 1} to \ref{pro: assumptions theorem 3}.
We also describe in Section~\ref{sec: proof of the three propositions} the needed adaptations needed to derive Theorem~\ref{th: out of equilibrium}, 
completing thus the proof of this theorem as well. 
Finally, Appendices~\ref{sec: localization} and \ref{sec: gibbs} gather some standard results 
on localization and on the decay of correlation of the Gibbs state respectively.

\subsubsection*{Data Availability Statement}
Data sharing is not applicable to this article as no new data were created or analyzed in this study.

\subsubsection*{Statements and Declarations}
W.D.R. and O.A.P. were supported in part by the FWO (Flemish Research Fund) under grant G098919N.  
Apart from this support, the authors have no relevant financial or non-financial interests to disclose.

\subsubsection*{Acknowledgements}
We sincerely thank the anonymous referee for their careful reading of a first version of this paper and for pointing out numerous inaccuracies; their comments and suggestions greatly improved the quality of this work.
W.D.R.\ is grateful to Alex Elgart for discussions on the problem of Section \ref{sec: bound denominators}.

\section{Model and Results}\label{sec: model and results}

\subsection{Model}\label{subsec: model}
Let $L\in\N^*$, wiht $\N^*$ the set of natural numbers, i.e.\@ positive integers, and $\Lambda_L = [-L,L] \cap \Z$. 
Let the Hamiltonian $\mathsf H:\R^{2|\Lambda_L|}\to \R$ be given by 
\begin{equation}\label{eq: H original}
	\mathsf H(\mathsf q,\mathsf p) 
	\; = \; 
	\sum_{x\in\Lambda_L}\left(  
	\frac{\mathsf p_x^2}{2} + \frac{\mathsf{\omega}_x^2 \mathsf q_x^2}{2} + \frac{\eta}{2} (\mathsf q_x - \mathsf q_{x+1})^2 + \frac{\gamma}{4} \mathsf q_x^4 
	\right)
\end{equation}
with free boundary conditions i.e.\@ $\mathsf q_{L+1}=\mathsf q_L$.
We assume that $\eta, \gamma \ge 0$,  
and that $(\omega_x^2)_{x\in\Z}$ is an i.i.d.\@ sequence of random variables with a smooth density supported on the compact interval $[\omega_-^2,\omega_+^2]$, 
where $0 < \omega_-^2 < \omega_+^2 < + \infty$. 
We denote by $\E(\cdot)$ the average w.r.t.\@ the $\omega_x$. 
The variables $\mathsf q$ and $\mathsf p$ are canonically conjugated and they evolve according to Hamilton's equations: 
$$
	\dot{\mathsf q} \; = \; \nabla_{\mathsf p} \mathsf H, 
	\qquad 
	\dot{\mathsf p} \; = \; -\nabla_{\mathsf q} \mathsf H.
$$

We will principally consider this system in the Gibbs ensemble at inverse temperature $\beta$: 
We assume that $(\mathsf q,\mathsf p)$ are distributed according to the invariant probability measure on $\R^{2|\Lambda_L|}$ with density
\begin{equation}\label{eq: def Gibbs}
    \rho_\beta(\mathsf q,\mathsf p) \; = \; \frac{\ed^{-\beta \mathsf H(\mathsf q,\mathsf p)}}{Z_\beta},
\end{equation}
where $Z_\beta$ is the partition function that ensures normalization: 
$Z_\beta = \int_{\R^{2|\Lambda_L|}} \ed^{-\beta \mathsf{H}(\mathsf q,\mathsf p)}\dd \mathsf q \dd \mathsf p$. 
We denote by $\langle \cdot \rangle_\beta$ the average w.r.t.\@ the equilibrium measure, i.e.\ the Gibbs ensemble.

Let us assume that the harmonic part of the Hamiltonian is fixed, i.e.\ the parameter $\eta$ and the distribution of $\omega_x$ are considered to be fixed throughout the paper.
The statistics of time-evolved observables will then depend on the remaning parameters $\gamma$ and $\beta$.
However, it turns out that in appropriate units they only depend on these parameters through the dimensionless \emph{effective anharmonicity} 
\begin{equation}\label{eq: def of lambda}
	\lambda \; = \; \beta^{-1}\gamma.
\end{equation}
Indeed, let us introduce the dimensionless variables:
$$
	q \; = \; \beta^{1/2} \mathsf q \quad \text{and} \quad p \; = \; \beta^{1/2} \mathsf p.
$$
Writing $\mathsf q(q) = \beta^{-1/2}q$ and $\mathsf p(p) = \beta^{-1/2}p$, 
we introduce also the Hamiltonian $H:\R^{2|\Lambda_L|}\to\R$ defined by 
$$
	H(q,p) \; = \; \beta \mathsf H(\mathsf q(q),\mathsf p(p)).
$$
Then $(q,p)$ satisfy
\begin{equation}\label{eq: Hamilton equations}
	\dot q \; = \; \nabla_p ( H), 
	\qquad 
	\dot p \; = \; - \nabla_q ( H). 
\end{equation}
where
\begin{equation}\label{eq: H}
	 H(q,p) \; = \; \sum_{x\in \Lambda_L} \left(
	\frac{p_x^2}{2} + \frac{\omega_x^2 q_x^2}{2} + \frac{\eta}{2} (q_x - q_{x+1})^2 + \frac{\lambda}{4} q_x^4
	\right).
\end{equation}
From this point forward, we work in these new coordinates, retaining $\lambda$ as the sole variable parameter. The equilibrium measure is denoted simply by $\langle\cdot\rangle$.

\subsection{Localized Harmonic Chain}\label{sec: the localized harmonic chain}
Let us decompose the Hamiltonian in eq.~\eqref{eq: H} as 
$$
	H \; = \; H_\har + \lambda H_\an
$$
where $H_\har$ and $H_\an$ do not depend on $\lambda$, 
and where $\har$ and $\an$ stand for ``harmonic'' and ``anharmonic'' respectively.
We focus here on the dynamics at $\lambda = 0$: The chain is harmonic and the system is decomposed into a set of $|\Lambda_L|$ independent modes.

It will prove useful to consider the harmonic dynamics on more generic subsets than $\Lambda_L$.  
We say that a set $I\subset \Z$ is an \emph{interval} if it is non-empty and if it takes the form $I = I' \cap \Z$ for some interval $I'\subset \R$.
Given a (possibly infinite) interval $I\subset \Z$, 
let us introduce the random operator 
\begin{equation}\label{eq: Anderson hamiltonian}
    \mathcal H_I : L^2(I) \to L^2(I),\quad f \mapsto \mathcal H_I f 
    \; = \;  
    (V - \eta\Delta) f.
\end{equation}
Here $L^2(I)$ is the Hilbert space of square integrable real valued functions on $I$, 
and the ``random potential'' $V$ and the discrete Laplacian $\Delta$ are defined respectively by 
$$
	Vf(x) \; = \; \omega_x^2 f(x), 
	\qquad
	\Delta f(x) \; = \; f(x+1) - 2f(x) + f(x-1) 
$$
for all $f \in L^2(I)$ and $x\in I$, 
with free boundary conditions for $\Delta$: 
$f(-a-1)=f(-a)$ and $f(b+1)=f(b)$, with $a = \min I$ and $b=\max I$
(possibly $a=-\infty$ and $b=+\infty$).
The operator $\mathcal H_I$ is positive and symmetric and its spectrum is almost surely discrete and non-degenerate, see e.g.\@~\cite{aizenman2015random}.  
By our assumptions on the distribution of the disorder, there exist deterministic constants $0 < \nu_-^2 \le \nu_+^2 < + \infty$ such that 
\begin{equation}\label{eq: nu - nu +}
	\nu_-^2  \; \le \; \nu^2 \; \le \;  \nu_+^2 \qquad a.s. 
\end{equation}
for any eigenvalue $\nu^2$ of $\mathcal H_I$. % $1 \le k \le |I|$.
Almost surely, the spectrum of $\mathcal{H}_I$ can be written as $(\nu_k^2)_k$, where $k \in \{1, \dots, |I|\}$ if $|I|<\infty$, or $k \in \mathbb{N}^*$ if $|I| = \infty$.
From here on, we will assume that $\{1,\dots,|I|\}$ is interpreated as $\N^*$ whenever $|I|=\infty$.
Given an eigenvalue $\nu_k^2$, we denote by $\psi_k$ the corresponding normalized eigenvector, which is unique up to a phase.

The observable $H_\har$ takes the form 
\begin{equation}\label{eq: H not decomposition}
	H_\har(q,p) 
	\; = \; 
	\frac{1}{2} \big(  \lscal p,p\rscal + \lscal q,\mathcal H_{\Lambda_L} q \rscal \big)
	\; = \; 
	\frac12 \sum_{k=1}^{|\Lambda_L|} \left( \lscal p,\psi_k\rscal^2 + \nu_k^2 \lscal q,\psi_k\rscal^2 \right)
	\; = :\;
	\sum_{k=1}^{|\Lambda_L|} E_k(q,p).
\end{equation}
where $\lscal\cdot,\cdot\rscal$ is the standard scalar product on $L^2(\Lambda_L)$
(this unusual notation has been chosen to avoid possible confusion with the Gibbs state $\langle\cdot\rangle$ 
and the covariance $\langle \cdot ; \cdot \rangle$ introduced later). 
We say that $E_k$ is the energy of the $k$ mode for $1\le k \le |\Lambda_L|$.
It is conserved under the harmonic dynamics:
$$
	\{H_\har,E_k \} \; =\;  0 \quad \text{for} \quad 1 \le k \le |\Lambda_L|
$$ 
where $\{\cdot,\cdot\}$ denotes the Poisson bracket: 
$$
	\{f,g\} \; = \; \lscal \nabla_p f , \nabla_q g \rscal -  \lscal \nabla_q f , \nabla_p g \rscal
$$
for smooth functions $f,g$ on $\R^{2|\Lambda_L|}$.

The operator $\mathcal H_{\Lambda_L}$ is the Anderson Hamiltonian for a single quantum particle in a disordered potential in the tight-binding approximation.
Since the system is one-dimensional, well known results \cite{gol1977pure,Kunz,carmona} guarantee that all modes are \emph{localized}:
There exists constants $\Const <+\infty$ and $\xi > 0$ so that, for all $L\in\N^*$ and for all $x,y\in\Lambda_L$, 
\begin{equation}
	\E \left( \sum_{k=1}^{|\Lambda_L|} |\psi_k(x) \psi_k(y)| \right) \; \le \; \Const \ed^{-|x-y|/\xi}.
\end{equation} 
See \cite{Kunz,aizenman2015random} as well as Appendix~\ref{sec: localization} below.  

\subsection{Constants}
We will denote by \emph{constants} strictly positive real numbers that may depend on the harmonic coupling $\eta$ and on the distribution of the disorder, but not on the realization of the disorder (they are deterministic constants). 
We will specify explicitly when constants are allowed to depend on some other parameter; see in particular the comment at the very end of this section.
In any case, constants never depend on the total length $L$ nor on the anahrmonicity parameter $\lambda$. 

We will often use the letters $\Const$ and $\const$ to denote generic constants, with the understanding that their values may vary from one instance to another.
Usually, we use the letter $\Const$ to stress that the constant needs to be taken large enough, and the letter $\const$ to stress that it needs to be taken small enough.

\subsection{Main Results}
When $\lambda >0$, the energies $E_k$ are no longer conserved by the dynamics, and any sense of localization is presumably destroyed. 
In our first result, we establish however that these conservation laws are broken very slowly in the limit $\lambda \to 0$. 
For an observable $O$, i.e.\@ a smooth function $O:\R^{2|\Lambda_L|}\to\R$, we will write $O(t)$ for $O(q(t),p(t))$ for any $t\ge 0$, 
where $(q(t),p(t))$ evolve according to eq.~\eqref{eq: Hamilton equations}.  
For $1\le k \le |\Lambda_L|$ and $t\ge 0$, let us define 
\begin{equation}\label{eq: def of C k}
	C_k (t) 
	\; = \; 
	\frac12 \left\langle\left(E_k(t) - E_k(0)\right)^2\right\rangle
	\; = \; 
	\langle E_k;E_k\rangle - \langle E_k(t);E_k(0)\rangle
\end{equation}
where we have removed the time dependence in the first term in the r.h.s.\@ thanks to the invariance of the Gibbs state, and where 
$$
	\langle f;g\rangle \; = \; \langle (f - \langle f \rangle)( g - \langle g \rangle) \rangle
$$
for any observables $f,g$. 
Obviously, $C_k(0) = 0$ and, by Cauchy-Schwarz inequality, $0 \le C_k(t) \le 2 \langle E_k;E_k\rangle$ for all $t \ge 0$.
Numerical results in \cite{ourphysicspaper} suggest that $C_k(t)$ grows monotonically from $0$ to a value close to $\langle E_k;E_k\rangle$, and the deviation from $\langle E_k;E_k\rangle$ vanishes in the thermodynamic limit $L\to\infty$.
The following theorem implies however that for most modes $k$, this decorrelation has to proceed very slowly. 

\begin{Theorem}\label{th: decorrelation}
	Let $n\in\N^*$.
	There exist constants $\Const_n$ and $\const_n$ so that 
	$$
		\limsup_{L\to\infty}\frac1{|\Lambda_L|}\sum_{k=1}^{|\Lambda_L|} C_k(t) \; \le \; \Const_n \left( \lambda^{\const_n}+ (\lambda^n t)^2 \right)
		\qquad \text{a.s.}\quad \forall \lambda \ge 0, \quad \forall t \ge 0.
	$$
\end{Theorem}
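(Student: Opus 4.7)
I would construct, via canonical perturbation theory up to order $n$ around the harmonic system, an approximately conserved ``dressed'' energy for each mode. Fix $n\in\N^*$. The plan is to build observables $F_{1,k},\ldots,F_{n,k}$ and a remainder $R_{n,k}$ so that
\[
\tilde E_k \; := \; E_k + \lambda F_{1,k} + \lambda^2 F_{2,k} + \cdots + \lambda^n F_{n,k}
\]
satisfies $\{H,\tilde E_k\} = \lambda^{n+1} R_{n,k}$. The $F_{j,k}$ are obtained inductively by solving, at each order, a homological equation $\{H_\har,F_{j,k}\} = G_{j,k}$, where $G_{j,k}$ gathers the order-$j$ obstructions to Poisson-commutation with $H$ accumulated at the previous steps. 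Expanding in the normal-mode basis $(\psi_\ell,\nu_\ell)$ of $\mathcal H_{\Lambda_L}$ turns the inversion of $\mathrm{ad}_{H_\har}$ into division by linear combinations $\sum_\ell \sigma_\ell \nu_\ell$, producing exactly the random sums of the form \eqref{eq: z like expression}.

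\textbf{From the normal form to the estimate.} With this construction in hand, stationarity of the Gibbs state together with Cauchy--Schwarz give
\[
\langle (\tilde E_k(t) - \tilde E_k(0))^2\rangle \; = \; \Bigl\langle \Bigl(\lambda^{n+1}\int_0^t R_{n,k}(s)\,\dd s\Bigr)^2\Bigr\rangle \; \le \; (\lambda^{n+1} t)^2 \langle R_{n,k}^2\rangle,
\]
while the identity $E_k(t)-E_k(0) = (\tilde E_k(t)-\tilde E_k(0)) - \sum_{j=1}^n \lambda^j (F_{j,k}(t)-F_{j,k}(0))$, combined again with stationarity, yields
\[
C_k(t) \; \le \; \Const_n \Bigl( (\lambda^{n+1} t)^2 \langle R_{n,k}^2\rangle + \sum_{j=1}^n \lambda^{2j} \langle F_{j,k}^2\rangle \Bigr).
\]
Up to a shift of the index $n$, the first term matches the announced $(\lambda^n t)^2$ contribution, and the second must be shown to contribute the $\lambda^{\const}$ piece.

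\textbf{Averaging over modes and the $\lambda^{\const}$ term.} To pass to the normalized sum in the statement, I re-express each average $|\Lambda_L|^{-1}\sum_k \langle F_{j,k}^2\rangle$, and its $R_{n,k}$ analogue, as a spatial average of localized disorder-dependent observables (using that $\psi_\ell$ is concentrated around a random site). Translation covariance and ergodicity of $(\omega_x)_{x\in\Z}$ then yield almost-sure convergence as $L\to\infty$, provided the corresponding disorder expectations are finite uniformly in $L$. This $L$-uniform finiteness is precisely the content of Propositions~\ref{pro: assumptions theorem 1}--\ref{pro: assumptions theorem 3}, which combine the $Z(x)$-moment bounds of Section~\ref{sec: bound denominators} with Gibbs correlation decay from Appendix~\ref{sec: gibbs}. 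Because the $Z(x)$ variables have only finite $s$-th moments for some small $s>0$, the $L^2$ estimate on $F_{j,k}$ is not directly available; one instead truncates on a ``good'' disorder event of probability $1-O(\lambda^{\const'})$ on which deterministic bounds hold, and uses the trivial equilibrium bound $C_k(t)\le 2\langle E_k;E_k\rangle = O(1)$ on its complement. Optimizing the cutoff produces the final $\lambda^{\const}$ correction.

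\textbf{Main obstacle.} The essential difficulty lies in the control of the small denominators $\sum_\ell \sigma_\ell \nu_\ell$ that appear every time the homological equation is inverted. Although the exponential localization of each $\psi_\ell$ tames the numerators of \eqref{eq: z like expression}, the eigenvalues $\nu_\ell$ populate a bounded interval densely, so near-resonances involving \emph{many} eigenvalues---potentially attached to spatially distant modes---are unavoidable. Minami-type estimates address only pairs of nearby eigenvalues and do not suffice here; the probabilistic estimate needed, which crucially exploits that it is $\nu_\ell$ rather than $\nu_\ell^2$ that appears in the denominator, is the novel technical ingredient that the later sections of the paper develop.
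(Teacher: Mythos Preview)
Your normal-form construction is exactly what the paper does: writing $u_k = u_k^{(1)} + \lambda u_k^{(2)} + \cdots + \lambda^{n-1} u_k^{(n)}$, the paper's decomposition $f_k = -\{H,u_k\} + \lambda^n g_k$ in Proposition~\ref{pro: assumptions theorem 1} is equivalent to $\{H,\tilde E_k\} = \lambda^{n+1} g_k$ for $\tilde E_k = E_k + \lambda u_k$, so your $F_{j,k}$ are the paper's $u_k^{(j)}$ and your $R_{n,k}$ is the paper's $g_k$. Your perturbative bound on $C_k(t)$ matches \eqref{eq: bound 2 correlator}.

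The place where your description goes wrong is the good/bad split. You propose to truncate on a ``good \emph{disorder} event of probability $1-O(\lambda^{\const'})$ on which deterministic bounds hold''. This cannot work for the almost-sure statement: for \emph{every} realization of $(\omega_x)$ there are near-resonant modes $k$ with $\langle F_{j,k}^2\rangle$ arbitrarily large, so no disorder event of positive probability carries bounds uniform in $k$. Relatedly, you cannot apply ergodicity directly to $|\Lambda_L|^{-1}\sum_k \langle F_{j,k}^2\rangle$, because the summands need not have finite disorder expectation. The paper instead splits over \emph{modes} for each fixed disorder: one shows (Proposition~\ref{pro: assumptions theorem 1}) that $|\Lambda_L|^{-1}\sum_k \big(\langle u_k^2\rangle^q + \langle g_k^2\rangle^q\big) \le \Const$ almost surely for some small $q>0$, by applying the law of large numbers to these \emph{fractional} powers (which do have finite moments, via the $Z$-collection machinery). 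This immediately yields that the fraction of ``bad'' modes $B(M)=\{k:\langle u_k^2\rangle\ge M \text{ or }\langle g_k^2\rangle\ge M\}$ satisfies $|B(M)|/|\Lambda_L|\le \Const M^{-q}$ a.s. Taking $M=1/\lambda$, the bad modes contribute $O(\lambda^q)$ via the trivial bound $C_k(t)\le 2\langle E_k;E_k\rangle\le\Const$, and the good modes contribute $O(\lambda + \lambda(\lambda^n t)^2)$ via your perturbative bound. That is how the $\lambda^{\const}$ term actually arises.

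A minor point: only Proposition~\ref{pro: assumptions theorem 1} feeds into this theorem; Propositions~\ref{pro: assumptions theorem 2} and~\ref{pro: assumptions theorem 3} serve Theorems~\ref{th: current} and~\ref{th: Green Kubo} respectively.
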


Next, we aim to express the slowness of energy transfer in the system without relying on the modes of the harmonic system.
This is not totally obvious because, for example, it is simply not true that the flow of energy between adjacent sites vanishes as $\lambda \to 0$, due to the harmonic couplings. 
What is however true is that the onset of \emph{persistent} transport is very slow. 
This can be understood by contemplating the behavior of the time-integrated energy-current between nearby sites. 
Let us decompose the Hamiltonian in \eqref{eq: H} as 
$$
	H(q,p) \; = \; \sum_{x\in\Lambda_L} H_x(q,p) 
	\quad \text{with} \quad 
	H_x (q,p) \; = \; \frac{p_x^2}{2} + \frac{\omega_x^2 q_x}{2} + \frac{\eta}{2} (q_x - q_{x+1})^2 + \frac{\lambda}{4} q_x^4.
$$
For $x\in\Lambda_L$, we define the local currents $j_x$ by 
\begin{equation}\label{eq: explicit expression current}
	j_x \; = \; \{H_{x-1},H_x\}\; = \; \eta (q_{x-1} - q_x) p_x
\end{equation}
with the convention $j_{-L} = 0$, 
leading to the continuity equation
\begin{equation}\label{eq: energy conservation}
    \frac{\dd H_x}{\dd t} \; = \; \{H_{x-1},H_x\} - \{H_{x},H_{x+1}\} \; = \; j_x - j_{x+1}.
\end{equation}

Taking $x=0$ as a reference point, we define the process $(J_0(t))_{t\ge 0}$ as 
\begin{equation}\label{eq: time integrated current}
	J_0 (t) \; = \; \int_0^t j_0 (s) \dd s 
\end{equation}
which equals the integrated heat flow from the region $x<0$ to the region $x\geq 0$.
At $\lambda = 0$, we expect this process to stay \emph{bounded} in the limit $L\to \infty$, $t\to \infty$ as a result of localization: 
Only a finite amount of energy can be transferred from $x<0$ to $x\geq 0$.
In contrast, this process becomes unbounded if energy spreads across the full chain.
E.g., if one assumes that the system is governed by normal diffusive behaviour and it is started in equilibrium, then $J_0(t)$ grows typically as $t^{1/4}$, see e.g.\@ \cite{derrida_gerschenfeld}.
Our second theorem shows that a long time is needed before energy starts getting dissipated {across} the system for small values of $\lambda$: 
\begin{Theorem}\label{th: current}
	Let $n\in\N^*$ and let $J_0$ be defined as in \eqref{eq: time integrated current}.
	There is a positive random variable $V_n$ so that 
	$$
		\limsup_{L\to\infty} \langle \left(J_0(t)\right)^2 \rangle \; \le \; V_n (1+(\lambda^n t)^2) 
		\qquad \text{a.s.}\quad \forall \lambda \ge 0, \quad \forall t \ge 0.
	$$
Moreover, for every $p>0$, there exists a constant $\Const_{n,p}$ such that $\E(V_n^p)\le \Const_{n,p}$. %{\red small change.}
\end{Theorem}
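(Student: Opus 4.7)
The plan is to mimic the strategy underlying Theorem~\ref{th: decorrelation}, but the starting point is different: at $\lambda=0$, $J_0(t)$ is not zero, but it is bounded uniformly in $t$ and $L$, because $j_0=\eta(q_{-1}-q_0)p_0$ is a pure time-boundary term. Writing $j_0$ in the eigenbasis of $\mathcal H_{\Lambda_L}$ via the complex mode variables
$a_k^{\pm}=(\sqrt{\nu_k}\,\lscal q,\psi_k\rscal \mp i\,\lscal p,\psi_k\rscal/\sqrt{\nu_k})/\sqrt{2}$,
one observes that $j_0$ is odd under $p\mapsto -p$ and therefore contains no monomial proportional to the actions $a_k^+ a_k^-=E_k/\nu_k$. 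Hence the cohomological equation $\{H_\har,B^{(0)}\}=-j_0$ admits a solution of the form
$$
B^{(0)}=i\sum_{k,k',\sigma,\sigma'}\frac{c^{\sigma\sigma'}_{k,k'}}{\sigma\nu_k+\sigma'\nu_{k'}}\,a_k^{\sigma}a_{k'}^{\sigma'},\qquad \sigma\nu_k+\sigma'\nu_{k'}\neq 0,
$$
so that at $\lambda=0$ one has exactly $J_0(t)=B^{(0)}(0)-B^{(0)}(t)$. The coefficients $c^{\sigma\sigma'}_{k,k'}$ inherit from the spatial support $\{-1,0\}$ of $j_0$ a factor of the form $\psi_k(x)\psi_{k'}(y)$ with $x,y\in\{-1,0\}$, so $B^{(0)}$ is a \emph{local} observable supported near the cut site.

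For $\lambda>0$, I would iteratively improve $B^{(0)}$ into $B^{(n)}=B^{(0)}+\lambda B^{(1)}+\dots+\lambda^{n-1}B^{(n-1)}$ satisfying
$$
\{H,B^{(n)}\}=-j_0+\lambda^n R^{(n)},
$$
via the same KAM-like procedure developed in Section~\ref{sec: perturbative expansion} for Theorem~\ref{th: decorrelation}: at each step one solves $\{H_\har,\delta B^{(m)}\}=-\{H_\an,B^{(m-1)}\}+(\text{resonant part})$, absorbing the resonant part into $\lambda^n R^{(n)}$. Integrating along the Hamiltonian flow and using invariance of the Gibbs state then gives
$$
J_0(t)=B^{(n)}(0)-B^{(n)}(t)+\lambda^n\int_0^t R^{(n)}(s)\,\dd s,
$$
so that Cauchy--Schwarz and stationarity yield
$$
\langle J_0(t)^2\rangle\le 4\langle (B^{(n)})^2\rangle+2\lambda^{2n}t^2\langle (R^{(n)})^2\rangle.
$$

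To conclude, it remains to produce random variables $V_n^{(1)},V_n^{(2)}$ with finite moments of all orders such that $\limsup_{L\to\infty}\langle (B^{(n)})^2\rangle\le V_n^{(1)}$ and $\limsup_{L\to\infty}\langle (R^{(n)})^2\rangle\le V_n^{(2)}$ almost surely; setting $V_n=4V_n^{(1)}+2V_n^{(2)}$ then closes the argument. The hard part, which is the main technical challenge of the paper, is the probabilistic control of the coefficients of $B^{(n)}$ and $R^{(n)}$: each is a sum over tuples of modes $K$ weighted by KAM-like denominators $\big(\sum_{k\in K}\sigma_k\nu_k\big)^{-1}$, exactly of the form \eqref{eq: z like expression}. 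The $L^s$ bounds on the $Z$-random variables developed in Sections~\ref{sec: bound denominators}--\ref{sec: properties of the Z random variables} and packaged in Propositions~\ref{pro: assumptions theorem 1}--\ref{pro: assumptions theorem 3} provide the required estimates. One essential difference with Theorem~\ref{th: decorrelation} is the absence of averaging over the $|\Lambda_L|$ modes: uniformity in $L$ must come entirely from the spatial locality of $B^{(n)}$ and $R^{(n)}$ near $x=0$, inherited from the support of $j_0$ via the exponential decay of the eigenfunctions of $\mathcal H_{\Lambda_L}$.
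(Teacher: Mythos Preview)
Your perturbative construction is exactly what the paper does: your $B^{(n)}$ and $R^{(n)}$ coincide with the observables $u_0$ and $g_0$ produced in Section~\ref{sec: perturbative expansion} (see~\eqref{eq: hyp 1 theo 2}), and the identity $J_0(t)=B^{(n)}(0)-B^{(n)}(t)+\lambda^n\!\int_0^t R^{(n)}$ together with stationarity gives the right shape of the bound. The gap is in the last paragraph, where you assert that the $Z$--collection machinery yields $V_n^{(1)}=\limsup_L\langle (B^{(n)})^2\rangle$ and $V_n^{(2)}=\limsup_L\langle (R^{(n)})^2\rangle$ with \emph{all} moments. It does not. What Proposition~\ref{thm: integrability} provides is $\Proba(Z_I(0)\ge M)\le \Const M^{-\mu}$, while $\langle u_0^2\rangle\le Z_{\Lambda_L}(0)^{1/q}$; tracing the exponents in Definition~\ref{def: Z collection} one finds $\kappa_2\sim 1/q$ and $\kappa_3\sim q$, so $q\mu=q\kappa_2/(\kappa_2\kappa_3+2)$ stays bounded as $q\to 0$. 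Thus $\langle u_0^2\rangle$ has only finitely many disorder moments, and for large $n$ possibly not even a first one. The origin $x=0$ may simply be a ``bad'' site where a small denominator makes $\langle u_0^2\rangle$ huge, and no amount of spatial locality of $B^{(n)}$ repairs this.

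The paper's proof adds one extra idea that you are missing: it does \emph{not} solve the commutator equation at $x=0$. Using the continuity equation it writes
\[
j_0 \;=\; \Big\{H,\sum_{x=0}^{\ell_0-1}H_x\Big\} + j_{\ell_0},
\]
where $\ell_0$ is the first site $x\ge 0$ lying outside the bad set $B'(M)=\{x:\langle u_x^2\rangle\ge M\text{ or }\langle g_x^2\rangle\ge M\}$, and only then applies your decomposition $j_{\ell_0}=-\{H,u_{\ell_0}\}+\lambda^n g_{\ell_0}$. This produces
\[
\langle J_0(t)^2\rangle \;\le\; \Const(1+\ell_0^2)\big(1+(\lambda^n t)^2\big),
\]
so that $V_n\sim \Const(1+\ell_0^2)$. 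The point of Proposition~\ref{pro: assumptions theorem 2} (which you cite but misread as a direct moment bound) is precisely the stretched--exponential tail $\Proba(\ell_0\ge\ell)\le\Const e^{-c\ell^{1/6}}$, obtained by combining the polynomial tail of each $Z_{\Lambda_L}(x)$ with approximate independence of well--separated sites (Proposition~\ref{thm: local approx}); this is what delivers finite moments of all orders.
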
 

We can reformulate the above result in a way which is more directly relevant for the standard description of thermal transport. 
Let us define a rescaled process $(\mathcal J(t))_{t\ge 0}$ by 
\begin{equation}\label{eq: total current}
	\mathcal J(t) \; = \; \frac{1}{\sqrt{t|\Lambda_L|}}\sum_{x\in|\Lambda_L|} \int_0^t j_x(s)\dd s. 
	% \frac{1}{\sqrt t}\int_0^t \frac{1}{\sqrt{L}}\sum_{x=1}^{L-1} j_x(s) \dd s .
\end{equation}
If the limit below is well defined, the thermal conductivity of the chain is defined as 
$$
    \kappa (\lambda) \; := \; \lim_{t\to \infty} \lim_{L\to \infty} \langle (\mathcal J(t))^2 \rangle \; = \; \int_{-\infty}^{+\infty}  \sum_{x\in \Z} \langle j_0(0)j_x(t)\rangle \, \dd t,
$$
where the last expression is obtained by taking the limits \emph{formally}.

To our knowledge, there is no Hamiltonian system for which one can rigorously control the thermal conductivity, 
except in some exceptional cases where $\kappa$ turns out to be $0$ or $+\infty$.
In our case, one can nevertheless gain some understanding in the limit $\lambda\to 0$: 
If one expects that $\kappa (\lambda) \sim \lambda^n$ in the limit $\lambda \to 0$ for some $n>0$,
one should then also expect to detect this behavior by following the dynamics on time scales of the order of $\lambda^{-n'}$ with $n'\ge n$. 
Our next theorem shows that we should not expect $\kappa$ to scale polynomially with $\lambda$, 
and hints to the fact that $\frac{\partial^n \kappa}{\partial \lambda^n} (0) = 0$ for all $n\in\N^*$: %{\red (small changes in the sentences above)}
\begin{Theorem}\label{th: Green Kubo}
	Let $\tau>0$, let $n'\ge n\in\N^*$ and let $\mathcal J$ be defined as in \eqref{eq: total current}.
	There exists a constant $\Const_{n'} = \Const_{n'}(\tau)$ so that 
	$$
		\limsup_{L\to\infty} \langle (\mathcal J(\lambda^{-n'}\tau))^2 \rangle \; \le \; \Const_{n'} \lambda^n 
		\qquad \text{a.s.} \quad \forall \lambda \in[0,1].%\\
		%&\lim_{\lambda\to 0}\limsup_{L\to\infty} \lambda^{-n}\E\langle (J(\lambda^{-m}\tau))^2 \rangle \; = \; 0.
	$$
\end{Theorem}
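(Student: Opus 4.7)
The plan is to exploit the identity $\sum_{x\in\Lambda_L} j_x = \dot X$ with $X = \sum_x x H_x$ (up to negligible boundary corrections), which follows from the local conservation law $\dot H_x = j_x - j_{x+1}$ together with summation by parts, so that $\sqrt{t|\Lambda_L|}\,\mathcal J(t) = X(t)-X(0)$. The naive bound for the Gibbs variance of $X(t)-X(0)$ is far too weak in its $t$- and $|\Lambda_L|$-dependence, so I would iteratively trade powers of $t$ for powers of $\lambda$. Concretely, choose $m=n'$ and construct observables $W_1,\ldots,W_m$ solving the cohomological chain
\[
\{H_\har, W_1\} = \sum_x j_x, \qquad \{H_\har, W_k\} = \{H_\an, W_{k-1}\} \quad (2\le k\le m),
\]
and use $\{H,\cdot\}=\{H_\har,\cdot\}+\lambda\{H_\an,\cdot\}$ together with $\int_0^t\{H,W\}(s)\,\dd s = W(t)-W(0)$ to obtain the telescoping identity
\[
\int_0^t \sum_x j_x(s)\,\dd s \;=\; \sum_{k=1}^m (-\lambda)^{k-1}\bigl(W_k(t)-W_k(0)\bigr) \;+\; (-\lambda)^m \int_0^t \{H_\an, W_m\}(s)\,\dd s.
\]
Each cohomological equation is solved in the Anderson basis by expanding the right-hand side as a sum of mode monomials $\prod_j a_{k_j}^{\sigma_j}$ and dividing by $\sum_j\sigma_j\nu_{k_j}$, which is exactly the small-denominator $Z$-type quantity appearing in \eqref{eq: z like expression}.

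Assuming the framework developed in Sections~\ref{sec: control of local observables}--\ref{sec: properties of the Z random variables} delivers uniform-in-$L$ extensive bounds $\langle W_k;W_k\rangle,\ \langle\{H_\an,W_m\};\{H_\an,W_m\}\rangle \le C(m)|\Lambda_L|$ almost surely in the disorder, each boundary difference $W_k(t)-W_k(0)$ has variance $O(|\Lambda_L|)$, while by Cauchy--Schwarz the remainder integral contributes at most $t^2\, C(m)|\Lambda_L|$. Dividing by $t|\Lambda_L|$ then gives
\[
\langle \mathcal J(t)^2\rangle \;\le\; C(m)\!\left(\frac{1}{t}+\lambda^{2m} t\right).
\]
Plugging in $t=\lambda^{-n'}\tau$ and $m=n'$ makes both terms $\le \Const_{n',\tau}\,\lambda^{n'}\le \Const_{n',\tau}\,\lambda^n$ (using $n'\ge n$ and $\lambda\le 1$), which is the theorem.

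The hardest step is the uniform-in-$L$ extensivity of the Gibbs variance of $W_m$ and $\{H_\an,W_m\}$. After $m$ iterations, $W_m$ is a nested expansion whose denominators are signed sums of up to $O(m)$ eigenfrequencies $\nu_k$ of $\mathcal H_{\Lambda_L}$, multiplied by products of the corresponding eigenfunctions $\psi_k$ evaluated at various sites: exactly the $Z$-type random variables of \eqref{eq: z like expression}. Proving uniform-in-$L$ fractional-moment bounds for such variables, and then summing the expansion over lattice sites using Anderson localization of the $\psi_k$ to obtain the $|\Lambda_L|$ scaling, is the core technical contribution of the paper; it crucially exploits the linear dependence on $\nu_k$ (rather than on $\nu_k^2$) in the denominator, a feature specific to the Klein--Gordon setting. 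A secondary subtlety is the solvability of each cohomological equation: the right-hand side must be orthogonal to the kernel of $\{H_\har,\cdot\}$. For the initial equation this is immediate because $\sum_x j_x$ expands only into mode monomials of the form $a_k^2,\bar a_k^2$ and $a_k^{\pm}\bar a_{k'}^{\pm}$ with $k\ne k'$, all of which are non-resonant under the almost-sure non-degeneracy of the spectrum; at subsequent steps one must either verify that $\{H_\an, W_{k-1}\}$ inherits this property or absorb the resonant (action-dependent) component into a Birkhoff-type canonical near-identity transformation that is standard in this setting.
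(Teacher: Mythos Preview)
Your telescoping identity and the final algebra $\langle\mathcal J(t)^2\rangle\le C(m)(t^{-1}+\lambda^{2m}t)$ are the same as the paper's, and your global observable $W_k$ is exactly $\sum_{x}u^{(k)}_x$ in the paper's notation, since both solve the same cohomological chain. The difference lies in how you control $\langle W_k;W_k\rangle=\sum_{x,y}\langle u^{(k)}_x u^{(k)}_y\rangle$.

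The step ``assuming the framework of Sections~\ref{sec: control of local observables}--\ref{sec: properties of the Z random variables} delivers $\langle W_k;W_k\rangle\le C(m)|\Lambda_L|$ a.s.'' is where the gap sits. The $Z$-collection machinery does \emph{not} deliver this directly: to make the exponent $\mu$ in Proposition~\ref{thm: integrability} large (needed for the LLN of Proposition~\ref{pro: lln}), one must take the power $q$ in \eqref{eq: for the Z variable in 3} small, so that what is actually controlled is $\frac{1}{|\Lambda_L|}\sum_x\bigl(\sum_y|\langle u_x u_y\rangle|\bigr)^q\le C$ a.s.\ with $q<1$. This fractional-moment bound does not upgrade to $\frac{1}{|\Lambda_L|}\sum_{x,y}|\langle u_x u_y\rangle|\le C$: a sparse set of ``bad'' sites where $\sum_y|\langle u_x u_y\rangle|$ is huge can make the first-moment sum diverge while keeping the $q$-th-moment sum bounded. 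This is precisely why the paper does not bound $\langle(\sum_x u_x)^2\rangle$ directly, but instead introduces the bad set $B''(M)$ and the weight $w(x)$ in Proposition~\ref{pro: assumptions theorem 3}: for each $x$ the current is routed to the nearest good site $y_0(x)$, where $\sum_y|\langle u_{y_0(x)}u_y\rangle|\le M$ by definition, and the cost of the detour is the controllable weight $w(x)$ whose ergodic average is bounded by the separate LLN of Proposition~\ref{pro: lln for weight}. So your approach is not wrong in spirit, but the assumed extensive variance bound hides the paper's main technical construction; filling it in would essentially reproduce the bad-set/weight argument.
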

Without loss of generality, we can restrict ourselves to the case $n'=n$ in the above Theorem~\ref{th: Green Kubo}.

\subsection{Out-of-Equilibrium Dynamics}

It would be of definite interest to know to what extent our results still hold if the initial condition of the chain is an out-of-equilibrium state, 
such as for example a pre-thermal state or a locally approximate Gibbs state where the temperature varies as a function of space. 

The main advantage of starting the dynamics in equilibrium is that it yields a control on the expectation of local observables at all later times. 
If we add an extra assumption on the evolution of these observables, it becomes then possible to formulate a meaningful version of  Theorem~\ref{th: current} 
without requiring that the initial state is time-invariant. 

For simplicity of notations, 
let us keep working with the parameter $\lambda$ defined in \eqref{eq: def of lambda}, instead of $\gamma$, 
as well as with the dynamics defined by \eqref{eq: H} and \eqref{eq: Hamilton equations},
even though the parameter $\beta$ has no meaning in our new context. 
Given the interval $\Lambda_L$, let $\langle \cdot\rangle_{\mathrm{ne}}$ denote some initial measure on $\R^{2|\Lambda_L|}$, with the subscript ${\mathrm{ne}}$ standing for ``non-equilibrium'', reminding us that $\langle \cdot\rangle_{\mathrm{ne}}$ replaces the family of Gibbs states used before.
We make the following assumption on the family of measures $\{\langle \cdot\rangle_{\mathrm{ne}}\}_{L}$: 
\begin{Assumption}\label{as: time evolved ensemble}
Given $m\ge 1$, there exists a constant $\Const_m$ such that 
for any interval $\Lambda_L$ with $L \in \N^*$,
for any $x\in \Lambda_L$ 
and for any $t\ge 0$, 
$$
	\langle |q_x(t)|^m \rangle_{\mathrm{ne}} \; \le\; \Const_m, 
	\qquad 
	\langle |p_x(t)|^m \rangle_{\mathrm{ne}} \; \le\; \Const_m.
$$
\end{Assumption}
We deem that this assumption is very reasonable if it holds at the initial time. 
Moreover let us stress that, if we would like to show that it holds in a specific example, 
the difficulty would not be to show it at $t=0$ given an explicit expression for $\langle \cdot \rangle_{\mathrm{ne}}$, 
but to show that the property is preserved over time. 
Let us now state our generalization of Theorem~\ref{th: current} to out-of-equilibrium initial states:
\begin{Theorem}\label{th: out of equilibrium}
	Assume that $\{\langle \cdot \rangle_{\mathrm{ne}}\}_L$ is a family of initial measures that satisfies Assumption~\ref{as: time evolved ensemble}. 
	Given $n\in\N^*$, there is a positive random variable $V_n$ so that
	$$
		\limsup_{L\to\infty} \langle \left(J_0(t)\right)^2 \rangle_{\mathrm{ne}}\; \le \; V_n (1+(\lambda^n t)^2) 
		\qquad \text{a.s.}\quad \forall \lambda \ge 0, \quad \forall t \ge 0.
	$$
Moreover, for every $p>0$, there exists a constant $\Const_{n,p}$, that depends on the familly of initial measures only through the constants introduced in Assumption~\ref{as: time evolved ensemble}, such that $\E(V_n^p)\le \Const_{n,p}$.
\end{Theorem}

In what follows, we will assume that the number $n$ featuring in Theorems~\ref{th: decorrelation} to \ref{th: out of equilibrium} is fixed (recall that it suffices to prove Theorem~\ref{th: Green Kubo} for $n'=n$), 
we will not try to keep track of the dependence of all our expressions on this parameter, 
and we will thus allow constants to depend on $n$ as well in the remainder of this paper.

\section{Proof of Theorems~\ref{th: decorrelation} to \ref{th: Green Kubo}}\label{sec: proof of the theorems}

In this section, 
we prove Theorems~\ref{th: decorrelation} to \ref{th: Green Kubo} assuming that Propositions~\ref{pro: assumptions theorem 1} to \ref{pro: assumptions theorem 3} below hold.   

\subsection{Proof of Theorem~\ref{th: decorrelation}}

\begin{Proposition}\label{pro: assumptions theorem 1}
Given $1 \le k \le |\Lambda_L|$, let us define the observable $f_k$ on $\Lambda_L$ by 
$$	
	f_k \; = \; \{H_\an , E_k\} \; = \; \frac1\lambda \{H,E_k\}.
$$
There exist observables $u_k$ and $g_{k}$ on $\Lambda_L$
(that may depend on $\lambda$) satisfying two properties. 
First, 
\begin{equation}\label{eq: hyp 1 theo 1}%\tag{H1}
    f_k \; = \; -\{H,u_k\} + \lambda^n g_{k} \qquad \text{a.s.} \quad \forall \lambda\ge 0.
\end{equation}
To state the second property, let us define the ``bad'' set of modes 
\begin{equation}\label{eq: 1st bad set}
	B (M) \; = \; \{k : 1\le k \le |\Lambda_L|, \langle u_k^2\rangle \ge M \text{ or }\langle g_{k}^2 \rangle \ge M\}
\end{equation}
for $M>0$ as well as 
\begin{equation}\label{eq: nu M in proposition 1}
	\nu (M) \; = \; \limsup_{L\to\infty} \frac{|B(M)|}{|\Lambda_L|}.
\end{equation}
The second property is that there exist constants $\Const$ and $0<q<1$ such that 
\begin{equation}\label{eq: result of proposition 1}
	\nu (M) \;\le\; \frac{\Const}{M^q} \qquad \text{a.s.}\quad \forall M>0, \quad \forall \lambda \ge 0.
\end{equation}
\end{Proposition}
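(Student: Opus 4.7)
The plan is to build $u_k$ by a standard Birkhoff-type normal form procedure, writing
$$u_k \; = \; \sum_{j=1}^{n} \lambda^{j-1} u_k^{(j)}$$
and determining $u_k^{(j)}$ inductively by solving cohomological equations. Observe that the target identity $f_k = -\{H, u_k\} + \lambda^n g_{k}$ is equivalent to $\{H, E_k + \lambda u_k\} = \lambda^{n+1} g_{k}$, so we are building a $\lambda$-deformation of the harmonic constant of motion $E_k$ that is conserved up to errors of order $\lambda^{n+1}$. Expanding $H = H_{\har} + \lambda H_{\an}$ and collecting powers of $\lambda$ gives the equations $\{H_{\har}, u_k^{(1)}\} = -f_k$ and $\{H_{\har}, u_k^{(j)}\} = -\{H_{\an}, u_k^{(j-1)}\}$ for $2\le j\le n$, with $g_{k} := \{H_{\an}, u_k^{(n)}\}$.

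The right-hand sides are polynomials in $(q,p)$ of bounded degree, which we decompose in the mode basis $\{\psi_m\}$ of $\mathcal{H}_{\Lambda_L}$. The operator $\{H_\har,\cdot\}$ is diagonalized on polynomial space by the ``creation/annihilation'' combinations $\lscal q,\psi_m\rscal \pm \mathrm{i}\nu_m^{-1}\lscal p,\psi_m\rscal$, with eigenvalues $\mp\mathrm{i}\nu_m$. Since $\{H_\har, E_k\}=0$, each right-hand side lies in the range of $\{H_\har,\cdot\}$, and the solution is obtained by dividing each such basis component by the corresponding combination $\sum_m \sigma_m\nu_m$ (with $\sigma_m\in\Z$ of bounded total weight depending on $n$). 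Combined with the factors $\psi_m(x)$ inherited from the hopping $\eta\Delta$ and the quartic $\sum_x q_x^4$, this produces exactly expressions of the small-denominator form \eqref{eq: z like expression}.

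The second step is to bound $\langle u_k^2\rangle$ and $\langle g_{k}^2\rangle$ on average over $k$. Since $u_k$ and $g_{k}$ are polynomials of bounded degree (the degree grows by two at each inductive step, with $n$ fixed), their Gibbs expectations reduce, via Wick-like contractions using $\langle q_xq_y\rangle$, $\langle p_xp_y\rangle$ and decay-of-correlation estimates for the anharmonic Gibbs state (Appendix~\ref{sec: gibbs}), to sums of products of $\psi_m(x)$ divided by expressions $\sum_m\sigma_m\nu_m$. These are precisely the $Z$-type random variables developed in Sections~\ref{sec: control of local observables}--\ref{sec: properties of the Z random variables}, and from the uniform moment estimate $\E(|Z(x)|^s) \le \Const$ for some $s>0$ one gets $\E(\langle u_k^2\rangle^s) \le \Const$ and $\E(\langle g_{k}^2\rangle^s) \le \Const$, uniformly in $k$ and $L$.

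Finally, Markov's inequality yields $\Proba(\langle u_k^2\rangle \ge M) \le \Const M^{-s}$ and similarly for $g_{k}$, so $\E|B(M)| \le \Const |\Lambda_L| M^{-s}$. Dividing by $|\Lambda_L|$ and invoking stationarity of the disorder together with a standard Borel--Cantelli / ergodic argument to pass from expectation to an almost sure bound on the $\limsup$ in \eqref{eq: nu M in proposition 1}, we conclude $\nu(M) \le \Const M^{-s}$, giving the result with $q = s$. The main obstacle is not the algebraic construction of the $u_k^{(j)}$, which is routine, but the probabilistic control of the resulting small-denominator expressions at higher inductive orders: the number of modes appearing in the denominators grows with $j$, and the divergences from $(\sum_m\sigma_m\nu_m)^{-1}$ must be tamed by the localization decay of the $\psi_m$ together with decorrelation between nearby eigenvalues. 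This is exactly the work carried out in Sections~\ref{sec: bound denominators}--\ref{sec: properties of the Z random variables}.
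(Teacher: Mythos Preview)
Your construction of $u_k$ and $g_k$ via the Birkhoff normal form (first two paragraphs) is correct and matches the paper's Section~\ref{sec: perturbative expansion}. You also correctly identify that the hard work lies in the probabilistic control of the small-denominator expressions, handled in Sections~\ref{sec: bound denominators}--\ref{sec: properties of the Z random variables}.

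However, there is a genuine gap in your final step. The $Z$-collections of Sections~\ref{sec: control of local observables}--\ref{sec: properties of the Z random variables} are indexed by \emph{sites} $x\in I$, not by \emph{modes} $k$. All the machinery there --- the moment bound in Proposition~\ref{thm: integrability}, the local approximation in Proposition~\ref{thm: local approx}, the decay of correlations in Lemma~\ref{lem: exp decay}, and crucially the strong law of large numbers in Proposition~\ref{pro: lln} --- relies on the translation-invariant structure of the disorder, which is available for site-indexed quantities but not for mode-indexed ones. The eigenmode labeling $1\le k\le |\Lambda_L|$ carries no stationary structure: $\langle u_k^2\rangle$ is not a shift of $\langle u_{k+1}^2\rangle$, and there is no ergodic theorem for averages over $k$. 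Consequently, your appeal to ``stationarity of the disorder together with a standard Borel--Cantelli / ergodic argument'' to pass from $\E|B(M)|\le \Const|\Lambda_L|M^{-s}$ to an almost sure bound on $\limsup_L |B(M)|/|\Lambda_L|$ does not go through as stated.

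The paper bridges this gap by the identity $\sum_x |\psi_{k_0}(x)|^2 = 1$: it defines
\[
Z_{\Lambda_L}(x) \;=\; \sum_{k_0=1}^{|\Lambda_L|} |\psi_{k_0}(x)|^2\,(G_{k_0}+U_{k_0}),
\]
where $G_{k_0},U_{k_0}$ dominate $\langle g_{k_0}^2\rangle^q,\langle u_{k_0}^2\rangle^q$. Summing over $x$ converts the mode average into a site average:
\[
\frac{1}{|\Lambda_L|}\sum_{k_0}(\langle g_{k_0}^2\rangle^q + \langle u_{k_0}^2\rangle^q)
\;\le\;
\frac{1}{|\Lambda_L|}\sum_{x\in\Lambda_L} Z_{\Lambda_L}(x).
\]
The left-hand side is trivially bounded below by $M^q\,|B(M)|/|\Lambda_L|$, while the right-hand side is now a site average of a genuine $Z$-collection, so Proposition~\ref{pro: lln} yields an almost sure constant upper bound. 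This weighting by $|\psi_{k_0}(x)|^2$ is the missing ingredient in your sketch; once you insert it, your outline becomes the paper's proof.
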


\begin{proof}[Proof of Theorem~\ref{th: decorrelation}]
We can obtain two different bounds on $C_k(t)$. 
First, by Corollary~\ref{cor: decay of correlations} in Appendix~\ref{sec: gibbs} on the decay of correlations of the Gibbs state, 
there exits a constant $\Const$ such that
\begin{equation}\label{eq: bound 1 correlator}
	C_k (t) \; \le \; 2 \langle E_k ; E_k \rangle \; \le \; \Const.
	%\qquad 
	%\forall t \ge 0, \quad \forall 1 \le k \le |\Lambda_L|, \quad \forall L \ge 1.
\end{equation}
Second, by \eqref{eq: hyp 1 theo 1}, we may write 
\begin{align*}
	E_k(t) - E_k(0) 
	\; &= \; 
	\int_0^t \{H,E_k\}(s) \dd s 
	\; = \; 
	\lambda \int_0^t f_k(s) \dd s 
	\; = \; 
	\lambda \int_0^t (- \{H,u_{k}\} + \lambda^n g_{k})(s) \dd s\\
	\; &= \; 
	-\lambda (u_{k}(t) - u_{k}(0)) + \lambda^{n+1} \int_0^t g_{k}(s) \dd s .
\end{align*}
Hence
\begin{equation}\label{eq: bound 2 correlator}
	C_k(t) \; = \; 
	\langle (E_k(t) - E_k(0))^2 \rangle
	\; \le \;
	6\lambda^2 \langle u_k^2 \rangle +  3\lambda^{2(n+1)}t^2 \langle g_{k}^2 \rangle.
	%\; \le \; \Const \left( \lambda^{2-b} + (\lambda^n t)^2 \right) \quad a.s.
\end{equation}

With the use of the set $B$ introduced in \eqref{eq: 1st bad set}, we decompose
$$
	\frac1{|\Lambda_L|}\sum_{k=1}^{|\Lambda_L|}C_k(t) 
	\; = \;
	\frac1{|\Lambda_L|}\sum_{k\in B(1/\lambda)}C_k(t) + \frac1{|\Lambda_L|}\sum_{k\notin B(1/\lambda)}C_k(t) .
$$
By Proposition~\ref{pro: assumptions theorem 1} and the inequality~\eqref{eq: bound 1 correlator}, we bound the first sum as
$$
	\limsup_{L\to\infty} \frac1{|\Lambda_L|}\sum_{k\in B(1/\lambda)}C_k(t) \; \le \; \Const \nu(1/\lambda) \; \le \; \Const \lambda^q
	\qquad \text{a.s.} \quad \forall \lambda >0. 
$$
We use the bound \eqref{eq: bound 2 correlator} for the second sum: 
$$
	\limsup_{L\to\infty}\frac1{|\Lambda_L|}\sum_{k\notin B(1/\lambda)}C_k(t)  \; \le \; 6 \lambda + 3 \lambda (\lambda^n t)^2
$$
Summing these two bounds yields the claim. 
\end{proof}

\subsection{Proof of Theorem~\ref{th: current}}\label{subsec: proof of Theorem 2}

\begin{Proposition}\label{pro: assumptions theorem 2}
Recall the definition of the current $j_x$ in \eqref{eq: explicit expression current}.
Given $x\in\Lambda_L$, there exist observables $u_x$ and $g_{x}$ on $\Lambda_L$
(which may depend on $\lambda$) that satisfy two properties. 
First, 
\begin{equation}\label{eq: hyp 1 theo 2}
    j_x \; = \; -\{H,u_x\} + \lambda^n g_{x}
    \qquad \text{a.s.} \quad \forall \lambda\ge 0.
\end{equation}
To state the second property, let $M>0$ and let now $B'(M)$ be the set of ``bad'' points defined as
\begin{equation}\label{eq: 2d bad set}
	B'(M) \; = \; \{ x \in\Lambda_L : \langle u_x^2\rangle \ge M \text{ or }\langle g_x^2 \rangle \ge M \} ,
\end{equation}
and define the random variable 
\begin{equation}\label{eq: def ell 0 proposition 2}
    \ell_0 \; = \;\min\{ x\in\Lambda_L : x\ge 0 \text{ and }x\notin B'(M) \} \wedge (L+1)
\end{equation}
with the convention $\min\varnothing = + \infty$.
The second property is that there exists constants $M_0$, $\Const$ and $c$ such that for all $M\ge M_0$, 
\begin{equation}\label{eq: tail of x not distribution}
    \Proba(\ell_0\ge \ell) \; \le \; \Const \ed^{-c \ell^{1/6}} \qquad \forall \ell \ge 0.
\end{equation}
Moreover, the almost sure limit $\limsup_{L\to\infty}\ell_0$ still satisfies this bound.
\end{Proposition}

\begin{proof}[Proof of Theorem~\ref{th: current}]
Let $M$ be such that \eqref{eq: tail of x not distribution} holds.
From the conservation fo energy in \eqref{eq: energy conservation}, and noting that the left-hand side is equal to $\{H,H_x\}$, we find 
$$
	j_0 \; = \; \left\{H,\sum_{x=0}^{\ell_0 - 1}H_x\right\}  + j_{\ell_0},
$$
with the convention $\sum_{k=0}^{-1}(\dots)=0$ and with the definition $j_{L+1}=0$.
Hence, using \eqref{eq: hyp 1 theo 2} if $\ell_0 < L+1$ and $j_{L+1}=0$ otherwise, yields 
\begin{align}
	J_0(t) 
	\; &= \; 
	\int_0^t j_0(s) \dd s \nonumber\\
	\; &= \;
	\int_0^t \left\{H,\sum_{x=0}^{\ell_0 - 1}H_x\right\}(s) \dd s - 
	\chi_{\ell_0 < L+1}\left( \int_0^t \{H,u_{\ell_0}\}(s) \dd s - \lambda^n \int_0^t g_{\ell_0}(s) \dd s \right)\nonumber\\
	&=\;
	\sum_{x=0}^{\ell_0 - 1}H_x(t) - \sum_{x=0}^{\ell_0 - 1}H_x(0) - 
	\chi_{\ell_0 < L+1} \left( u_{\ell_0}(t) - u_{\ell_0}(0) - \lambda^n \int_0^t g_{\ell_0}(s) \dd s \right).\label{eq: some expansion for J0t}
\end{align}
Therefore 
\begin{equation}\label{eq: j 0 t to be modified for last theorem}
	\langle (J_0(t))^2 \rangle 
	\; \le \; (2 \ell_0 + 3) \left( 2 \sum_{x=0}^{\ell_0 - 1}\langle H_x^2\rangle 
	+ \chi_{\ell_0 < L+1}\left(2 \langle u_{\ell_0}^2 \rangle + (\lambda^nt)^2\langle g_{\ell_0}^2\rangle \right) \right).
\end{equation}
By Corollary~\ref{cor: decay of correlations} in Appendix~\ref{sec: gibbs}, there exists a constant $\Const$ such that $\langle H_x^2 \rangle \le \Const$ and hence, 
by Proposition~\ref{pro: assumptions theorem 2}, we find a constant $\Const'$ such that 
$$
	\langle (J_0(t))^2 \rangle 
	\; \le \; (2 \ell_0 + 3) \left( 2\ell_0 \Const + 2 M + (\lambda^nt)^2M \right)
	\; \le \; \Const' (1+\ell_0^2)(1 + (\lambda^nt)^2). 
$$
The claim then follows from \eqref{eq: tail of x not distribution} in the limit $L\to\infty$.
\end{proof}

\subsection{Proof of Theorem~\ref{th: Green Kubo}}

\begin{Proposition}\label{pro: assumptions theorem 3}
For all $x\in\Lambda_L$, there exist observables $u_x$ and $g_{x}$ on $\Lambda_L$
(that may depend on $\lambda$) that satisfy two properties. 
The first one is the decomposition~\eqref{eq: hyp 1 theo 2} again. 
To state the second one, let $M>0$ and let now $B''(M)$ be the set of ``bad'' points defined as 
$$
	B''(M) \; = \; \left\{x\in\Lambda_L :  
	\sum_{y\in \Lambda_L}|\langle u_xu_y\rangle| \ge M 
	\quad\text{or}\quad
	\sum_{y\in \Lambda_L}|\langle g_xg_y\rangle| \ge M
	\right\}.
$$
as well as the weight $w:\Lambda_L \to \N$ such that
\begin{equation}\label{eq: weight green kubo}
    w(x)=\min\{|x-y| : y<x, y\notin B''(M)\} %\qquad \forall x \in \Lambda_L
\end{equation}
(notice that the condition $y\notin B''(M)$ includes the possibility $y=-L-1$).
The second property is that there exist constants $M_0$ and $\Const$ such that, 
for every $M\ge M_0$,  
\begin{equation}\label{eq: ergodic average weight}
	\limsup_{L\to\infty}\frac1{|\Lambda_L|}\sum_{x\in\Lambda_L} (w(x))^2 \; \le \; \Const.
\end{equation}
\end{Proposition}

\begin{proof}[Proof of Theorem~\ref{th: Green Kubo}]
We assume that $M$ is taken large enough so that \eqref{eq: ergodic average weight} holds.
For $x\in\Lambda_L$, let 
$$
	y_0 (x) \; = \; \min\{y\in\Lambda_L:y\ge x \text{ and }y\notin B''(M) \} \wedge (L+1)
$$
and, using \eqref{eq: hyp 1 theo 2}, let us decompose 
\begin{align*}
	j_x 
	\; &= \; \left\{H,\sum_{y=x}^{y_0(x)-1}H_y\right\} + j_{y_0(x)}\\
	\; &= \; \left\{H,\sum_{y=x}^{y_0(x)-1}H_y\right\}  - \chi_{y_0(x)<L+1}\left( \left\{H,u_{y_0(x)}\right\} - \lambda^n g_{x+\ell_0(x)} \right) .
\end{align*}
Integrating this from $0$ to $t=\lambda^{-n}\tau$ yields
\begin{align*}
	\int_0^t j_x (s) \dd s 
	\; = &\; 
	\sum_{y=x}^{y_0(x)-1}\tilde H_y (t) - \sum_{y=x}^{y_0(x)-1}\tilde H_y (0)\\
	&\; -\chi_{y_0(x)<L+1} \left( u_{y_0(x)}(t) - u_{y_0(x)}(0) - \lambda^n \int_0^t g_{y_0(x)} (s) \dd s \right)
\end{align*}
with $\tilde H_y = H_y - \langle H_y\rangle$.
Hence
\begin{multline}\label{eq: GK sum of 3}
	\left\langle\left(\sum_{x\in\Lambda_L} \int_0^t j_x(s) \dd s \right)^2\right\rangle
	\; \le \; 
	10 \left\langle\left(\sum_{x\in\Lambda_L}\sum_{y=x}^{y_0(x)-1}\tilde H_y\right)^2\right\rangle\\
	\; + 10  \left\langle\left(\sum_{x\in\Lambda_L} u_{y_0(x)} \chi_{y_0(x)<L+1} \right)^2\right\rangle
	+ 5 (\lambda^{n}t)^2 \left\langle\left(\sum_{x\in\Lambda_L} g_{y_0(x)} \chi_{y_0(x)<L+1} \right)^2\right\rangle .
\end{multline}
To conclude, it is now enough to prove that there exists a constant $\Const$ such that 
$$
	\limsup_{L\to\infty} \frac{A_L}{|\Lambda_L|} \; \le \; \Const
$$
where $A_L$ denotes any of the three expectations with respect to the Gibbs state featuring in \eqref{eq: GK sum of 3}.
Indeed, this will imply that 
$$
	\limsup_{L\to\infty} \langle (J(t))^2 \rangle 
	\; = \;
	\limsup_{L\to\infty}\frac{1}{|\Lambda_L|t}\left\langle\left(\sum_x \int_0^t j_x(s) \dd s \right)^2\right\rangle
	\; \le \;
	\Const \left( \frac1t + \lambda^{2n}t\right) \; = \; \Const \left(\frac{\lambda^n}{\tau} + \lambda^n \tau \right). 
$$

Let us first deal with the first term in the right hand side of \eqref{eq: GK sum of 3}.
We observe that
$$
	\sum_{x\in\Lambda_L}\sum_{y=x}^{y_0(x)-1}\tilde H_y
	\; = \; 
	\sum_{x\in B''(M)} w(x) \tilde H_x.
$$ 
Hence 
$$
	\left\langle\left(\sum_{x\in\Lambda_L}\sum_{y=x}^{y_0(x)-1}\tilde H_y\right)^2\right\rangle
	\; \le \; 
	\sum_{x,y\in\Lambda_L} w(x)w(y) |\langle \tilde H_x \tilde H_y\rangle|
	\; \le \; 
	\sum_{x\in\Lambda_L} w^2(x) \sum_{y\in\Lambda_L} |\langle \tilde H_x \tilde H_y\rangle| .
$$
By Proposition \ref{pro: decay of correlations} and Corollary~\ref{cor: decay of correlations} in Appendix~\ref{sec: gibbs}, 
the second factor in the right hand side of this expression is bounded by a constant. 
Hence, using \eqref{eq: ergodic average weight}, we find that 
$$
	\limsup_{L\to\infty}\frac1{|\Lambda_L|}\left\langle\left(\sum_{x\in\Lambda_L}\sum_{y=x}^{y_0(x)-1}\tilde H_y\right)^2\right\rangle \; \le \; \Const. 
$$
Let us next deal with the second term in the right hand side of \eqref{eq: GK sum of 3}. 
This time, it holds that 
$$
	\sum_{x\in\Lambda_L} u_{y_0(x)} \chi_{y_0(x)<L+1} \; = \; \sum_{x\in\Lambda_L:x\notin B''(M)} w(x)u_x.
$$
Hence, by definition of $B''(M)$,
$$
	 \left\langle\left(\sum_{x\in\Lambda_L} u_{y_0(x)} \chi_{y_0(x)<L+1} \right)^2\right\rangle
	 \; \le \; 
	 \sum_{x\in\Lambda_L:x\notin B''(M)}w^2(x) \sum_{y\in \Lambda_L} |\langle u_xu_u\rangle|
	 \; \le \;
	 M  \sum_{x\in\Lambda_L:x\notin B''(M)}w^2(x) .
$$
Using again \eqref{eq: ergodic average weight}, we conclude that 
$$
	\frac{1}{|\Lambda_L|}\left\langle\left(\sum_{x\in\Lambda_L} u_{y_0(x)} \chi_{y_0(x)<L+1} \right)^2\right\rangle \; \le \; \Const. 
$$
One deals with the third term in the right hand side of \eqref{eq: GK sum of 3} as the second term, and this concludes the proof. 
\end{proof}

\section{Approximate Solution of the Commutator Equation}\label{sec: perturbative expansion}

In this section, we will write down a perturbative expansion that yields the decomposition
\begin{equation}\label{eq: commutator equation}
	f \; = \; - \{H,u\} + \lambda^n g \qquad \text{a.s.} 
 \quad \forall \lambda > 0
\end{equation}
for some specific observables $f$. 
This decomposition appears in \eqref{eq: hyp 1 theo 1} for Proposition~\ref{pro: assumptions theorem 1}, 
with $f$ given by 
\begin{equation}\label{eq: f for prop 1}
	f_{k_0} \; := \; \{H_\an,E_{k_0}\} \; = \; \frac{1}{\lambda}\{H,E_{k_0}\}
\end{equation}
for some $1 \le k_0 \le |\Lambda_L|$, 
and in \eqref{eq: hyp 1 theo 2} for Propositions~\ref{pro: assumptions theorem 2} and \ref{pro: assumptions theorem 3}, 
with $f$ given by 
\begin{equation}\label{eq: f for prop 2 and 3}
	j_{x_0} \; = \; \eta (q_{x_0-1} - q_{x_0}) p_{x_0}
\end{equation}
for some $x_0\in \Lambda_L$. 

Formally, this expansion can be set up like this: 
Let $f^{(1)}=f$ and let us assume that we can define recursively observables $u^{(i)}$ and $f^{(i+1)}$ for $i \ge 1$ such that 
\begin{equation}\label{eq: perturbative scheme}
	-\{H_\har,u^{(i)}\}=f^{(i)}, \qquad f^{(i+1)} \; = \; \{H_\an,u^{(i)}\}.
\end{equation}
Then, defining $u$ and $g$ as 
\begin{equation}\label{eq: first expression for u and g}
	u \; = \; u^{(1)} + \dots + \lambda^{n-1}u^{(n)}, \qquad g \; = \; f^{(n+1)},
\end{equation}
we find that \eqref{eq: commutator equation} is indeed satisfied.
We will now make sure that this perturbative scheme is well defined and we will write down explicit expressions for $u$ and $g$.

\subsection{Solving the Commutator Equation at $\lambda = 0$}\label{subsec: commutator equation}
Here we provide an almost sure solution to the commutator equation 
\begin{equation}\label{eq: commutator equation lambda = 0}
    h \; = \; - \{H_\har,v\}
\end{equation}
where $h$ is given and satisfies some properties, and where $v$ is the unknown. 
To do so, we begin by introducing a change of variables that simplify our expressions as much as possible.

For $1 \le k \le |\Lambda_L|$, let 
\begin{equation}\label{eq: pqa transform}
	a^\pm_k \; = \; \frac{1}{\sqrt 2} \left( \nu_k^{1/2} \lscal q,\psi_k\rscal \mp \frac{\imag}{\nu_k^{1/2}} \lscal p,\psi_k \rscal \right)
\end{equation}
where we remind that $\lscal\cdot , \cdot\rscal$ is our notation for the scalar product. 
The inverse transform
\begin{equation}\label{eq: pqa reverse transform}
	\lscal q,\psi_k\rscal \; = \; \frac{a_k^+ + a_k^-}{\sqrt 2 \nu_k^{1/2}}, 
	\qquad
	\lscal p,\psi_k\rscal \; = \; \frac{\imag \nu_k^{1/2} (a_k^+ - a_k^-)}{\sqrt 2}.
\end{equation}
leads to an expression for $H_\har$ and $H_\an$ in these new coordinates: 
\begin{align}
	H_\har \; &= \; \sum_{k=1}^{|\Lambda_L|} \nu_k\, a^+_k a^-_k,\\
	H_\an \; &= \; \sum_{1\le k_1,\dots,k_4\le |\Lambda_L|} 
	%\frac{1}{16 (\nu_{k_1} \nu_{k_2} \nu_{k_3} \nu_{k_4})^{1/2}} \sum_{x\in \Lambda_L} \psi_{k_1}(x)\psi_{k_2}(x)\psi_{k_3}(x)\psi_{k_4}(x)
	\sum_{\sigma_1,\dots,\sigma_4\in \{\pm 1 \}} \hat H_\an(k_1,k_2,k_3,k_4) a_{k_1}^{\sigma_1} a_{k_2}^{\sigma_2}  a_{k_3}^{\sigma_3}  a_{k_4}^{\sigma_4} 
\end{align}
with 
\begin{equation}\label{eq: H an k}
	\hat H_\an(k_1,\dots ,k_4) \; = \; \frac{\sum_{x\in \Lambda_L} \psi_{k_1}(x)\psi_{k_2}(x)\psi_{k_3}(x)\psi_{k_4}(x)}{16 (\nu_{k_1} \nu_{k_2} \nu_{k_3} \nu_{k_4})^{1/2}}.
\end{equation}

Let us now proceed to solving the equation \eqref{eq: commutator equation}. 
For this, let us first specify the form of the function $h$. 
Let $m\ge 1$ and let $h$ be a homogeneous polynomial in $a^{\pm}$ of degree $m$:
$$
    h \; = \; \sum_{k_1,\dots,k_m}\sum_{\sigma_1,\dots,\sigma_m} \hat h (k_1,\dots,k_m,\sigma_1,\dots,\sigma_m) a_{k_1}^{\sigma_1} \dots a_{k_m}^{\sigma_m}.
$$
Notice that the coefficients $\hat h(k_1,\dots,k_m,\sigma_1,\dots,\sigma_m)$ are not uniquely defined since any permutation of $(k_1,\sigma_1),\dots,(k_m,\sigma_m)$ leaves the monomial $a_{k_1}^{\sigma_1}\dots a_{k_m}^{\sigma_m}$ invariant.
By \eqref{eq: pqa transform}, the function $h$ can also be considered as a homogeneous polynomial in the variables $q$ and $p$.
We assume here that $h$ is $p$-\emph{antisymmetric}, i.e.\@ $h(q,-p)=-h(q,p)$.

By \eqref{eq: pqa transform}, 
the $p$-antisymmetry of $h$ implies the following key property: 
It is possible to take the coefficients $\hat{h}$ so that  $\hat h(k,\sigma)=0$ whenever $(k,\sigma)\in\mathcal S$, where $\mathcal S$ is the set of indices $(k,\sigma)$ so that the monomial $a_{k_1}^{\sigma_1}\dots a_{k_m}^{\sigma_m}$ is $\sigma$-symmetric. {That is, $(k,\sigma) \in \mathcal S$ whenever $\{1,\ldots, m\}$ can be partitioned in pairs $(i,j)$ so that $k_i=k_j$ and $ \sigma_i=-\sigma_j$.}
In the rest of this paragraph, we assume that the coefficients $\hat{h}$ have this property.

Moving forward, we observe that the rules
\begin{equation}\label{eq: poisson bracket rule}
    \{a^\sigma_k,a^{\sigma'}_{k'}\} 
    \; = \; 
    \imag \sigma \delta(\sigma+\sigma') \delta(k-k'), 
    \qquad 
    \{\varphi_1,\varphi_2\varphi_3\} 
    \; = \; 
    \{\varphi_1,\varphi_2\} \varphi_3 + \varphi_2 \{\varphi_1,\varphi_3\}
\end{equation}
hold for any $1 \le k,k'\le |\Lambda_L|$ and any smooth observables $\varphi_1,\varphi_2,\varphi_3$.
Here and in the remainder of the paper, to avoid the appearance of excessive subscripts, we use $\delta(\cdot)$ to denote the Kronecker delta function, i.e.\@ $\delta(u) = 1$ if $u=0$ and $\delta(u)=0$ otherwise.
Therefore, given some $k$ and $\sigma$, we compute 
$$
	\{H_\har,a^\sigma_k\} \; = \; -\imag \sigma \nu_k a^\sigma_k
$$
and more generally 
$$
	\{H_\har,a^{\sigma_1}_{k_1} \dots a^{\sigma_m}_{k_m}\} \; = \; -\imag (\sigma_{1} \nu_{k_1} + \dots +\sigma_{m} \nu_{k_m}) a^{\sigma_1}_{k_1} \dots a^{\sigma_m}_{k_m}.
$$
Therefore, we may set 
\begin{equation}\label{eq: expression for v}
	v \; = \; -\imag \sum_{k_1,\dots,k_m}\sum_{\sigma_1,\dots,\sigma_m} 
	\frac{\hat h(k_1,\dots,k_m,\sigma_1,\dots,\sigma_m)}{\sigma_1\nu_{k_1} + \dots +\sigma_m \nu_{k_m}} a_{k_1}^{\sigma_1} \dots a_{k_m}^{\sigma_m}. 
\end{equation}
For any monomial in the r.h.s., either the denominator $\sigma_1\nu_{k_1} + \dots +\sigma_m \nu_{k_m}$ is non-zero a.s., or it vanishes a.s.
The latter occurs if and only if $(k,\sigma)\in\mathcal S$. 
Since we assume that $\hat h(k,\sigma)=0$ for $(k,\sigma)\in\mathcal S$,
the expression \eqref{eq: expression for v} defines almost surely a function $v$ that solves \eqref{eq: commutator equation lambda = 0}, provided we use the convention $0/0=0$.

\subsection{Well-Defined Nature of the Perturbative Expansion}\label{sec: well defined perturbation}
We now show that the perturbative scheme defined by \eqref{eq: perturbative scheme} is well defined, leading to well-defined functions $u$ and $g$ in \eqref{eq: first expression for u and g}. 

Let us first show that $f$, as given by \eqref{eq: f for prop 1} or \eqref{eq: f for prop 2 and 3}, is a $p$-antisymmetric homogeneous polynomial, of degree $d_1$ that we will determine. 
If $f = f_{k_0} = \{H_\an,E_{k_0}\}$, since $\lbrace H_\an,\cdot\rbrace  =  - \lscal\nabla_q H_\an,\nabla_p \cdot\rscal$, 
the claim follows from the fact that $E_{k_0}$ and $H_\an$ are symmetric in $p$, and $d_1=4$. 
If $f=j_{x_0}$, this follows directly from the expression \eqref{eq: explicit expression current}, and $d_1=2$.

Assuming now recursively that $f^{(i)}$ is a $p$-antisymmetric homogeneous polynomial of degree $d_i$ for $i\ge 1$, 
we deduce from the computations in Section~\ref{subsec: commutator equation} that $u^{(i)}$ is a well defined homogeneous polynomial of the same degree.
Moreover, it is $p$-symmetric since $\{H_\har,\cdots\}$ exchanges symmetric and antisymmetric functions.
Again, since $\lbrace H_\an,\cdot\rbrace  =  - [\nabla_q H_\an,\nabla_p \cdot]$, 
we conclude that $f^{(i+1)}$ is a $p$-antisymmetric homogeneous polynomial of degree $d_i+2$, and in particular that $\langle f^{(i+1)}\rangle=0$.
We conclude thus also that the degree of $u^{(i)}$ and $f^{(i)}$ is given by
\begin{equation}\label{eq: di}
	d_i \; = \; d_1 + 2(i-1).
\end{equation}

\subsection{Explicit Expressions}
We can now provide explicit representations of the functions $u$ and $g$ in \eqref{eq: first expression for u and g}. 
From now on, to lighten some expressions, we will use the shorthand notations
$$
	k=(k_1,\dots,k_m), \qquad \sigma=(\sigma_1,\dots,\sigma_m), \qquad a_k^\sigma=a_{k_1}^{\sigma_1}\dots a_{k_{m}}^{\sigma_{m}}.
$$
for given $m\ge 1$, and we will write $d(k,\sigma) = m$ to specify the value of $m$. 

\paragraph{Expressions for $f^{(1)}$:}
Let us expand $f^{(1)} = f$ as
\begin{equation}\label{eq: coef f1}
	f \; = \; \sum_{k,\sigma} \hat f(k,\sigma) a_k^\sigma 
\end{equation}
where $d(k,\sigma)=d_1$, with $d_1$ as in Section~\ref{sec: well defined perturbation}. 
First, if $f = f_{k_0} = - \{E_{k_0},H_\an\}$, we compute 
\begin{equation}\label{eq: coef f1 k0}
	\hat f(k,\sigma) \; = \; \imag \hat H_\an(k) \sum_{j=1}^4 \sigma_j \delta (k_0 - k_j)
\end{equation}
with $\hat H_\an(k)$ given by \eqref{eq: H an k}.
Next, if $f = j_{x_0} = \eta (q_{x_0-1}-q_{x_0})p_{x_0}$, we obtain 
\begin{align*}
 	j_{x_0}
	\; &= \; 
	\eta \sum_{k_1,k_2} \lscal q,\psi_{k_1}\rscal\lscal p,\psi_{k_2}\rscal (\psi_{k_1}(x_0-1)-\psi_{k_1}(x_0))\psi_{k_2}(x_0)\\
	\; &= \; 
	\frac{\eta\imag}{2}\sum_{k_1,k_2} (\psi_{k_1}(x_0-1)-\psi_{k_1}(x_0))\psi_{k_2}(x_0) \left(\frac{\nu_{k_2}}{\nu_{k_1}}\right)^{1/2} 
	\sum_{\sigma_1,\sigma_2} \sigma_2 a_{k_1}^{\sigma_1}a_{k_2}^{\sigma_2}.
%	\; =: \; 
%	\sum_{k,\sigma} \hat f(k,\sigma)a_k^\sigma
\end{align*}
Hence we may set 
\begin{equation}\label{eq: coef f1 j}
    \hat f(k_1,k_2,\sigma_1,\sigma_2) 
    \; = \;
    \frac{\eta\imag}{2}(\psi_{k_1}\big(x_0-1)-\psi_{k_1}(x_0)\big)\psi_{k_2}(x_0) \left(\frac{\nu_{k_2}}{\nu_{k_1}}\right)^{1/2} \sigma_2
    \; \big(1 - \delta(k_1-k_2)\delta(\sigma_1+\sigma_2) \big)
    %\frac{1}{2}\left(
    %\frac{\eta\imag}{2}(\psi_{k_1}\big(x_0-1)-\psi_{k_1}(x_0)\big)\psi_{k_2}(x_0) %\left(\frac{\nu_{k_2}}{\nu_{k_1}}\right)^{1/2} \sigma_2
    %+ \big((k_1,\sigma_1) \leftrightarrow (k_2,\sigma_2)\big)
    %\right)
\end{equation}
where $\delta(\cdot)$ is the Kronecker delta function. 
The expressions for $\hat f(k,\sigma)$ given by \eqref{eq: coef f1 k0} and \eqref{eq: coef f1 j} vanish for $\sigma$-symmetric monomials, i.e.\@ are such that $\hat f(k,\sigma) = 0$ for $(k,\sigma)\in\mathcal S$. 

\paragraph{Expressions for $f^{(i)}$ and $u^{(i)}$ for $i\ge 1$:}
By the analysis in Section~\ref{sec: well defined perturbation}, 
the functions $f^{(i)}$ and $u^{(i)}$ take the form
\begin{equation}\label{eq: expansion f i and u i}
    f^{(i)} \; = \; \sum_{k,\sigma} \hat f^{(i)}(k,\sigma) a_k^\sigma, 
    \qquad
    u^{(i)} \; = \; \sum_{k,\sigma} \hat u^{(i)}(k,\sigma) a_k^\sigma, 
\end{equation}
with $d(k,\sigma)=d_i$ as defined in \eqref{eq: di}. 

In order to provide explicit expressions, let us introduce some notations.
First, it is convenient to adopt a more compact notation for denominators: 
Given $m\ge1$, $k=(k_1,\dots,k_m)$ and $\sigma=(\sigma_1,\dots,\sigma_m)$, let 
\begin{equation}\label{eq: definition of denominator}
	\Delta(k,\sigma) \; = \; \sigma_1\nu_{k_1} + \dots + \sigma_m\nu_{k_m}.
\end{equation}
Second, we need to define a set of indices on which \emph{contractions} will be performed. 
Here, a contraction refers to the use of the first Poisson bracket in \eqref{eq: poisson bracket rule}, where two $a$-observables are ``contracted'' into a number. 
For $i\ge2$, let 
\begin{equation}\label{eq: I i set}
    \mathcal U_i \subset \N^{i-1}\times \N^{i-1}
\end{equation}
be such that $(s,t)\in \mathcal U_i$ if and only if 
\begin{align*}
    & d_1 + 4(j -1) + 1 \;\le\; t_j \;\le\; d_1 + 4j, \quad 1 \le j \le i-1, \\
    & 1 \;\le\; s_j \;\le\; d_1 + 4(j-1), \quad 1 \le j \le i-1, \qquad
    s_j \;\ne\; s_1,\dots s_{j-1},t_1,\dots,t_{j-1}, \quad 2 \le j \le i-1. 
\end{align*}
Finally, we want to make explicit that all denominators involved in our expressions are almost surely non-zero. 
Let $(k,\sigma)$ with $d(k,\sigma)=d_1 + 4(i-1)$ for some $i\ge 1$. 
We say that $(k,\sigma)\in\mathcal R$ if and only if
\begin{equation}\label{eq: definition of the R set}
    (k_1,\dots,k_{d_1+4j},\sigma_1,\dots,\sigma_{d_1+4j}) \notin \mathcal S
    \qquad 
    \forall j = 0,\dots, i-1.
\end{equation}

With these notations, we can state
\begin{Proposition}\label{pro: main result perturbation theory}
Let first $i=1$.
The expression for $f^{(1)}=f$ takes the form \eqref{eq: coef f1} with coefficients given by \eqref{eq: coef f1 k0} for $f=f_{k_0}$, or by \eqref{eq: coef f1 j} for $f=j_{x_0}$.
The expression for $u^{(1)}$ takes the form \eqref{eq: expansion f i and u i} with $\hat u^{(1)}(k,\sigma) = - \imag \hat f(k,\sigma)/\Delta(k,\sigma)$.

Let then $i\ge 2$. 
The function $f^{(i)}$ is given by 
\begin{align}\label{eq: expression for fi}
    f^{(i)} 
    \; = &\; 
    \sum_{(k,\sigma)\in \mathcal R} \sum_{(s,t)\in \mathcal U_{i}}
    \delta(k_{s_1}-k_{t_1})\dots \delta(k_{s_{i-1}} - k_{t_{i-1}}) \delta(\sigma_{s_1}+\sigma_{t_1})\dots\delta(\sigma_{s_{i-1}}+\sigma_{t_{i-1}})\nonumber\\
    %&\chi_{\mathcal{S}^c}(k_1',\sigma_1')
    %\chi_{\mathcal{S}^c}((k_1',k_2'),(\sigma_1',\sigma_2'))
    %\dots
    %\chi_{\mathcal{S}^c}((k_1',\dots,k_i'),(\sigma_1',\dots,\sigma_i'))
    %\nonumber\\
    &\frac{\sigma_{t_1}\dots \sigma_{t_{i-1}}\hat f(k'_1,\sigma'_1)
    \hat H_{\mathrm{an}}(k'_2)\dots \hat H_{\mathrm{an}}(k'_{i})}
    {\Delta(k'_1,\sigma'_1)\Delta((k'_1,k'_2),(\sigma'_1,\sigma'_2)) 
    \dots \Delta((k'_1,\dots,k'_{i-1}),(\sigma'_1,\dots,\sigma'_{i-1}))}
    \; \tilde a_k^\sigma
\end{align}
and the function $u^{(i)}$ is given by 
\begin{align}\label{eq: expression for ui}
    u^{(i)} 
    \; = &\; -\imag
    \sum_{(k,\sigma)\in \mathcal R} \sum_{(s,t)\in \mathcal U_{i}}
    \delta(k_{s_1}-k_{t_1})\dots \delta(k_{s_{i-1}} - k_{t_{i-1}}) \delta(\sigma_{s_1}+\sigma_{t_1})\dots\delta(\sigma_{s_{i-1}}+\sigma_{t_{i-1}})\nonumber\\
    %&\chi_{\mathcal{S}^c}(k_1',\sigma_1')
    %\chi_{\mathcal{S}^c}((k_1',k_2'),(\sigma_1',\sigma_2'))
    %\dots
    %\chi_{\mathcal{S}^c}((k_1',\dots,k_i'),(\sigma_1',\dots,\sigma_i'))
    %\nonumber\\
    &\frac{\sigma_{t_1}\dots \sigma_{t_{i-1}}\hat f(k'_1,\sigma'_1)
    \hat H_{\mathrm{an}}(k'_2)\dots \hat H_{\mathrm{an}}(k'_{i})}
    {\Delta(k'_1,\sigma'_1)\Delta((k'_1,k'_2),(\sigma'_1,\sigma'_2)) 
    \dots \Delta((k'_1,\dots,k'_{i}),(\sigma'_1,\dots,\sigma'_{i}))}
    \; \tilde a_k^\sigma. 
\end{align}
In the above expressions, we have used the following conventions: 
\begin{enumerate}
	\item 
	$d(k,\sigma)=d_1+4(i-1)$.
	\item
	The coordinates of $k$ and $\sigma$ are expressed in two different ways: First $k = (k_1,\dots,k_{d(k,\sigma)})$
	and second $k=(k'_1,\dots,k'_{i})$ with 
	$$
		k'_1=(k_1,\dots,k_{d_1}), \;
		k'_2=(k_{d_1+1},\dots,k_{d_1+4}),\; 
		\dots,\; 
		k'_i = (k_{d_1 + 4(i-2)+1},\dots,k_{d_1+4(i-1)}),
	$$ 
        and similarly for $\sigma$.
	\item
	The factors $a_{k_{s_1}}^{\sigma_{s_1}} a_{k_{t_1}}^{\sigma_{t_1}}\dots a_{k_{s_{i-1}}}^{\sigma_{s_{i-1}}} a_{k_{t_{i-1}}}^{\sigma_{t_{i-1}}}$ are omitted from $\tilde a_k^\sigma$ (they have been ``contracted'').
\end{enumerate}
\end{Proposition}

Before proceeding to the proof of that proposition, let us make a few comments. 
First, the expressions (\ref{eq: expression for fi},\ref{eq: expression for ui}) are compatible with the general form \eqref{eq: expansion f i and u i}, 
though it may not be obvious to deduce the value of the coefficients $f^{(i)}(k,\sigma)$ and $u^{(i)}(k,\sigma)$ featuring in \eqref{eq: expansion f i and u i} (but we will not need them). 
Indeed, since $2(i-1)$ factors have been contracted in each term, the polynomial is homogeneous of degree $d_i$. 
Second, and similarly, the contractions induce cancellations also in the denominators, so that $\Delta(k'_1,\dots,k'_m,\sigma'_1,\dots,\sigma'_m)$ are linear combinations involving only $d_m$ eigenfrequencies.
Third, the condition $(k,\sigma)\in\mathcal R$ ensures that all denominators are almost surely non-zero. 

It may also be worth to stress that the above proposition solves our problem~\eqref{eq: commutator equation}. 
Indeed, by \eqref{eq: first expression for u and g}, we define $g=f^{(n+1)}$ and $u = u^{(1)}+ \dots + \lambda^{n-1} u^{(n)}$. 
We notice also that $\langle g \rangle = 0$ and that $u$ is defined up to an additive constant, that can in particular be taken such that $\langle u \rangle = 0$.

\begin{proof}[Proof of Proposition~\ref{pro: main result perturbation theory}.]
    First, the expressions (\ref{eq: coef f1}-\ref{eq: coef f1 j}) for $f^{(1)}$ have been established at the beginning of this section, and the announced expression for $u^{(1)}$ follows directly from the discussion in Section~\ref{subsec: commutator equation}. 
    
    Second, for all $i\ge 2$, let us show that the representation \eqref{eq: expression for fi} for $f^{(i)}$ implies the representation \eqref{eq: expression for ui} for $u^{(i)}$. 
    Since $u^{(i)}$ solves the equation $-\{H_\har,u^{(i)}\}=f^{(i)}$, we may use
    the general form \eqref{eq: expression for v} for the solution of this commutator equation established in Section~\ref{subsec: commutator equation}. 
    To show that this formula can be meaningfully applied, we notice that the constraint $(k,\sigma)\in\mathcal R$ in \eqref{eq: expression for fi} implies that $(k,\sigma)\notin \mathcal S$, and thus also $(\tilde k,\tilde \sigma)\notin \mathcal S$, 
    where $(\tilde k, \tilde \sigma)$ are the sub/superscripts of the $a$-operators that have not been omitted in $\tilde a_k^\sigma$. 
    Similarly, we check that the new denominator $\Delta((k_1',\dots,k_i'),(\sigma_1',\dots,\sigma_i'))$ featuring in \eqref{eq: expression for ui} is a re-expression of the denominator stemming from the use of \eqref{eq: expression for v}.

    Finally, let us show that our expressions for $u^{(i)}$, with $i\ge 1$, imply the expression \eqref{eq: expression for fi} for $f^{(i+1)}$. 
    For this, we start from the definition $f^{(i+1)} = - \{u^{(i)},H_\an\}$.
    Using the rules in \eqref{eq: poisson bracket rule}, 
    we consider each term in the right hand side of \eqref{eq: expression for ui}, 
    indexed by some $(k,\sigma)\in\mathcal R$, and we compute
    \[
    -\{\tilde a_k^\sigma,H_\an\}
    \; = \; 
    -\sum_{k',\sigma'} \hat H_\an(k') \{\tilde a_k^\sigma,a_{k'}^{\sigma'}\}
    \; = \; 
    \imag\sum_{k',\sigma'}\sum_{s=1}^{d_i}\sum_{t=1}^4 \delta(\tilde k_s - k'_t)\delta(\tilde\sigma_s + \sigma'_t)
    \sigma'_t \hat H_\an(k')
    \tilde a_{(k,k')}^{(\sigma,\sigma')}
    \]
    where $d(k',\sigma')= 4$ and where  $(\tilde k, \tilde \sigma)$ are the sub/superscripts of the $a$-operators that have not been omitted in $\tilde a_k^\sigma$. 
    Further, when summing over $(k,\sigma)\in\mathcal R$, we may impose the constraint $((k,k'),(\sigma,\sigma'))\notin \mathcal S$ thanks to the $p$-antisymmetry of $f^{(i+1)}$.
    From the expression for $u^{(i)}$ in \eqref{eq: expression for ui}, 
    from the definition of the set $\mathcal U_{i+1}$ in \eqref{eq: I i set} 
    and from the definition of the set $\mathcal R$ in \eqref{eq: definition of the R set}, 
    we conclude that $f^{(i+1)}$ is given by \eqref{eq: expression for fi}.
\end{proof}

\section{Bound on Denominators}\label{sec: bound denominators}
We remind that $(\omega_x^2)_{x\in\Z}$ is a sequence of i.i.d.\@ random variables, 
with a density that we denote by $\mu$ in this section. 
The density $\mu$ is assumed to be continuously differentiable and compactly supported on $[\omega_-^2,\omega_+^2]$, cf.~Section~\ref{subsec: model}.

Let $I\subset\Z$ be a \emph{finite} interval and let us consider the random operator $\mathcal H_I$ defined in \eqref{eq: Anderson hamiltonian}. 
We remind also that its eigenvalues $\nu^2$ are positive and satisfy $\nu_-^2\leq \nu^2\leq \nu_+^2$ with $\nu_+^2 > \nu_-^2>0$, cf.~Section~\ref{sec: the localized harmonic chain}.
Let us fix an even positive integer $m$, that is a parameter in what follows. 
Given $\sigma \in (\Z\backslash\{0\})^m$ such that $|\sigma_k| \le m$ for all $1\le k \le m$, let us define the random variable 
\begin{equation}\label{eq: Q random variable}
	Q \; = \;  \min\left|\sum_{k=1}^m \sigma_k \nu_k\right|
\end{equation}
where the minimum is taken over all $m$-tuples $(\nu_1^2,\dots,\nu_m^2)$ of different eigenvalues of $\mathcal H_I$.
The estimate we  prove is
\begin{Theorem}\label{thm: denominator}
There exists a constant $\Const_m$ such that, for any $ \epsilon >0$, 
\begin{equation}\label{eq: bound}
	\Proba(Q\leq \epsilon) \;\leq\; \Const_m |I|^{m}  \epsilon^{\frac{1}{2m(m+1)}}.
\end{equation} 
\end{Theorem}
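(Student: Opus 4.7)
The plan is to combine a union bound over ordered $m$-tuples of distinct eigenvalue indices—which accounts for the $|I|^m$ prefactor in the conclusion—with a per-tuple anti-concentration estimate producing the factor $\epsilon^{1/(m+1)}$. Since the number of ordered $m$-tuples of distinct indices $(k_1,\ldots,k_m)$ is at most $|I|^m$, a union bound gives
\begin{equation*}
	\Proba(Q\leq\epsilon) \;\leq\; |I|^m \max_{(k_1,\ldots,k_m)} \Proba\Bigl(\Bigl|\sum_{j=1}^m \sigma_j\nu_{k_j}\Bigr|\leq\epsilon\Bigr),
\end{equation*}
so it suffices to prove, uniformly in $|I|$ and in the tuple, that the fixed-tuple probability for $F_\sigma:=\sum_j\sigma_j\nu_{k_j}$ is bounded by $\Const_m\,\epsilon^{1/(m+1)}$.

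The crucial algebraic input—exactly the hint the authors emphasize just before stating the theorem—is that the expression involves the $\nu_k$ themselves and not the $\nu_k^2$. I would exploit this via the signed product
\begin{equation*}
	\Pi(\mu_1,\ldots,\mu_m) \;:=\; \prod_{\tau\in\{\pm 1\}^m}\Bigl(\sum_{j=1}^m \tau_j\sigma_j\sqrt{\mu_j}\Bigr),
\end{equation*}
which is invariant under each sign flip $\sqrt{\mu_j}\mapsto -\sqrt{\mu_j}$ and is therefore a genuine polynomial in $\mu_1,\ldots,\mu_m$. Evaluated at $\mu_j=\nu_{k_j}^2$, one of its $2^m$ factors is $F_\sigma$ itself, while the remaining factors are bounded above by $m\nu_+$ in absolute value. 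Hence $|F_\sigma|\leq\epsilon$ forces $|\Pi(\nu_{k_1}^2,\ldots,\nu_{k_m}^2)|\leq\Const_m\,\epsilon$, reducing the per-tuple estimate to a small-ball bound on a polynomial evaluated at eigenvalues of $\mathcal H_I$.

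For the small-ball step I would use that the joint law of any fixed $m$-tuple of eigenvalues of the one-dimensional operator $\mathcal H_I$ admits, by iterated Wegner estimates together with the Poisson-like local statistics of the Anderson spectrum, a joint density uniformly bounded in $|I|$ on the region where no two of the selected eigenvalues are too close (the near-coincidence set being handled separately via Minami). Plugging this into a polynomial anti-concentration inequality—Carbery--Wright type, or, since $\Pi$ is a product of linear forms in the $\sqrt{\mu_j}$'s, a direct elementary argument—yields the required per-tuple estimate for $\Pi$, and hence for $F_\sigma$.

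The main obstacle will be reaching exactly the exponent $1/(m+1)$. Used naively with the full symmetric product $\Pi$, whose total degree in the $\mu_j$'s is $m\cdot 2^{m-1}$, the method would only produce an anti-concentration exponent $1/(m\cdot 2^{m-1})$, far weaker than $1/(m+1)$. I therefore expect the actual proof to avoid the full $\Pi$ and to use only a minimal companion product of $m$ carefully chosen factors $F_\tau$, producing a polynomial of total degree $m+1$ in the $\mu_j$. The restriction $\epsilon\leq 1/|I|$ should enter precisely at this stage, to keep the retained companion factors away from zero with sufficient probability, and the parity hypothesis that $m$ be even is likely what makes such a minimal symmetric pairing feasible. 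Carrying out this algebraic compression, and then verifying the small-ball bound for the reduced polynomial against the quantitative eigenvalue density, is the step that would require the most technical work.
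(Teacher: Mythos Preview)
Your proposal has a genuine gap, and the paper's route is quite different from what you sketch.

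\textbf{The joint-density assumption is precisely the missing ingredient.} You write that ``the joint law of any fixed $m$-tuple of eigenvalues of the one-dimensional operator $\mathcal H_I$ admits \ldots a joint density uniformly bounded in $|I|$''. This is not known for $m>2$; the paper states explicitly in the introduction that controlling correlations between eigenvalues not close to each other ``goes beyond Minami estimates'' and that only the case $n=2$ has been treated in the literature. So you are assuming the hard step rather than proving it. The paper's entire point is to \emph{avoid} needing such joint-density control.

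\textbf{The algebraic compression also fails.} Your proposed fix for the exponent---keeping only $m$ factors $F_\tau$ to get a polynomial of degree $m+1$ in the $\mu_j$---cannot work. For $\prod_{\tau\in S} F_\tau$ to be a polynomial in $\mu_1,\dots,\mu_m$ it must be invariant under each sign flip $\sqrt{\mu_j}\mapsto-\sqrt{\mu_j}$, which forces $S$ to be a union of orbits under the full $(\Z/2)^m$ action by coordinate sign changes. Since this action is transitive on $\{\pm 1\}^m$, any nonempty invariant $S$ has $2^m$ elements. There is no smaller symmetric product.

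\textbf{What the paper actually does.} The paper reparametrizes the disorder by $x=\frac{1}{L}\sum_i\omega_i^2$ and $\tau_i=\omega_i^2-\omega_1^2$. With $\tau$ fixed, varying $x$ shifts every $\omega_i^2$ by the same amount, hence shifts every eigenvalue $\nu_k^2$ by the same amount. One therefore studies the one-variable function $x\mapsto \sum_j\sigma_j\sqrt{\nu_{k_j}^2(0)+x}$. Its Taylor coefficients at a point $x_0$ are $s(p)\sum_j\sigma_j\nu_{k_j}^{1-2p}$, and a Vandermonde determinant computation shows that at least one coefficient of order $\le m$ is bounded below by $\Const\min_{i\ne j}|\nu_i^2-\nu_j^2|^m$. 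Cartan's lemma on sublevel sets of polynomials with a large coefficient then gives the small-ball bound in $x$; integrating over $\tau$, together with Minami for the level-spacing event and a simple estimate for the fiber $X_\tau$ being too short, yields exactly $\epsilon^{1/(m+1)}$. No joint eigenvalue density is ever needed: the change of variables reduces the problem to a single real variable along which the motion of the spectrum is trivial.
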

The remainder of this section is devoted to the proof of Theorem~\ref{thm: denominator}.
We now fix some $\epsilon>0$, and the dependence on $\epsilon$ will always be written explicitly. 
Constants are allowed to depend on $m$, but we will not indicate this dependence explicitly.

\subsection{Changing Variables}
Without loss of generality, we may assume that $I = [1,L]\cap\Z$ for some $L\ge 1$.  
Since $\mathcal H_I$ depends only on $\omega_x^2$ for $x\in I$, the full disorder space can be represented as $\Omega = [\omega_-^2,\omega_+^2]^L$.

We introduce the change of variables
\begin{equation}\label{eq: change of variables}
    x \; = \; \frac1L \sum_{i=1}^L \omega^2_i, 
    \qquad 
    \tau_i \;=\;\omega^2_i-\omega^2_1, \quad 1 < i \le L,
\end{equation}
and we set $\tau = (\tau_i)_{1<i\le L} \in {\mathcal T}$ 
with
$$
    \mathcal T \; = \; \left\{a-b : a,b \in[\omega_-^2,\omega_+^2]\right\}^{L-1}. 
$$ 
Let us denote by $\phi: \Omega \to \mathbb{R}\times \mathcal T$ the map $(\omega^2_i)_{1\le i\le L} \mapsto  (x,\tau)$. The map $\phi$ is injective. 
We will use the change of measure
\begin{equation}
	\Proba(\dd\omega_1^2\dots \dd \omega_L^2) 
 \; = \;  \rho(x,\tau)  \dd x  \dd \tau 
\end{equation}
where $\dd \tau= \prod_{j=2}^{L}  \dd\tau_j$  and 
\begin{equation}\label{eq: density}
	\rho(x,\tau) \; = \;  \chi_{\phi(\Omega)}(x,\tau) |\det M| \prod_{j=1}^L \mu(\omega^2_j)
\end{equation}
with $M$ a constant non-singular matrix that is the Jacobian of the change of variables.

It is useful to consider the random variable $Q:\Omega\to\R$ in the coordinates $(x,\tau)$ and we will simply write $Q=Q(x,\tau)$.     
As $x$ varies and $\tau$ remains fixed, the eigenvalues keep their ordering and (non)-degeneracy is conserved, which we exploit now. 
For any $\tau \in \mathcal T$, we can verify that the set 
$$
X_\tau= \{x \in [\omega_-^2,\omega_+^2]:  (x,\tau) \in \phi(\Omega) \}
$$
is a closed interval (possibly empty). 
Let $\tau$ be such that $X_\tau$ is non-empty and such that the spectrum of $\mathcal H_I$ is non-degenerate. 
We order the eigenvalues by increasing order and
we specify an $m$-tuple of distinct eigenvalues by choosing an $m$-tuple $t=(t_1,\ldots,t_m)$ of different elements in $\{1,\ldots,L\}$.  
Given such $\tau$ and $t$, we define 
\[
    Q_t(\cdot,\tau)\; :\; X_\tau \to \mathbb R,\;
    x \mapsto Q_t(x,\tau) \; = \; \sum_{k=1}^m \sigma_k \nu_{t_k}. 
\]
Since the derivative of any eigenvalue $\nu^2$ of $\mathcal H_I$ with respect to $x$ is equal to 1, and since the spectrum of $\mathcal H_I$ is bounded below by $\nu^2_->0$, the function $Q_t(\cdot,\tau)$ is smooth. 
We will make this more quantitative in the next section.

\subsection{Bound on Coefficients in the Taylor Expansion}

Let us now fix for this subsection some \(\tau \in \mathcal{T}\) as well as some $m$-tuple $t=(t_1,\dots,t_m)$ of different elements in $\{1,\dots,L\}$. 
For simplicity, we will not explicitly indicate the dependence on $\tau$ and $t$ in all the notation; however, as in this whole section, no constant will be allowed to depend on them. 
We will also write \(\nu^2_{t_j}\) simply as \(\nu^2_{j}\).

We consider the Taylor expansion of $Q_t(\cdot,\tau)$ around some point $x_0\in X_\tau$. 
A computation yields
\begin{equation}\label{eq: taylor expansion Q}
    Q_{t}(x,\tau) 
    \;=\;   
    Q_t(x_0,\tau) + 
    \sum_{k=1}^{m} v_k(x_0) (x-x_0)^k+ R_m(x,x_0)
\end{equation}
for all $x\in X_\tau$, where
$$
    v_k(x_0) \; = \; s_k \lscal b_k , a \rscal,   
    \qquad 
    s_k \; = \; 
    \frac{1}{2} \left(\frac{1}{2}-1\right) \dots \left(\frac{1}{2}-(k-1)\right)  
    \frac{1}{k!}, 
$$
$\lscal \cdot,\cdot \rscal$ being our notation for the scalar product in $\R^m$,   
and where $a,b_k $ are vectors in $\R^m$ given by 
$$
    a \;=\; (\sigma_1,\ldots,\sigma_m), 
    \qquad  
    b_k=  (\nu_1^{1-2k}, \ldots,  \nu_m^{1-2k})
$$
for $k\ge 1$. 
As $0 < \nu_-^2 \leq \nu^2_j \leq \nu_+^2$, there exists a constant $\Const$ such that the error term is bounded as
\begin{equation}\label{eq: bound remainder taylor}
    |R_m(x,x_0)| \;\leq\; \Const |x-x_0|^{m+1}.
\end{equation}

\begin{Lemma}\label{lem: bound on scalar product}
There is a constant $\const_0$ such that
\begin{equation}\label{eq: bound on scalar product}
    \max_{1 \leq k\leq m}|v_k(x_0)| \;\geq\; \const_0 \min_{i\ne j} |\nu^2_i-\nu^2_j|^m .
\end{equation}
\end{Lemma}
\begin{proof}
It is sufficient to prove the claim with $v_k(x_0)$ replaced by $\lscal b_k , a \rscal$, which we do now.
Let $A$ be the $m\times m$ matrix defined by 
$$
    A_{ij}\;=\; \nu_{j}^{1-2i}
$$
for $1\le i,j\le m$ and let us view $a$ as a column vector, i.e.\@ a $m\times 1$ matrix.
We have 
\begin{equation}\label{eq: relation to det}
    \sum_k \lscal a , b_k\rscal^2 
    \;=\; \lscal a ,  A^T A  a \rscal  
    \;\geq\; \|a\|^2 \lambda_{\min}(A^T A)   
    \;\geq\;  \|a\|^2 \frac{ (\det A)^2 }{\| A\|^{2(m-1)}} 
    \;\geq\; \const (\det A)^2
\end{equation}
where $\lambda_{\min}(A^T A) $ is the smallest eigenvalue of the positive matrix $A^T A$ and 
where the penultimate inequality follows from
$$
	\lambda_{\min}(A^T A) \| A\|^{2(m-1)} \;\geq\; \det (A^T A)  \; = \;  (\det A)^2 .
$$
Since $\nu_j A_{ij}$ is a Vandermonde matrix, the determinant of $A$ is computed explicitly as 
$$
    \det A \;=\;  \left(\prod_{i=1}^m \frac1{\nu_{i}}\right)    
    \prod_{1\leq i<j\leq m-1} \left(\frac1{\nu_i^2}- \frac1{\nu_j^2}\right) .
$$
We remember that $ \nu_-^2 \leq \nu_i^2 \leq \nu_+^2$, which gives us the lower bound
$$
|\det A| \;\geq\; \const \min_{i \ne j} |\nu_i^2-\nu_j^2|^m. 
$$
This yields the lemma upon using \eqref{eq: relation to det}.
\end{proof}

\subsection{Good Behavior of the Resulting Polynomial}\label{subsec: good beavior}

Given $\gamma>0$, let now $ B_\gamma \subset \Omega$ be the event where some level spacing is smaller than $\gamma$:
$$
    B_\gamma
    \;=\; 
    \{ \exists \nu^2 \neq (\nu^2)' \in \mathrm{Spec}(\mathcal H_L):  |\nu^2-(\nu^2)'| \leq \gamma \}.
$$
We notice that if $(x,\tau)\in B_\gamma$, then $(x',\tau)\in B_\gamma$ for all $x'\in X_\tau$. 
The following bound is well-known and due to Minami;  
see \cite{minami1996local,graf2007remark} for proofs and e.g.\@ Section~1 in~\cite{bellissard_2007} for a reformulation corresponding to the claim here. 
\begin{Lemma}\label{lem: bound on b}
There exists a constant $\Const$ such that for all $\delta > 0$, 
\begin{equation}\label{eq: minami}
	\Proba (B_\gamma) \;\leq\; \Const L^2\gamma.
\end{equation}
\end{Lemma}

For the following lemma, which is our main probabilistic estimate, let us once again fix some \(\tau \in \mathcal{T}\) as well as some $m$-tuple $t=(t_1,\dots,t_m)$ of different elements in $\{1,\dots,L\}$.

\begin{Lemma}\label{lem: cartan poly} 
Let $0<\alpha<1$, that doesn't depend on $t,\tau$, and assume that $\tau$ is such that $(x,\tau)\notin B_{\alpha^{1/m}}$ for all $x\in X_\tau$. 
Given $x_0\in X_\tau$, there exists an integer $p$ satisfying $1\le p \le m$, 
that may depend on $x_0,\tau,t$ but not on $\alpha$, 
as well as a constant $\Const$, such that the following holds:
If $\interval(x_0)$ is an interval centered around $x_0$ with length
\begin{equation}\label{eq: lenght interval assumption}
    |\interval (x_0)| \;\leq\; \epsilon^{1/(p+1)},
\end{equation}
then 
\begin{equation} \label{eq: bound on bad part of i}
    \left|\left\{ x \in\interval(x_0)\cap X_\tau 
    : |Q_{t}(x,\tau)| \leq \epsilon \right\} \right| 
    \;\leq\; \Const (\epsilon/\alpha)^{1/p}
\end{equation} 
where $|\cdot|$ denotes the Lebesgue measure on $\R$.
\end{Lemma}

\begin{proof}
Pick a point $x_0 \in X_\tau$.  
We let $1\leq p \leq m$ be the smallest integer such that $|v_p(x_0)| \geq 
\const_0 \alpha$  for $\const_0$ as in Lemma~\ref{lem: bound on scalar product}. 
The existence of $p$ is assured by this lemma. 
Let now $P_p$ be the polynomial of degree $p$ that is obtained by keeping only the first $p$ terms of the Taylor approximation \eqref{eq: taylor expansion Q} to $ Q_{t}(\cdot,\tau)$. 
On the one hand, thanks to the upper bounds $|v_{k}(x_0)|\leq \Const$ for $k=p+1,\ldots,m$ and the bound \eqref{eq: bound remainder taylor} on the remainder of the Taylor expansion, there exists a constant $\Const_1$ such that 
$$
| Q_{t}(x,\tau)-P_p(x)| \;\leq\; \Const_1 \epsilon
$$
for every $x\in\interval (x_0)\cap X_\tau$. 
On the other hand, 
$\frac{1}{v_p(x_0)}P_p$ is a monic polynomial, i.e.\ the coefficient of the leading term is $1$, and by a classic Theorem~\cite{cartan}, we have
\begin{equation}\label{eq: bound measure from cartan}
    \left| \{x\in\R  :   |P_p(x)| \leq 2\Const_1\epsilon \} \right|
    \;\leq\;  
    \Const (2\Const_1\epsilon/|v_p(x_0)|)^{1/p} 
    \;\le\; 
    \Const' (\epsilon/\alpha)^{1/p}.
\end{equation}
Therefore, we find that for every $x \in\interval(x_0)\cap X_\tau $, 
the bound $|Q_t(x,\tau)| \le \Const_1\epsilon$ implies the bound $P_p(x)\le 2\Const_1\epsilon$, 
hence \eqref{eq: bound on bad part of i} follows from \eqref{eq: bound measure from cartan}.  
\end{proof}

In the sequel, given some $0<\alpha<1$ and given some $\tau,t$ as above, 
we will say that an interval $\interval$, centered around some point $x_0\in X_\tau$, is $(\alpha,p)$-\emph{admissible} for some $1\le p\le m$, if $|\interval| = \epsilon^{1/(p+1)}$ and if the bound \eqref{eq: bound on bad part of i} is satisfied.

\subsection{Handling Small Values of $\mu$}
Some special care is needed when the density $\mu$ approaches 0. 
We introduce here some notations and gather some preliminary estimates to deal with this case. 
Let $\delta,\delta'>0$ and consider the sets
$$  
    w_\delta \;=\; \{x \in [\omega_-^2,\omega_+^2]: \mu(x)\leq \delta  \},  
    \qquad  
    w_{\delta,\delta'} \; = \; \{x \in [\omega_-^2,\omega_+^2]: \mathrm{dist}(x,w_\delta)\leq \delta'  \}.
$$
Writing $\mu(U) = \int_U \mu(x)\dd x$ for $U\subset[\omega_-^2,\omega_+^2]$, we state
\begin{Lemma} 
There exists a constant $\Const$ such that
$$
    \mu (w_\delta) \;\leq\; \Const\delta, 
    \qquad  
    \mu (w_{\delta,\delta'} ) \;\leq\; \Const (\delta+\delta').
$$
\end{Lemma}
\begin{proof}
First bound is immediate because the support is contained in a finite interval. 
For the second bound, we use that $\mu$ has a bounded Lipschitz coefficient $\Const_{\mathrm{Lip}}$, so that 
$w_{\delta,\delta'} \subset   w_{\delta + \Const_{\mathrm{Lip}}\delta'}$
and then one uses the first bound.
\end{proof}
We now also define global sets $W_{\delta} \subset W_{\delta,\delta'}\subset \Omega$ as follows:
An element $(\omega_1^2,\dots,\omega_L^2)\in \Omega$ belongs to $W_\delta$ or $W_{\delta,\delta'}$,
if there exists $i$ with $1\le i \le L$ such that $\omega_i^2 \in w_\delta$ or $\omega_i^2\in w_{\delta,\delta'}$, respectively. 
For later use, we note that there exists a constant $\Const$ such that
\begin{equation}\label{eq: bound proba w}
   \Proba(W_{\delta,\delta'}) \;\leq\; \Const L(\delta+\delta') 
\end{equation}
which follows from the above lemma. 

For the following lemma, we fix once again some \(\tau \in \mathcal{T}\) as well as some $m$-tuple $t=(t_1,\dots,t_m)$ of different elements in $\{1,\dots,L\}$: 
\begin{Lemma}\label{lem: functional inequality}
    Let $\interval \subset W_\delta^c \cap X_\tau$ be an interval which length satisfies 
    $|\interval| \le \delta/L$. 
    There exists a constant $\Const$ such that for any non-negative function $f$ on $\R$, 
    \[
    \int_\interval \dd x \, \rho(x,\tau) f(x)
    \; \le \; 
    \frac{\Const}{|\interval|} 
    \int_\interval \dd x \, \rho(x,\tau) \int_\interval \dd x \, f(x).
     \]
\end{Lemma}

\begin{proof}
    Let first $x,y\in\interval$, and let $(\omega_1^2,\dots,\omega_L^2) = \phi^{-1}(x,\tau)$ and $((\omega'_1)^2,\dots,(\omega'_L)^2) = \phi^{-1}(y,\tau)$. 
    Since $|\interval|\le \delta/L$, we have also $|(\omega_i')^2 - \omega_i^2|\le \delta/L$ for every $i=1,\dots,L$. 
    We compute 
    \[
    |\log \rho(y,\tau) - \log \rho (x,\tau)|
    \; \le \; 
    \sum_{i=1}^L \int_{\omega_i^2}^{(\omega_i')^2} \dd u \left|\frac{\mu'(u)}{\mu(u)}\right|
    \; \le \; \Const L \frac{\delta/L}{\delta} \; \le \; \Const'.  
    \]
    Therefore, there exists a constant $\Const$ such that $\max_{x\in\interval}\rho(x,\tau) \le \Const \min_{x\in\interval}\rho(x,\tau)$, and therefore 
    \begin{align*}
    \int_\interval \dd x \, \rho(x,\tau) f(x)
    \; &\le \; 
    \max_{x\in\interval} \rho(x,\tau) \int_\interval\dd x f(x)
    \; \le \; 
    \Const \min_{x\in\interval} \rho(x,\tau) \int_\interval\dd x f(x)\\
    \; &\le \; 
     \frac{\Const}{|\interval|} 
    \int_\interval \dd x \, \rho(x,\tau) \int_\interval \dd x \, f(x).
    \end{align*}
\end{proof}

\subsection{Concluding the Proof of Theorem~\ref{thm: denominator}}\label{sec: finishing the proof denominators}

\begin{proof}[Proof of Theorem~\ref{thm: denominator}]
Let now 
\begin{equation}\label{eq: def of parameters denominator}
    \delta' \; = \; \epsilon^{1/(m+1)}, 
    \qquad
    \delta \; = \; L\delta', 
    \qquad 
    \alpha \; = \; \epsilon^{1/2(m+1)}.
\end{equation}
We estimate
\begin{equation}\label{eq: proof denominaor 1st bound}
    \Proba(Q<\epsilon)
    \; \le \; 
    \Proba\left(\{Q<\epsilon\}\cap W^c_{\delta,\delta'}\cap B^c_{\alpha^{1/m}}\right)
    + \Proba(W_{\delta,\delta'})
    + \Proba(B_{\alpha^{1/m}}).
\end{equation}

Let us focus on the first term in the right hand side of this bound. 
Since $Q = \min_t|Q_t|$ where $t$ ranges over all $m$-tuples $t=(t_1,\dots,t_m)$ of different elements in $\{1,\dots,L\}$, and since the number of such $m$-tuples is bounded by $L^m$, a union bound yields
\begin{equation}\label{eq: union bound denominator}
    \Proba\left(\{Q<\epsilon\}\cap W^c_{\delta,\delta'}\cap B^c_{\alpha^{1/m}}\right)
    \; \le \;
    L^m \max_t \Proba \left(\{|Q_t|<\epsilon\}\cap W^c_{\delta,\delta'}\cap B^c_{\alpha^{1/m}}\right).
\end{equation}
In the sequel, we fix some admissible $m$-tuple $t$ of different elements in $\{1,\dots,L\}$ and we denote the event $\{|Q_t|<\epsilon\}$ by $G_\epsilon$ for brevity. We will not write explicitly the dependence on $t$.

Given $\tau\in\mathcal T$ be such that $(x,\tau)\notin B_{\alpha^{1/m}}$ for all $x\in X_\tau$, we define a finite family of intervals $(\interval_i)_i$  with the following properties: 
\begin{enumerate}
    \item 
    The interval $\interval_j$ is $(\alpha,p)$-admissible for some $1 \le p\le m$, cf.~the end of Section~\ref{subsec: good beavior}.

    \item
    $$
    \{ x\in X_\tau : (x,\tau) \notin W_{\delta,\delta'}\} 
    \;\subset\; \bigcup_{j} \interval_j \;\subset\; 
    \{ x \in \R : (x,\tau) \notin W_{\delta}\}.
    $$
    
    \item 
    Any $x\in X_\tau$ lies in the intersection of at most two intervals of the family. 
\end{enumerate}
Let us prove that such a family of intervals can indeed be constructed. 
First, by Lemma~\ref{lem: cartan poly}, it is possible to construct an $(\alpha,p)$-admissible interval centered around each point $x_0\in X_\tau \cap W_{\delta,\delta'}^c$, 
for some $1\le p \le m$ (that may depend on $x_0$). 
The resulting (infinite) family of intervals forms a cover of $X_\tau \cap W_{\delta,\delta'}^c$. 
Since $X_\tau \cap W_{\delta,\delta'}^c$ is a finite union of intervals and since $(\alpha,p)$-admissible intervals have length at least $\epsilon^{1/2}$, it is possible to extract a finite family that remains a cover of $X_\tau \cap W_{\delta,\delta'}^c$.
Second, since all intervals in this family have length at most $\epsilon^{1/(m+1)}$, and since $\delta' = \epsilon^{1/(m+1)}$, we conclude that every interval in the family has an empty intersection with $W_\delta$, hence the second condition above is satisfied.
Finally, it follows from elementary geometric considerations that if a point $x$ lies in the intersection of three intervals in this family, then one of these three intervals can be removed without affecting the family's coverage of $X_\tau \cap W_{\delta,\delta'}^c$. Using this recursively implies the last point above. 

With the above definition, we compute
\begin{align}
    \Proba( G_{\epsilon} \cap W^c_{\delta,\delta'} \cap B_{\alpha^{1/m}}^c )
    &\;=\;  
    \int_{\mathcal T} \dd\tau   \int_{X_\tau} \dd x\, 
    \rho \,\chi_{G_{\epsilon}} \,\chi_{W^c_{\delta,\delta'}} \,\chi_{B_{\alpha^{1/m}}^c} \nonumber\\
    &\;\leq \; 2 \int_{\mathcal T} \dd\tau \, \chi_{B_{\alpha^{1/m}}^c}  
    \sum_{j} \int_{\interval_j } \dd x\, \rho  \,\chi_{G_{\epsilon}} \nonumber\\  %\label{eq: sliced estiamte}
    & \; \le \; \Const \int_{\mathcal T} \dd\tau \, \chi_{B_{\alpha^{1/m}}^c} \,  \sum_{j}
    \frac{1}{|\interval_j|} \int_{\interval_j} \dd x\, \rho
    \int_{\interval_j} \dd x\, \chi_{G_\epsilon} \nonumber
\end{align} 
where the last bound follows from Lemma~\ref{lem: functional inequality}, that applies since $\interval_j \subset W_\delta^c$ and $|\interval_j| \le \epsilon^{1/(m+1)}=\delta' = \delta/L$. 
Using now that the intervals $\interval_j$ are $(\alpha,p)$-admissible, we conclude from Lemma~\ref{lem: cartan poly} that, on $B_{\alpha^{1/m}}^c$, 
\[
    \frac1{|\interval_j|}\int_{\interval_j} \dd x  \, \chi_{G_\epsilon}
    \;\le\;
    \frac{(\epsilon/\alpha)^{1/p}}{\epsilon^{1/(p+1)}}
    \; \le \; 
    \max_{1\le p \le m}
    \frac{(\epsilon/\alpha)^{1/p}}{\epsilon^{1/(p+1)}}
    \; =: \; 
    \max_{1\le p \le m} Z_p(\alpha,\epsilon)
    \; =: \; 
    Z(\alpha,\epsilon)
\]
and we conclude thus that 
\[
    \Proba( G_{\epsilon} \cap W^c_{\delta,\delta'} \cap B_{\alpha^{1/m}}^c )
    \; \le \; 
    \Const Z(\alpha,\epsilon). 
\]

Inserting this last bound in \eqref{eq: union bound denominator} and \eqref{eq: proof denominaor 1st bound} yields 
\[
    \Proba(Q<\epsilon)
    \; \le \; 
    \Const L^m Z(\alpha,\epsilon) + \Const L (\delta + \delta') + \Const L^2 \alpha^{1/m}
\]
where the second term in the right hand side is obtained from \eqref{eq: bound proba w}, and the last one from \eqref{eq: minami}. 
We now check that $Z(\alpha,\epsilon)\le Z_m(\alpha,\epsilon)$, since 
$p\mapsto Z_p(\alpha,\epsilon)$ is increasing as long as $p\le m$ and $\alpha \ge \epsilon^{2/(m+1)}$, which holds true by \eqref{eq: def of parameters denominator}. 
Inserting the values for $\delta',\delta,\alpha$ from \eqref{eq: def of parameters denominator} yields the claim of the theorem. 
\end{proof}

\section{A Class of Random Variables} \label{sec: control of local observables}
In this section, we introduce a class of random variables and we prove their two main properties.

\subsection{Preliminaries on Anderson Localization}\label{sec: preliminary Anderson}
We state here some useful and well-known results on Anderson localization. 
Let $I\subset\Z$ be an interval.  
One says there is strong exponential dynamical localization of $\mathcal H_I$, if there are constants $\Const$ and $\xi$ (that do not depend on $I$) such that 
\begin{equation} \label{eq: bound on eigenfuntion correlator}
    \sup_{x\in I} \sum_{{ y: |x-y|\geq R}}  \E [(Q_I(x,y))^2] 
    \;\leq\; 
    \Const e^{-R/\xi}
\end{equation}
for any $R>0$, 
where $Q_I$ is the eigenfunction correlator
$$
    Q_I(x,y)\; =\; \sum_{k=1}^{|I|}  |\psi_k(x)\psi_k(y)|.
$$
The bound \eqref{eq: bound on eigenfuntion correlator} is proven in \cite{Kunz}.

Next, given an interval $I\subset\Z$, we let $W:I\to\R^{+}$ be a normalized \emph{positive weight function} if it satisfies
$$
    W(x)\; > \; 0 \quad \forall x \in I, 
    \qquad \sum_{x\in I} \frac1{W(x)} \; = \; 1.
$$
Given $x_0\in I$, we will consider the weight
\begin{equation}\label{eq: weight function}
    W(x) \; = \;  W_I(x,x_0) \; = \; \frac1{N_I} (1+(x-x_0)^2)
\end{equation}
where $N_I$ ensures normalization. 
We notice that there exist universal constants $\Const,c$ such that $c\le N_I\le \Const$ for all $I$ and all $x_0 \in I$.
Given such a weight, we say that a normalized function $\psi\in L^2(I)$ has localization center $x\in I$ if and only if
$$
	|\psi(x)|^2 \;\geq\; \frac1{W_I(x,x_0)}.
$$
It immediately follows that
\begin{enumerate}
    \item 
    Every normalized function in $L^2(I)$ has at least one localization center. 
    \item 
    The number of eigenfunctions who have $x$ as a localization center, is bounded by $W_I(x,x_0)$. 
\end{enumerate}
In the sequel, given an eigenfunction $\psi_k$ of $\mathcal H_I$, we will denote by $\loccen(k)$ (or sometimes by $\loccen(\psi_k)$) a localization center of $\psi_k$. 
By Theorem~7.4 in \cite{aizenman2015random}, the strong localization result in \eqref{eq: bound on eigenfuntion correlator} implies

\begin{Theorem} \label{thm: complete loc}
Given an interval $I\subset\Z$, given $x\in I$, and given the weight $W_I(\cdot,x)$ as in \eqref{eq: weight function} there exists a random variable $A_{I,x}\ge 0$ such that for any eigenfunction $\psi_k$ of $\mathcal H_I$
and any $y\in I$, 
\begin{equation}\label{eq: definition big a}
    |\psi_k(y)|^2 \; \leq \; A_{I,x} \left(W_I(\loccen(k),x)\right)^2 
    \ed^{-|y-\loccen(k)|/\xi}.
\end{equation}
\end{Theorem}
\noindent
In the remainder, the factor $\left(W_I(\loccen(k),x)\right)^2$ will feature as $\Const (1 + (x- \loccen(k))^4)$.

\subsection{$Z$-Collections}\label{subsec: Z collections}

Below, given a (possibly infinite) interval $I\subset \Z$, we will consider elements of $L^2(I)$ as elements of $L^2(\Z)$ by extending them by 0 outside $I$.  
Given $d\ge 1$ and eigenfunctions $\psi_{k_1},\dots,\psi_{k_d}$ of $\mathcal H_I$, 
we denote by $\loccen(k)$ the set $\{\loccen(k_1),\dots,\loccen(k_d)\}$, with $k=(k_1,\dots,k_d)$.
For $x\in\Z$, we will also use the notation 
\begin{equation}\label{eq: the definition of R}
    R(x,k) \; = \; \max \{ |x-y|, y\in\mathbf x (k)\}.
\end{equation}

We will now define \emph{$Z$-collections} of random variables. 
For this we need some preliminary definitions. 
Let $m\in\N^*$ be given, 
and let $\kappa_0,\dots,\kappa_7$ as well as $\Const_1,\dots,\Const_5$ be given positive numbers.

Let us assume that the two following families of non-negative random variables are given. 
Let an interval $I\subset\Z$ and $1 \le r \le m$ be given. 
First, let 
\begin{equation}\label{eq: def D variables}
    (D_{I,r}(k))_k\, , 
    \qquad 
    k \in\{1, \dots, |I|\}^{d(r)},
\end{equation}
for some $d(r)\in\N^*$,
be such that $D_{I,r}(k)$ are measurable functions of the eigenvalues $(\nu_{k_1}^2,\dots,\nu_{k_{d(r)}}^2)$ of $\mathcal H_I$.
Second, given also $x\in I$, let 
\begin{equation}\label{eq: def N variables}
    (N_{I,x,r}(k))_k\, , 
    \qquad 
    k \in\{1, \dots, |I|\}^{d(r)},
\end{equation}
be such that $N_{I,x,r}(k)$ are measurable functions of the eigenpairs $(\nu_{k_1}^2,\psi_{k_1}),\dots,(\nu_{k_{d(r)}}^2,\psi_{k_{d(r)}})$ of $\mathcal H_I$.

To elaborate on the properties of these random variables, we need to define a class of subsets of $(\N^*)^{d}$ for $d\in \N^*$.  
For a pair $(i,j)$ with $1\le i<j\le d$, we define the sets
\begin{equation*}%\label{eq: def z ij}
    z_{i,j}(d) \; = \; \big\{ k \in (\N^*)^{d} : k_i = k_j \big\}.
\end{equation*}
Let $\mathcal A(d)$ be the $\sigma$-algebra of subsets of $(\N^*)^{d}$ generated by the collection of sets $ (z_{i,j}(d))_{1\le i<j\le d}$, i.e.\@ by taking complements, unions and intersections of such sets.  
For an element $\mathcal Z$ of $\mathcal A(d)$ and an interval $I\subset\Z$, we also write $\mathcal Z_I=\mathcal Z \cap \{1,\ldots,|I|\}^d$.

Now, for each $1\leq r \leq m$, we fix two elements 
\begin{equation}\label{eq: Z 1 Z 2}
    \mathcal Z^{(1)}(r),\mathcal Z^{(2)}(r) \;\in\; \mathcal A(d(r)).
\end{equation}
We also define 
\begin{equation}\label{eq: Z union}
\mathcal Z(r) \; = \; \mathcal Z^{(1)}(r) \cap \mathcal Z^{(2)}(r).
\end{equation}
With these definitions, we can state the properties of the families of random variables $(D_{I,r}(k))_k$ and $(N_{I,x,r}(k))_k$: 

\begin{enumerate}
\item 
Given intervals $I, J\subset\Z$ and
given $1\le r \le m$, 
\begin{equation}\label{eq: bounds on Dr}
    D_{I,r}(k) \; \le \; \Const_1, 
    \qquad 
    |D_{I,r}(k) - D_{J,r}(k')| 
    \; \le \; \Const_2 
    \left(\sum_{j=1}^{d(r)} |\nu_{k_j}^2 - \nu_{k_{j'}}^2|^2\right)^{\kappa_0/2}
\end{equation}
for all $k\in\{1,\dots,|I|\}^{d(r)}$ and for all $k'\in\{1,\dots,|J|\}^{d(r)}$.

\item 
Given a finite interval $I\subset\Z$ and given 
$1 \le r \le m$,
\begin{equation}\label{eq: condition 1 Z}
    \Proba\left(\min_{k\in \mathcal Z^{(2)}_I(r)} D_{I,r}(k) \le \varepsilon \right) 
    \; \le \; 
    \Const_3 |I|^{\kappa_1} \varepsilon^{\kappa_2}
\end{equation}
for all $\varepsilon > 0$.

\item
Given an interval $I\subset\Z$, $x \in I$ and $1\le r \le m$, 
\begin{equation}\label{eq: condition 2 Z}
    N_{I,x,r}(k) \; \le \; \Const_4 A^{\kappa_3}_{I,x}\ed^{-\kappa_4 R(x,k)}
\end{equation}
for all $k\in\mathcal Z_I^{(1)}(r)$.

\item 
Given intervals $I\subset J\subset\Z$, $x \in I$ and $1\le r \le m$, 
\begin{multline}\label{eq: condition 3 Z}
    |N_{I,x,r}(k) - N_{J,x,r}(k')| 
    \; \le \; \\
    \Const_5 \left(A_{I,x} A_{J,x}\right)^{\kappa_5}
    \left( 1 + R(x,(k,k')) \right)^{\kappa_6}
    \left( \sum_{j=1}^{d(r)} |\nu^2_{k_j} - \nu^2_{k'_j}|^2 
    + \|\psi_{k_j} - \psi_{k_j'} \|^2 \right)^{\kappa_7/2} 
\end{multline}
for all $k\in\mathcal Z_I^{(1)}(r)$,
all $k'\in\mathcal Z_J^{(1)}(r)$,
and all choices of gauge for the eigenvectors.
\end{enumerate}

Assuming that Definitions~\eqref{eq: def D variables} to \eqref{eq: Z union} are given, and that Properties~\eqref{eq: bounds on Dr} to \eqref{eq: condition 3 Z} hold, we can state

\begin{Definition}[$Z$-collection]\label{def: Z collection}
We say that a collection $\{Z_I(x) :  I\subset \Z \text{ an interval},\, x\in I\}$ of random variables is a \emph{$Z-$collection} if
\begin{equation}\label{eq: form of Z(x)}
    Z_I(x) 
    \; = \;
    \sum_{r=1}^{m}\sum_{k\in\mathcal Z_I(r)}
    \frac{N_{I,x,r}(k)}{D_{I,r}(k)}.
\end{equation}
\end{Definition}

Several examples of $Z$-collections will be presented in Section~\ref{sec: proof of the three propositions}, with a detailed treatment provided in Section~\ref{sec: explicit Z collection}.
When $I=\Z$, we will use the notation 
\[
Z(x) := Z_\Z(x)
\]
for all $x\in\Z$.
For the remainder of this section and the next, we will assume that some $Z$-collection $\{Z_I(x)\}$ is given, and we will allow constants to depend on the numbers $m,\Const_{1-5}$, $\kappa_{0-7}$ introduced above.
The following basic properties will be central: 
\begin{Proposition}\label{thm: integrability}
Let 
$$
	\mu \; = \; \frac{\kappa_2}{\kappa_2 \kappa_3 + 2}.
$$
There exists a constant $\Const$ such that for any $M > 0$, for any interval $I\subset\Z$ and any $x\in I$, 
$$
	\Proba(Z_I(x) \geq  M) \;\le\; \frac{\Const}{M^{\mu}}.
$$
In particular, $\E((Z_I(x))^p)$ stays finite for any $0<p<\mu$. 
\end{Proposition}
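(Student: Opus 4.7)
The plan is to leverage the three defining properties of a $Z$-collection by conditioning on two favourable events: the random constant $A_{I,x}$ is not too large, and every relevant denominator $D_{d,r}(k)$ is not too small. For thresholds $B, \varepsilon > 0$ to be tuned, set
\[
\mathcal G \;=\; \{A_{I,x} \leq B\} \;\cap\; \bigcap_{d,r}\Big\{\min_{k \in \mathcal Z^c(I,x,d,r)} D_{d,r}(k) \geq \varepsilon\Big\} .
\]
On $\mathcal G$, Condition~\eqref{eq: condition 2 Z} gives the termwise estimate $N_{I,x,d,r}(k)/D_{d,r}(k) \leq \Const B^{\kappa_3} \ed^{-\kappa_4 R(x,k)}/\varepsilon$. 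Because the number of $d$-tuples $k$ with $R(x,k)\leq \rho$ grows only polynomially in $\rho$ (each localization center $y$ serves at most $W(y)$ eigenfunctions, and $W$ has polynomial growth), the sum $\sum_k \ed^{-\kappa_4 R(x,k)}$ is finite uniformly in $I$, so on $\mathcal G$
\[
Z_I(x) \;\leq\; \Const\, B^{\kappa_3}/\varepsilon .
\]

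For the complementary event, Markov's inequality together with $\E(A_{I,x}) \leq \Const$ (Theorem~\ref{thm: complete loc}) gives $\Proba(A_{I,x} > B) \leq \Const/B$, while Condition~\eqref{eq: condition 1 Z} and a union bound over the $O(1)$ values of $(d,r)$ yield $\Proba(\min_{d,r,k} D_{d,r}(k) < \varepsilon) \leq \Const\,|I|^{\kappa_1} \varepsilon^{\kappa_2}$. Set $\varepsilon = \Const\,B^{\kappa_3}/M$ so that $Z_I(x) < M$ on $\mathcal G$, and then $B = M^{\kappa_2/(\kappa_2\kappa_3 + 2)}$: a direct arithmetic check gives $1/B = M^{-\mu}$ and $\varepsilon^{\kappa_2} = M^{-2\mu}$, so
\[
\Proba(Z_I(x) \geq M) \;\leq\; \Const\, M^{-\mu} + \Const\, |I|^{\kappa_1} M^{-2\mu},
\]
which is at most $\Const\, M^{-\mu}$ as long as $|I|^{\kappa_1}\lesssim M^{\mu}$. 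The ``$+2$'' in $\mu$ (rather than the tighter ``$+1$'' one would obtain by exact balancing) is precisely the slack that makes the second term the \emph{square} of the first, leaving room to afford a polynomial prefactor in $|I|$.

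To remove the $|I|^{\kappa_1}$ factor entirely --- and in particular to cover $I = \Z$ --- I would approximate $Z_I(x)$ by $Z_J(x)$ with $J \subset I$ a finite window of length $|J| \sim \log M$ centered at $x$, using Condition~\eqref{eq: condition 3 Z}. Standard quantitative Anderson-localization bounds (via Riesz spectral projectors and the resolvent identity) allow pairing each eigenpair of $\mathcal H_I$ with localization center well inside $J$ with an eigenpair of $\mathcal H_J$, with $L^2$-error and eigenvalue difference exponentially small in $\dist(\loccen(k),\partial J)$. Condition~\eqref{eq: condition 3 Z} then bounds $|Z_I(x) - Z_J(x)|$ by a power of $A_{I,x}\vee A_{J,x}$ times $\ed^{-c|J|}$, which after a further Markov step is negligible compared to $M$; applying the previous finite-interval estimate to $Z_J(x)$ introduces only a $(\log M)^{\kappa_1}$ factor, amply absorbed by the slack in $\mu$. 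The main technical obstacle is this spectral-matching step: eigenvalues of $\mathcal H_I$ and $\mathcal H_J$ can cluster, so the pairing has to be made well defined in a gauge-invariant way compatible with the structure of $N_{I,x,d,r}$ and $D_{d,r}$ --- typically by grouping eigenvalues via short energy windows and invoking a Minami-type level-spacing estimate (the $m=2$ case of Theorem~\ref{thm: denominator}) to guarantee uniqueness with high probability. The second claim of the proposition, $\E((Z_I(x))^p) \leq \Const$ for $0 < p < \mu$, then follows routinely by integrating the tail bound.
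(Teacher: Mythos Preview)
Your single-scale argument for finite $I$ is correct, and the two-parameter optimization in $B,\varepsilon$ is cleaner than what the paper does: the paper never isolates the finite-interval case but works directly on $I=\Z$ via a multi-scale random variable $\ell_0$.

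The gap is in the reduction step. Write $Z_I(x)-Z_J(x)$ as a sum over matched pairs $(k,k')$ plus unmatched $I$-terms plus unmatched $J$-terms. The matched terms are controlled by Condition~\eqref{eq: condition 3 Z} together with the Lipschitz bound~\eqref{eq: bounds on Dr} on $D_{d,r}$, \emph{provided} you already have lower bounds on $D_I(k)$ and $D_J(k')$; the unmatched $J$-terms are fine because Condition~\eqref{eq: condition 2 Z} gives numerator decay and your single-scale event gives $D_J(k')\ge\varepsilon$. But the unmatched $I$-terms --- tuples $k$ with $R(x,k)\gtrsim|J|$ --- have exponentially small numerators yet \emph{no available lower bound on $D_I(k)$}: Condition~\eqref{eq: condition 1 Z} is stated only for finite intervals, and spectral matching to $J$ is precisely what fails for these far $k$. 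The sum $\sum_{R(x,k)>c|J|}N_I(k)/D_I(k)$ is not a priori finite, so Condition~\eqref{eq: condition 3 Z} alone cannot deliver the claimed bound $|Z_I(x)-Z_J(x)|\le (\text{power of }A)\cdot\ed^{-c|J|}$.

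The paper resolves this by going multi-scale. It defines $T_\ell$ to be the event that the denominator bound holds on $\Lambda_\ell$, a Minami gap holds on $\Lambda_\ell$, and $A_{\Lambda_\ell,x},A_{\Z,x}\le\ed^{\gamma_2\ell}$; then $\ell_0=\max\{\ell+1:T_\ell\text{ fails}\}$. On $\{\ell_0\le\ell\}$, \emph{every} far tuple $k$ with $R(0,k)=r\ge\ell_0/2$ can be spectrally matched to $\Lambda_{2r}$, yielding $1/D_I(k)\le\Const\ed^{2\gamma_0 r}$; for small $\gamma_0$ this is beaten by the numerator decay $\ed^{-\kappa_4 r}$ and the tail sums. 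The result is $Z_I(x)\le\Const\ed^{c\ell_0}$, and optimizing the rates $\gamma_0,\gamma_1,\gamma_2$ against the tail $\Proba(\ell_0>\ell)\le\Const\ed^{-c\ell}$ produces exactly the exponent $\mu$. Your good event $\mathcal G$ is essentially $T_\ell$ at one scale; to close the argument you need $\bigcap_{\ell'\ge\ell}T_{\ell'}$, which is the paper's multi-scale event.
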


\begin{Proposition}\label{thm: local approx}
There exist constants $\Const,c$ such that, for any interval $I\subset\Z$, for any $x\in I$ and for any $\ell\ge 0$, we have 
$$
    |Z_I(x)- Z_{I(x,\ell)}(x)| \;\leq\;   \Const \ed^{-c\ell}
$$
with probability at least $1- \Const \ed^{-c\ell}$, and where $I(x,\ell) = [x-\ell,x+\ell]\cap I$.
\end{Proposition}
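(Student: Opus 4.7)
The plan is to split $Z_I(x) = Z_I^{\mathrm{near}}(x) + Z_I^{\mathrm{far}}(x)$ according to whether $R(x,k) \leq \ell/2$, and likewise for $Z_{I(x,\ell)}(x)$. The far contributions will be controlled by the exponential decay of the numerators in \eqref{eq: condition 2 Z}, while the near contributions will be compared using perturbation theory for the Anderson Hamiltonian, combined with the Lipschitz estimates \eqref{eq: condition 3 Z} and \eqref{eq: bounds on Dr}.

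Fix a small $\delta > 0$ and set $M = e^{\delta \ell}$ and $\varepsilon_j = e^{-\delta 2^j \ell}$. Introduce the good event $G$ on which (a) both $A_{I,x}$ and $A_{I(x,\ell),x}$ are at most $M$, (b) for every dyadic scale $j \geq 0$, every $1 \le d \le m$ and every $1 \le r \le \Const_1$, the minimum of $D_{d,r}(k)$ over $k \in \mathcal Z^c(I_j,x,d,r)$, with $I_j = I \cap [x-2^j\ell,x+2^j\ell]$, is bounded below by $\varepsilon_j$, and (c) the spacings between eigenvalues of $\mathcal H_I$ and of $\mathcal H_{I(x,\ell)}$ associated to modes centered in $I(x,3\ell/4)$ exceed $\varepsilon_0$. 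Then $G$ holds with probability at least $1 - \Const e^{-c\ell}$: use Markov's inequality on $\E(A_{\cdot,x}) \leq \Const$ from Theorem~\ref{thm: complete loc} for (a); apply \eqref{eq: condition 1 Z} at each dyadic scale and sum geometrically for (b); and invoke Theorem~\ref{thm: denominator} with $m=2$ for (c).

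On $G$ the far contribution is estimated by a dyadic decomposition. In the $j$-th shell of tuples with $R(x,k) \in [2^j\ell/2, 2^{j+1}\ell/2]$, the number of admissible tuples is polynomial in $2^j \ell$ (using the bound $W(y) \leq \Const(1+y^2)$ on the number of eigenmodes per localization center), the numerators are bounded by $\Const M^{\kappa_3}\, e^{-\kappa_4 2^j \ell / 2}$ via \eqref{eq: condition 2 Z}, and the denominators are at least $\varepsilon_{j+1}$. A choice of $\delta$ small relative to $\kappa_4$ makes the geometric sum bounded by $\Const e^{-c\ell}$, controlling both $|Z_I^{\mathrm{far}}(x)|$ and $|Z_{I(x,\ell)}^{\mathrm{far}}(x)|$. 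For the near contribution, standard Combes--Thomas / SULE arguments, combined with the level-spacing event (c), assign to each eigenmode $\psi_k$ of $\mathcal H_I$ with center $y \in I(x,\ell/2)$ a \emph{unique} matching eigenmode $\psi_{k'}$ of $\mathcal H_{I(x,\ell)}$ with the same localization center and satisfying $|\nu_k^2 - \nu_{k'}^2| + \|\psi_k - \psi_{k'}\| \leq \Const e^{-c\ell}/\varepsilon_0$. Extending this pairing to tuples of length $d$, \eqref{eq: condition 3 Z} together with \eqref{eq: bounds on Dr} imply that each paired term in the near sum differs by at most $\Const M^{\kappa_5 + \kappa_6}\, \varepsilon_0^{-(2+\kappa_7)}\, e^{-c'\ell}$, and summing over the polynomially many near tuples yields $|Z_I^{\mathrm{near}}(x) - Z_{I(x,\ell)}^{\mathrm{near}}(x)| \leq \Const e^{-c''\ell}$ after a further reduction of $\delta$.

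The main obstacle is the interplay between the two exponentially small scales: the Combes--Thomas approximation rate $e^{-c\ell}$ and the denominator lower bound $\varepsilon_0 \sim e^{-\delta\ell}$. Each division by a denominator inflates the error by $e^{\delta\ell}$, and \eqref{eq: condition 3 Z} introduces the additional factor $\varepsilon_0^{-\kappa_7}$ through the Lipschitz estimate on $N$, so one must carefully verify that $\delta$ can be chosen small enough for the final exponent to be positive. A secondary subtlety is that $I$ may be infinite, which prevents a uniform lower bound on all denominators; the dyadic decomposition of the far region is precisely what absorbs this issue at the cost of only a logarithmic loss that is easily beaten by the exponential decay of $N$.
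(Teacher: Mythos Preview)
Your approach is essentially the paper's: split into matched (near) and unmatched (far) contributions on a good event combining bounds on $A_{\cdot,x}$, denominator lower bounds, and level spacing; use eigenmode perturbation together with \eqref{eq: condition 3 Z} and \eqref{eq: bounds on Dr} for the matched difference, and \eqref{eq: condition 2 Z} for the unmatched tails. The paper packages the multi-scale control through the single random scale $\ell_0 = \max\{\ell'+1 : T_{\ell'} \text{ fails}\}$ (on $\{\ell_0 \le \ell\}$ one has $T_{\ell'}$ for every $\ell' \ge \ell$), rather than an explicit dyadic decomposition, but the two organizations are equivalent in spirit.

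There is, however, a gap in your far-region denominator control. Your condition (b) bounds $\min D_{d,r}(k)$ over $k \in \mathcal Z^c(I_j,x,d,r)$, i.e.\ over eigenmodes of $\mathcal H_{I_j}$. But the far terms of $Z_I(x)$ involve $I$-eigenmodes, and their denominators are evaluated at $I$-eigenvalues, not $I_j$-eigenvalues. To transfer the lower bound from $I_j$ to $I$ you need the same eigenmode-matching argument you invoke for the near part, which in turn requires a level-spacing bound on $\mathcal H_{I_j}$ to make the match unique; your condition (c) only carries spacing for $I$ and $I(x,\ell)$, not for the intermediate $I_j$. The fix is immediate: add a Minami condition $\Delta_{I_j} \ge e^{-\gamma\, 2^j \ell}$ to $G$ at each dyadic scale (the probabilities still sum geometrically) and run the transfer at scale $j$ exactly as you do at scale $0$. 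This is precisely what the paper's event $\{\ell_0 \le \ell\}$ supplies automatically, since $T_{\ell'}$ includes both the denominator bound and the Minami spacing at every scale $\ell' \ge \ell$.
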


\subsection{Proof of Proposition \ref{thm: integrability} and Proposition \ref{thm: local approx}}

As displayed in \eqref{eq: form of Z(x)}, $Z_I(x)$ is a sum over $r$ terms, with $r\le m$. 
Since we treat $m$ as a constant, 
it is enough to prove Propositions \ref{thm: integrability} and \ref{thm: local approx} for each given value of $r$ separately. 
In addition, we will restrict ourselves to the case $I=\Z$ and $x=0$. 
Proving the results for finite intervals $I\subset\Z$ and arbitrary values of $x\in I$ would be cumbersome due to boundaries but follows precisely the same lines.
For simplicity, we will omit $I,x=0,r$ from most of our notation and work with 
$$
    Z \; = \; \sum_{k\in \mathcal Z_\Z} \frac{N(k)}{D(k)}
    \qquad \text{and} \qquad
    Z_{\Lambda_\ell} \; = \; \sum_{k \in \mathcal Z_{\Lambda_\ell}} \frac{N(k)}{D(k)}
$$
with $\ell\in\N^*$.
The second expression enters only in the proof of Proposition~\ref{thm: local approx}.

Given $\ell \in \N^*$, let us define the smallest level spacing $\Delta_\ell$ of $\mathcal H_{\Lambda_\ell}$: 
\begin{equation}\label{eq: minimal level spacing}
    \Delta_\ell 
    \; = \; 
    \min\left\{|\nu_{k_1}^2 - \nu_{k_2}^2| : 1 \le k_1 \ne k_2 \le |\Lambda_\ell|, \nu_{k_1}^2,\nu_{k_2}^2 \in \mathrm{Spec}(\mathcal H_{\Lambda_\ell})\right\}.
\end{equation}
Given $\gamma_0,\gamma_1,\gamma_2 > 0$, we define the event 
\begin{equation}\label{eq: def of T ell}
    T_\ell 
    \; = \; 
    \left\{ \min_{k\in \mathcal Z^{(2)}_{\Lambda_\ell}}
    D(k) \ge \ed^{-\gamma_0 \ell}, \quad \Delta_\ell \ge \ed^{-\gamma_1 \ell}, 
    \quad A_{\ell} \le \ed^{\gamma_2 \ell}, \quad A_{\Z} \le \ed^{\gamma_2 \ell}\right\}
\end{equation}
where we have written $A_\ell$ for $A_{\Lambda_\ell, 0}$ and $A_\Z$ for $A_{\Z,0}$. 
By Property~\eqref{eq: condition 1 Z}, by Minami's estimate \eqref{eq: minami},
and by the fact that $\E(A_\ell),\E(A_\Z) \le \Const$ for some constant $\Const$, we deduce using Markov inequality that
\begin{equation}\label{eq: bound proba T l c}
    \Proba(T_\ell^c) \; \le \; \Const \ell^{2+\kappa_1} \ed^{-\min\{\kappa_2\gamma_0,\gamma_1,\gamma_2\}\ell}. 
\end{equation}

\begin{proof}[Proof of Proposition~\ref{thm: integrability}]
Our aim is to get a bound on the expression~\eqref{eq: form of Z(x)} with $I=\Z$. 
For this, let us introduce the random variable $\ell_0$ defined as 
\begin{equation}\label{eq: definition of l not}
	\ell_0\; = \; \max \{ \ell+1 \,:\,  T_{\ell} \text{ is not true} \} \vee \Const_0,
\end{equation}
where $\Const_0$ is a constant to be fixed later.
First, using the bound~\eqref{eq: condition 2 Z} together with the definition of $T_\ell$, we find that
\begin{equation}\label{eq: bound on N E}
    N(k) 
    \; \leq \; 
    \Const A_\Z^{\kappa_3} \ed^{-\kappa_4 R(0,k)} 
    \; \le \; 
    \Const \ed^{\gamma_2 \kappa_3 \ell_0 - \kappa_4 R(0,k)}
\end{equation}
for any $k\in(\N^*)^d$.

Second, let $\ell\ge \ell_0$ and let $k = (k_1,\dots,k_d)\in \mathcal Z^c_\Z$ be such that $R(0,k)\le \ell/2$ for some $\ell\ge \ell_0$.
We may invoke Lemma~\ref{lem: from global to local eigenpairs} in Appendix~\ref{sec: localization}: 
For $\ell$ large enough, there exist eigenpairs $(\psi_{k'_1},\nu^2_{k'_1})$, $\dots$, $(\psi_{k'_d},\nu^2_{k'_d})$ of $\mathcal H_{\Lambda_\ell}$ such that  
$$
    \left( \sum_{j=1}^d |\nu^2_{k_j} - \nu^2_{k'_j}|^2 
    + \|\psi_{k_j} - \psi_{k_j'} \|^2 \right)^{1/2} 
    \;\le\; 
    \Const \ed^{-\beta\ell} 
$$
for 
\begin{equation}\label{eq: constraint 1}
	4\beta = 1/3\xi - 2\gamma_1 - \gamma_2,
\end{equation} 
where $\xi$ is the inverse localization length featuring in Theorem~\ref{thm: complete loc}.
We assume that $\gamma_1,\gamma_2$ are small enough so that $\beta > 0$. 
Moreover, $k'=(k_1',\dots,k_d')\in\mathcal Z_{\Lambda_\ell}$.
Indeed, first, $k\in\mathcal Z_\Z^{(1)}$ implies $k'\in\mathcal Z_{\Lambda_\ell}^{(1)}$ by the definition of $\mathcal A(d)$ above~\eqref{eq: Z 1 Z 2}. 
Second, if $k\in\mathcal Z_\Z^{(2)}$ then $k'\in\mathcal Z_{\Lambda_\ell}^{(2)}$,
since otherwise two distinct eigenpairs in the $d$-tuple indexed by $(k_1,\dots,k_d)$ would have been mapped by Lemma~\ref{lem: from global to local eigenpairs} to the same eigenpair, which is impossible.

Using the condition \eqref{eq: bounds on Dr}, we obtain
\begin{equation}\label{eq: bound on ener diff}
	|D(k') - D(k)| \; \le \; \Const \ed^{-\kappa_0\beta \ell}
\end{equation}
with $k' = (k_1',\dots,k_d')$.
Hence, using the first condition in the definition \eqref{eq: def of T ell} of $T_\ell$ together with this bound, we deduce that 
\begin{equation}\label{eq: obtained bound on 1/D}
	\frac1{D(k)} \; \le \; \Const \ed^{\gamma_0 \ell}
\end{equation}
provided that $\gamma_0$ is chosen small enough such that 
\begin{equation}\label{eq: constraint 2}
	\gamma_0 \; < \; \kappa_0\beta
\end{equation}
and provided $\ell$ is large enough, which is guaranteed by taking $\Const_0$ large enough in the definition \eqref{eq: definition of l not} of $\ell_0$. 

Inserting the bounds \eqref{eq: bound on N E} and \eqref{eq: obtained bound on 1/D} into \eqref{eq: form of Z(x)} with $I=\Z$ yields
\begin{align*}
	Z
	\; &\le \; 
	\sum_{k: R(0,k)\leq \ell_0/2}  \frac{N(k)}{D(k)}
	+ \sum_{\ell=\ell_0+1}^{\infty}  \sum_{k: R(0,k)=\lfloor\ell/2\rfloor}  \frac{N(k)}{D(k)} \\
	& \; \le \; 
	\Const \sum_{k: R(0,k)\leq \ell_0/2} \ed^{(\gamma_2 \kappa_3 + \gamma_0) \ell_0}
	+ 
	\Const \sum_{\ell=\ell_0+1}^{\infty}  \sum_{k: R(0,k)=\lfloor\ell/2\rfloor} \ed^{\gamma_2 \kappa_3 \ell_0 + \gamma_0 \ell - \kappa_4 R(0,k)}\\
	& \; \le \; 
	\Const \ell_0^{3m}\ed^{(\gamma_2 \kappa_3 + \gamma_0) \ell_0} + \Const \sum_{\ell \ge 1} \ell^{3m} \ed^{-(\kappa_4/2 - \gamma_2 \kappa_3 - \gamma_0)\ell}
\end{align*}
where the last estimate follows from the fact that the number of eigenstates with localization center at distance $x$ from the origin grows at most quadratically with $x$, 
as stated at the beginning of this section. 
The second term in the right hand side of the above expression will be bounded by a constant provided that
\begin{equation}\label{eq: constraint 3}
	\kappa_4/2 - \gamma_2 \kappa_3 - \gamma_0 > 0. 
\end{equation}
We assume that this condition holds. 
Moreover, from the definition \eqref{eq: definition of l not} of $\ell_0$, the bound \eqref{eq: bound proba T l c} translates into 
\begin{equation}\label{eq: bound proba l not}
	\Proba(\ell_0 > \ell) \; \le \;  \Const \ell^{2+\kappa_1} \ed^{-\min\{\kappa_2\gamma_0,\gamma_1,\gamma_2\}\ell}
\end{equation}
for any $\ell \in \N^*$. 
Hence, for any $\delta >0$, we find a constant $\Const_\delta$ such that
\begin{equation}\label{eq: constraint 4}
	\Proba(Z\ge M) 
	\; \le \;
	\frac{\Const_\delta}{M^{\min\{\kappa_2\gamma_0,\gamma_1,\gamma_2\}/(\gamma_2\kappa_3 + \gamma_0) - \delta}}. 
\end{equation}

To conclude, we have to make sure that we can select $\gamma_0$, $\gamma_1$ and $\gamma_2$ in such a way that $\beta$ defined in \eqref{eq: constraint 1} is positive, 
that the constraints \eqref{eq: constraint 2} and \eqref{eq: constraint 3} are satisfied, 
and that the exponent in \eqref{eq: constraint 4} is lager than $\kappa_2/(\kappa_2\kappa_3 + 2)$ for $\delta>0$ small enough. 
Let us choose 
$$
	\gamma_2 = \kappa_2 \gamma_0, \qquad \gamma_1 = 2 \gamma_0.
$$
It then suffices to take $\gamma_0$ small enough so that 
$$
	\gamma_0 (\kappa_2+4) \; < \; 1/3\xi, 
	\qquad 
	\gamma_0 \; < \; \kappa_0\beta, 
	\qquad 
	\gamma_0 (\kappa_2\kappa_3 + 1) \; < \; \kappa_4/2.
$$
Doing so will guarantee that the exponent in \eqref{eq: constraint 4} is equal to $\kappa_2/(\kappa_2\kappa_3 + 1) - \delta$, 
that can be made smaller than $\kappa_2/(\kappa_2\kappa_3 + 2)$ by taking $\delta$ small enough. 
\end{proof}

\begin{proof}[Proof of Proposition~\ref{thm: local approx}]
Let us remind that the event $T_\ell$ is defined in \eqref{eq: def of T ell} for $\ell\in\N^*$. 
By \eqref{eq: bound proba T l c}, $\Proba(T_\ell^c)$ decays exponentially with $\ell$. 
Hence, it suffices thus to prove that $|Z(0) - Z_{\Lambda_\ell}(0)|$ decays exponentially with $\ell$ on $T_\ell$. For the remainder of this proof, we assume that $T_\ell$ holds.

The bijection between local and global eigenpairs established in Appendix~\ref{sec: localization} right before Lemma~\ref{lem: bijection local global}, can be readily generalized to the case of $d$-tuple of eigenpairs as we need here.
Let us assume that $\gamma_1$ and $\gamma_2$ are small enough so that condition \eqref{eq: condition on beta and gamma} is satisfied, i.e.\@ $\beta > 0$, with $\beta$ defined in \eqref{eq: constraint 1}, and $12\gamma_2\leq 1/\xi$. 
Let us also define the sets
$$
    \mathbf X \; = \; \{ k \in\mathcal Z_{\Lambda_\ell} : 
    |R(0,k)| \leq \ell/2 \}, 
    \qquad 
    \mathbf Y \; = \; \{k'\in \mathcal Z_\Z : |R(0,k')| \leq 2\ell/3 \}.
$$
There exists a bijection $\boldsymbol\Phi : \mathbf X \to \Phi(\mathbf X)\subset \mathbf Y$, with $\boldsymbol\Phi = \Phi\times \dots \times \Phi$, where $\Phi$ is defined in Appendix~\ref{sec: localization} right before Lemma~\ref{lem: bijection local global}. The restrictions to $\mathcal Z_{\Lambda_\ell}$ in the definition of $\mathbf X$ and to $\mathcal Z_{\Z}$ in the definition of $\mathbf Y$ is shown to be possible using the same argument as the one just below \eqref{eq: constraint 1}. 
Moreover, by Lemma~\ref{lem: bijection local global}, if $k'\in \mathcal Z_\Z \backslash \boldsymbol\Phi(\mathbf X)$, 
\begin{equation}\label{eq: large loc center}
    |R(0,k')| \;\ge\; \ell/4.  
\end{equation}

We decompose
$$
    Z - Z_{\Lambda_\ell}
    \; = \; 
    \sum_{k\in \mathbf X} 
    \left(\frac{N(k')}{D(k')} - \frac{N(k)}{D(k)}\right)
    + \sum_{k' \notin \boldsymbol\Phi(\mathbf X)} \frac{N(k')}{D(k')}
    - \sum_{k\notin \mathbf X} \frac{N(k)}{D(k)}
$$
where the notation $k' = \boldsymbol\Phi(k)$ is used in the first sum.
Thanks to \eqref{eq: large loc center} and the definition of $\mathbf X$, 
and assuming that $\gamma_0$ and $\gamma_2$ are small enough so that \eqref{eq: constraint 2} and \eqref{eq: constraint 3} hold, we proceed as in the proof of Proposition~\ref{thm: local approx}, and conclude that the last two sums in the formula above decay exponentially with $\ell$.
To get a bound on the terms in the first sum, we rewrite
$$
    \frac{N(k')}{D(k')} - \frac{N(k)}{D(k)}
    \; = \; 
    \frac{N(k')-N(k)}{D(k')} - \frac{N(k)}{D(k')D(k)}(D(k') - D(k)).
$$
From condition~\eqref{eq: condition 3 Z} and Lemma~\ref{lem: bijection local global}, we find 
\begin{multline*}
    |N(k')-N(k)| 
    \; \le \; 
    \Const (A_{\ell} A)^{\kappa_5} 
    (1 + R(0,(k,k')))^{\kappa_6}
    \left( \sum_{j=1}^d |\nu^2_{k_j} - \nu^2_{k'_j}|^2 
    + \|\psi_{k_j} - \psi_{k_j'} \|^2 \right)^{\kappa_7/2}\\
    \; \le \; 
    \Const \ell^{\kappa_6}
    \ed^{-(\beta \kappa_7 - (\kappa_5 + \kappa_6) \gamma_2)\ell}. 
\end{multline*}
In addition, the bound \eqref{eq: bound on ener diff} on $|D(k') - D(k)|$ established in the proof of Proposition~\ref{thm: local approx} remains in force, as well as the bound \eqref{eq: obtained bound on 1/D} provided $\ell$ is large enough, where $k$ can also be replaced by $k'$. 
This yields the bound 
$$
    \left|\frac{N(k')}{D(k')} - \frac{N(k)}{D(k)}\right|
    \; \le \; 
    \Const \left(\ell^{\kappa_6}\ed^{-(\beta\kappa_7 - (\kappa_5+\kappa_6)\gamma_2 - \gamma_0)\ell} + \ed^{-(\beta - \kappa_3\gamma_2 - 2\gamma_0)\ell}\right).
$$
Since the number of terms in the sum over $k\in \mathbf X$ is polynomial in $\ell$, 
we will obtain exponential decay in $\ell$ by taking $\gamma_0$, $\gamma_1$ and $\gamma_2$ small enough. 
\end{proof}

\section{Properties of $Z$-Collections}\label{sec: properties of the Z random variables}

In all this section, we assume that a $Z$-collection $\{Z_I(x)\}$ is given, and we derive properties that will be of direct use to prove our results. 
In Subsection~\ref{subsec: strong LLN}, we derive properties that will serve mainly for the proof of Proposition~\ref{pro: assumptions theorem 1}, 
while the properties shown in Subsection~\ref{subsec: bad events} will be useful for Propositions~\ref{pro: assumptions theorem 2} and~\ref{pro: assumptions theorem 3}.

\subsection{Strong Law of Large Numbers}\label{subsec: strong LLN}
  
We remind that we write $Z(x)$ for $Z_\Z(x)$ for $x\in \Z$. 
We start by establishing decay of correlations:  

\begin{Lemma}\label{lem: exp decay} 
Let us assume that the power $\mu$ in Proposition \ref{thm: integrability} satisfies $\mu>4$. 
There exists constants $\Const,c$ such that for any $x,y\in\Z$, 
$$ 
	\left|\E(Z(x)Z(y)) - \E(Z(x)) \E(Z(y))\right| \;\le\; \Const e^{-c|x-y|}   
$$ 
\end{Lemma}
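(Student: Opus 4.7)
The plan is to exploit locality: by Proposition~\ref{thm: local approx}, $Z(x)=Z_\Z(x)$ is well approximated by its local version $Z_{I(x,\ell)}(x)$ with error exponentially small in $\ell$, both pointwise on a good event and (via Proposition~\ref{thm: integrability}) in suitable $L^p$ sense. If $\ell$ is taken of order $|x-y|/3$, the windows $I(x,\ell)$ and $I(y,\ell)$ are disjoint, and since the disorder $(\omega_z^2)_{z\in\Z}$ is i.i.d., the truncated variables $Z_{I(x,\ell)}(x)$ and $Z_{I(y,\ell)}(y)$ become \emph{independent}. This reduces the problem to controlling the truncation errors.

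Concretely, I set $\ell=\lfloor |x-y|/3\rfloor$, write $I_x=I(x,\ell)$, $I_y=I(y,\ell)$, and decompose $Z(x)=Z_{I_x}(x)+A_x$, $Z(y)=Z_{I_y}(y)+A_y$. Expanding the covariance produces four contributions. The leading one, $\E(Z_{I_x}(x)Z_{I_y}(y))-\E(Z_{I_x}(x))\E(Z_{I_y}(y))$, vanishes exactly by independence. Each of the three remaining cross terms is dominated via Cauchy--Schwarz by a product involving either $\E(A_x^2)^{1/2}$ or $\E(A_y^2)^{1/2}$, together with second moments of the $Z_I(z)$'s or of the $A_z$'s. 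All such second moments are uniformly finite by Proposition~\ref{thm: integrability}, whose hypothesis $\mu>2$ is implied by $\mu>4$.

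It remains to bound $\E(A_x^2)$ (and symmetrically for $A_y$). I split on the good event $G$ of Proposition~\ref{thm: local approx}, on which $|A_x|\le \Const \ed^{-\const \ell}$, and on its complement, of probability at most $\Const \ed^{-\const \ell}$:
$$
\E(A_x^2)\;\le\;\Const \ed^{-2\const\ell}+\E(A_x^4)^{1/2}\,\Proba(G^c)^{1/2}.
$$
Here $|A_x|\le Z(x)+Z_{I_x}(x)$ and the uniform tail bound of Proposition~\ref{thm: integrability} give $\E(A_x^4)\le \Const$ precisely because $\mu>4$. Therefore $\E(A_x^2)\le \Const \ed^{-\const' \ell}$ for some $\const'>0$, and combining this with $\ell\sim |x-y|/3$ in the cross terms yields the claimed exponential decay. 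The regime of small $|x-y|$ is handled trivially by Cauchy--Schwarz and uniform $L^2$ boundedness.

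The main obstacle is really the role of the hypothesis $\mu>4$: it is exactly what lets Cauchy--Schwarz on the bad event convert Proposition~\ref{thm: local approx}'s high-probability pointwise bound into an $L^2$ bound on $A_x$. I expect this to be the technical crux, since everything else is routine bookkeeping once independence of the truncated variables is in place.
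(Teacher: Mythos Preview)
Your proof is correct and follows essentially the same approach as the paper: approximate $Z(x),Z(y)$ by their local versions on disjoint windows of width $\ell\sim |x-y|/3$, use independence of the local versions, and control the error via Cauchy--Schwarz together with the fourth-moment bound coming from $\mu>4$. The only cosmetic difference is that the paper works with the indicator $\chi(T)$ of the good event directly inside the expectations, whereas you package the error as $A_x=Z(x)-Z_{I_x}(x)$ and bound $\E(A_x^2)$ once; both routes invoke exactly the same ingredients (Propositions~\ref{thm: integrability} and~\ref{thm: local approx}) in the same way.
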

\begin{proof} 
Let $\ell = |x-y|/3$, and  let $T$ be the event such the conclusion of Proposition~\ref{thm: local approx} holds for both $Z(x)$ and  $Z(y)$. 
We have $\Proba(T^c)\leq \Const e^{-c\ell}$.  
Let us denote by $Z_x(x)$ and $Z_y(y)$ the variables $Z_{I(x,\ell)}(x)$ and $Z_{I(y,\ell)}(y)$ respectively. 
We notice that $Z_x(x)$ and $Z_y(y)$ are independent. 
Then 
\begin{align}
\E(Z(x)Z(y)) \; = &\;
\E(Z(x)Z(y)(1-\chi(T)) )+\E(Z(x)Z(y)\chi(T) ))  \nonumber \\
\; = & \; 
\E(Z(x)Z(y)(1-\chi(T)) )+\E(Z_{x})(x)Z_{y}(y)\chi(T) )) + O(e^{-c \ell}) \nonumber \\
\; = & \; 
\E(Z(x)Z(y)(1-\chi(T)) )-
\E(Z_{x})(x)Z_{y}(y)(1-\chi(T) )) \nonumber \\
& \; + \E(Z_{x})(x)Z_{y}(y))+  O(e^{-c \ell})  \label{eq: splitting t}  
\end{align}  
The first term in the right hand side of \eqref{eq: splitting t} is estimated using twice Cauchy-Schwartz: 
$$
 	\E(Z(x)Z(y)(1-\chi(T)) ) \leq \E((1-\chi(T)))^{1/2} 
  	\E((Z(x)^4)^{1/4} \E((Z(y)^4)^{1/4} 
$$
Since we assumed that $\mu>4$ in Proposition~\ref{thm: integrability}, 
$\E(Z(x)^4),\E(Z(y)^4)<+\infty$, and so this expression is bounded by $\Const e^{-c\ell}$.  
The second term is estimated analogously. 
In the third term, the two variables are independent and hence this term equals 
\begin{align*}
 	\E(Z_{x}(x))) \E(Z_{y}(y) )) & =
	( \E(Z(x)))+  O(e^{-c\ell})) \times ( \E(Z(y)  +  O(e^{-c\ell})) \\
	& =
 	\E(Z(x)) \E(Z(y)  +  O(e^{-c\ell}))
\end{align*}  
This proves the Lemma. 
\end{proof} 

We deduce the strong law of large numbers from the decay of correlations: 
\begin{Proposition}\label{pro: lln}
Let us assume that the power $\mu$ in Proposition \ref{thm: integrability} satisfies $\mu>4$. 
There exists a constant $\mathrm C$ such that 
$$
\limsup_{L}\frac{1}{|\Lambda_L|}\sum_{x\in\Lambda_L} Z_{\Lambda_L}(x) \; \le \; \Const \qquad \text{a.s.}
$$
\end{Proposition}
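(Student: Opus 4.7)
My plan is to reduce the finite-volume average $\frac{1}{|\Lambda_L|}\sum_{x\in\Lambda_L} Z_{\Lambda_L}(x)$ to the infinite-volume one $\frac{1}{|\Lambda_L|}\sum_{x\in\Lambda_L} Z(x)$, where $Z(x) = Z_\Z(x)$, and to apply Birkhoff's pointwise ergodic theorem to the latter. The reduction exploits Proposition~\ref{thm: local approx}: both $Z_{\Lambda_L}(x)$ and $Z(x)$ are well approximated by the common local variable $Z_{I(x,\ell)}(x)$ as soon as $x$ lies at distance at least $\ell$ from the boundary of $\Lambda_L$.

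For the ergodic step, I would observe that since $(\omega_x^2)_{x\in\Z}$ is i.i.d., the lattice shift acts ergodically; the process $(Z(x))_{x\in\Z}$ is stationary by translation covariance of the construction of $\mathcal H_\Z$; and $Z(0) \in L^1$ by Proposition~\ref{thm: integrability} applied with any $p\in(1,\mu)$, since we assume $\mu > 4$. Birkhoff's theorem then gives
$$
\frac{1}{|\Lambda_L|}\sum_{x\in\Lambda_L} Z(x) \;\longrightarrow\; \E(Z(0)) \quad \text{a.s.,}
$$
a finite deterministic constant.

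Next, to control the discrepancy $\frac{1}{|\Lambda_L|}\sum_x |Z_{\Lambda_L}(x) - Z(x)|$, I would set up a two-scale decomposition. Fix $\ell_L = \alpha \log L$ with $\alpha > 2/c$ (where $c$ is the constant from Proposition~\ref{thm: local approx}), and split $\Lambda_L$ into its $\ell_L$-interior $S_L = \{x \in \Lambda_L : [x-\ell_L, x+\ell_L] \subset \Lambda_L\}$ and the boundary strip $\partial_L = \Lambda_L \setminus S_L$. For $x \in S_L$, the set $I(x,\ell_L) = [x-\ell_L, x+\ell_L]$ is the same whether taken inside $\Z$ or inside $\Lambda_L$; applying Proposition~\ref{thm: local approx} twice and using the triangle inequality gives $|Z_{\Lambda_L}(x) - Z(x)| \le 2\Const e^{-c \ell_L}$ with per-site failure probability at most $2\Const e^{-c\ell_L}$. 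A union bound shows that the failure probability for some $x \in S_L$ is at most $2\Const|\Lambda_L| e^{-c\ell_L} = O(L^{1-c\alpha})$, which is summable, and Borel--Cantelli yields that the interior contribution is a.s.\ $O(e^{-c\ell_L}) = o(1)$. For the boundary strip, which carries only $|\partial_L| = O(\log L)$ sites, the uniform $L^2$ bound from Proposition~\ref{thm: integrability} (valid because $\mu > 4 > 2$) together with Minkowski's inequality gives $\E\bigl[\bigl(\sum_{x\in\partial_L}(Z_{\Lambda_L}(x) + Z(x))\bigr)^2\bigr] = O((\log L)^2)$; Chebyshev and another use of Borel--Cantelli then show that the boundary contribution also tends to $0$ a.s.

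The main subtlety I anticipate is the calibration of $\ell_L$ in the two-scale decomposition: it must grow fast enough that the union-bound failure probability $|\Lambda_L| e^{-c\ell_L}$ is summable in $L$, yet slowly enough that the boundary $\partial_L$ of cardinality $O(\ell_L)$ remains a vanishing fraction of $|\Lambda_L|$. The logarithmic choice $\ell_L = \alpha \log L$ with $\alpha > 2/c$ sits comfortably in the sweet spot. Note that Lemma~\ref{lem: exp decay} is not strictly needed here, although it provides an alternative variance-based route that would bypass Birkhoff's theorem entirely.
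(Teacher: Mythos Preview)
Your approach is correct and takes a genuinely different route from the paper. The paper does \emph{not} invoke Birkhoff's theorem; instead it first establishes exponential decay of correlations for $(Z(x))_{x\in\Z}$ (Lemma~\ref{lem: exp decay}, which needs $\mu>4$) and then applies a strong law of large numbers for weakly correlated sequences (Lyons' theorem, cited as \cite{rlyons}) to the normalized variables $(Z(x)-\E Z(x))/\sqrt{\mathrm{Var}\,Z(x)}$. The boundary reduction in the paper is similar in spirit to yours, though with a cruder layer of width $L^{1/2}$ rather than your $\alpha\log L$.

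What each approach buys: your Birkhoff argument is conceptually cleaner and needs only $Z(0)\in L^1$, so in principle $\mu>1$ would suffice for the ergodic step; the paper's correlation-decay route genuinely needs higher moments (it uses $\E(Z(x)^4)<\infty$ via Cauchy--Schwarz in the proof of Lemma~\ref{lem: exp decay}). On the other hand, your argument tacitly assumes that $(Z(x))_{x\in\Z}$ is a stationary process. This is true for every concrete $Z$-collection constructed in the paper, since those are built in a translation-covariant way from the i.i.d.\ disorder; but it is \emph{not} a formal consequence of Definition~\ref{def: Z collection}, which places no translation-symmetry constraint on the numerators $N_{I,x,d,r}$. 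The paper's route avoids this issue: it only uses the uniform-in-$x$ moment bound from Proposition~\ref{thm: integrability} and the local approximation from Proposition~\ref{thm: local approx}, both of which hold for arbitrary $Z$-collections. So your proof is valid for the $Z$-collections that actually arise, but to match the generality of Proposition~\ref{pro: lln} as stated you would either need to add stationarity as a hypothesis or fall back on the correlation-decay route you mention in your final sentence.
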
 

\begin{proof}
First, let us show that
$$
	\frac1{|\Lambda_L|}\sum_{x\in\Lambda_L} \left(Z_{\Lambda_L}(x) - Z(x)\right) \; \to \; 0 \qquad \text{a.s.}
$$
We can first get rid of boundary terms by noting that the sum over $x\in\Lambda_L$ may be replaced by a sum over $x$ such that $|x|\le L-L^{1/2}$, 
up to an error that vanishes almost surely in the limit $L\to\infty$, thanks to the Borel-Cantelli lemma and the condition $\mu > 4$. 
By the Borel-Cantelli lemma, it is then enough to show that for any $\varepsilon > 0$
\begin{equation}\label{eq: second limit lln}
	\sum_{L\ge 1}\Proba \left( \frac1{|\Lambda_L|}\left|\sum_{x:|x|\le L-L^{1/2}}Z_{\Lambda_L}(x) - Z(x)\right| \ge \varepsilon \right) \; < \; + \infty.
\end{equation}
Given $\varepsilon > 0$, Proposition~\ref{thm: local approx} implies that for sufficiently large $L$ relative to $\varepsilon$,  
\begin{align*}
	\Proba \left( \frac1{|\Lambda_L|}\left|\sum_{x:|x|\le L-L^{1/2}}Z_{\Lambda_L}(x) - Z(x)\right| \ge \varepsilon \right)
	\; &\le\; 
	\sum_{x:|x|\le L-L^{1/2}} \Proba(|Z_{\Lambda_L}(x)-Z(x)|\ge \varepsilon)\\
	\; &\le\; \Const\sum_{x\in\Lambda_L} \ed^{-cL^{1/2}} \; \le \; \Const L \ed^{-cL^{1/2}} .
\end{align*}
Hence \eqref{eq: second limit lln} holds. 

Second, let
$$
	z\; = \; \sup_{x\in\Z} \E[Z(x)], \qquad   \sigma \; = \; \sup_{x\in\Z} (\mathrm{Var}(Z(x)))^{1/2}%(\E [X(x)^2])^{1/2}
$$
with the notation $\mathrm{Var}(Y) = \E((Y - \E(Y))^2)$, 
and note that $z$ and $\sigma^2$ are bounded uniformly in $L$, by Theorem \ref{thm: integrability}. 
We write then 
$$
	\frac1{|\Lambda_L|}\sum_x Z(x) \; \leq \; z + \frac{\sigma}{|\Lambda_L|} \sum_{x\in\Lambda_L} \frac{Z(x) - \E(Z(x))}{(\mathrm{Var}(Z(x)))^{1/2}} 
	%|\Lambda_L| z +  \sigma  \sum_x  \frac{1}{\sigma} X(x)
$$
and the strong law of large numbers follows from the decay of correlations established in Lemma~\ref{lem: exp decay} and Theorem~6 in \cite{rlyons}. 
\end{proof}

\subsection{Control on Consecutive Bad Events}\label{subsec: bad events}
We introduce random weights $w_L^\pm$ 
that will allow us to control the random variable $\ell_0$ introduced in Proposition~\ref{pro: assumptions theorem 2} 
and the weight $w$ in Proposition~\ref{pro: assumptions theorem 3}.
Given $M> 0$ and $x\in\Lambda_L$, let 
\begin{align}
	&w_L^+(x,M) \; = \; \min\{|x-y|^2 : y > x, |Z_{\Lambda_L}(y)| \le M\}, \label{eq: def w+L}\\
	&w_L^-(x,M) \; = \; \min\{|x-y|^2 : y < x, |Z_{\Lambda_L}(y)| \le M\}, \label{eq: def w-L}
\end{align}
with the convention $Z_{\Lambda_L}(-L-1) = Z_{\Lambda_L}(L+1)=0$. 
We define similarly the weights $w^\pm$ in infinite volume, i.e.\@ with $\Lambda_L$ replaced by $\Z$ in the definitions above. 

We first prove
\begin{Lemma}\label{pro: consecutive bad events}
There exist constants $\Const,c$ and $M_0$ such that, for $M>M_0$,
	$$
		\Proba(w_L^\pm(x,M) > \ell^2) \;\le\; \Const \ed^{-c \ell^{1/2}} %, 
		%\qquad 
		%\Proba(w^\pm(x,M) > \ell^2) \;\le\; \Const \ed^{-c \ell^{1/2}}
	$$
    for all $\ell \ge 0$ and for all $x\in\Lambda_L$. 
    The same bound holds for $w^\pm_L$ replaced by $w^\pm$.
\end{Lemma}
\begin{proof}
Let $M>0$.
We present the proof for $w^+(0,M)$, as the other cases are analogous. 
It suffices to prove the claim for integers $\ell\ge \ell_0$ for some fixed $\ell_0$.
We first bound 
$$
	\Proba(Z(1)>M,\dots Z(\ell)>M) \; \le \; \Proba(Z(x)>M,x\in \mathcal J(\ell))
$$
with $\mathcal J(\ell)$ the largest set of points in between $1$ and $\ell$ such that the distance between any two points is at least $\sqrt{\ell}$. 
Let us also consider the event 
$$
	T \; = \; \bigcap_{x\in\mathcal J(\ell)} T_x
$$
with $T_x$ the event such that the bound of Proposition~\ref{thm: local approx} holds {with $\ell$ replaced by $(1/3)\sqrt{\ell}$}.
By Proposition~\ref{thm: local approx}, it holds that 
\begin{equation}\label{eq: proof sub exp 1}
	\Proba(T^c) \; \le \; \Const \ell^{1/2} \ed^{-c\ell^{1/2}}.
\end{equation}
We now estimate  
$$
	\Proba(w^+(0,M) > \ell^2) \;\le\; \Proba(T^c) + \Proba(  Z(x)>M, x\in \mathcal J(\ell) \, | \, T) \Proba(T).
$$
To get a bound on the second term, we notice that, on $T$, 
$$
	Z(x) > M \quad \Rightarrow \quad Z_{I(x,{\frac{1}{3}\sqrt{\ell}})}(x) > M-1
$$
for all $x\in \mathcal J(\ell)$ provided $\ell_0$ is large enough, and that the variables $(Z_{I(x,\frac13\sqrt\ell)}(x))_{x\in\mathcal J(\ell)}$ are independent. Hence, for $M$ large enough, Proposition \ref{thm: integrability} implies 
\begin{equation}\label{eq: proof sub exp 2}
	\Proba(Z(x)>M,x\in \mathcal J(\ell) \, | \, T) \Proba(T) 
	\; \le \; 
	\Proba\left(Z_{I(x,{\frac{1}{3}\sqrt{\ell}})}(x) > M-1,x\in\mathcal J(\ell)\right)
	\; \le \; 
	\Const \ed^{-c\ell^{1/2}}.
\end{equation}
Combining \eqref{eq: proof sub exp 1} and \eqref{eq: proof sub exp 2} yields the claim. 
\end{proof}

It will be useful to consider weights $\overline w_L^\pm$ and $\overline w^\pm$ for which the constraint of being smaller than $M$ is mollified.
Let us define $\overline w^+$ explicitly.
By definition, for $M>0$ and $x\in\Z$, we have
$$
    w^+(x,M) 
    \; = \; 
    \sum_{y>x} |y-x|^2  
    \left(\prod_{x<y'<y} \chi_{Z(y') > M}\right) \chi_{Z(y) \le M}
$$ 
with the convention that the empty product is 1.
Let us introduce  smoothened indicator functions $\varphi_<$ and $\varphi_>$ satisfying $0 \leq \varphi_{<} \leq 1$ and $ 0 \leq \varphi_{>} \leq 1$, as well as
$$
\varphi_< (u,M)\; = \; 
\begin{cases}  1 &  \text{if} \,\,  u\le M \\   0 & \text{if} \,\,  u \geq M+1  \end{cases}, \qquad 
\varphi_> (u,M) \; = \; \begin{cases}  0 &  \text{if} \,\, u\le M-1 \\   1 & \text{if} \,\, u \geq M  \end{cases} 
$$
for all $u\in\R$.
We define 
\begin{equation}\label{eq: smooth w}
    \overline w^+(x,M) 
    \; = \; 
    \sum_{y>x} |y-x|^2 \left(\prod_{x<y'<y} \varphi_>(Z(y'),M)\right) \varphi_< (Z(y),M).
\end{equation}
We observe that 
\begin{equation}\label{eq: easy bound w w bar}
	w^+(x,M) \;\le\; \overline w^+(x,M).
\end{equation}
We can analogously define $\overline w^-$ and $\overline w_L^\pm$, and the property analogous to \eqref{eq: easy bound w w bar} will hold as well.

A statement analogous to Lemma~\ref{pro: consecutive bad events} holds for the mollified versions of our weights: 
\begin{Lemma}\label{pro: consecutive bad events overline}
There exist constants $\Const,c$ and $M_0$ such that, for $M>M_0$,
	$$
		\Proba(\overline w_L^\pm(x,M) > \ell^2) \;\le\; \Const \ed^{-c \ell^{1/3}} %, 
		%\qquad 
		%\Proba(w^\pm(x,M) > \ell^2) \;\le\; \Const \ed^{-c \ell^{1/2}}
	$$
    for all $\ell \ge 0$ and for all $x\in\Lambda_L$. 
    The same bound holds for $\overline w^\pm_L$ replaced by $\overline w^\pm$.
    %for the first bound, and for all $x\in\Z$ for the second bound.
\end{Lemma}
\begin{proof}
Let us prove the claim for $w^+$; the other cases are analogous. 
Let $x\in\Z$ and $M>0$.
We observe that, if $w^+ (x,M) \le \ell^2$ for some integer $\ell\ge 1$, 
then at most the $\ell$ first terms in the definition \eqref{eq: smooth w} of $\overline w^+(x,M+1)$ are non-zero, hence $\overline w^+(x,M+1)\le \ell^3$. 
Therefore
\begin{equation}\label{eq: bound probability w bar not bar}
    \Proba(\overline w^+(x,M+1)>\ell^3) \; \le \; \Proba(w^+(x,M)> \ell^{2}).
\end{equation}
The proof then follows from Lemma~\ref{pro: consecutive bad events}. 
\end{proof}

We can now prove decay of correlations for the variables $\overline w^\pm$:  
\begin{Lemma}\label{pro: decay of correlations w}
    There exist constants $\Const,c$ and $M_0$ such that, for all $M\ge M_0$ and for all $x,y\in\Z$, 
    $$
    |\E(\overline w^\pm(x,M) \overline w^\pm (y,M)) - \E(\overline w^\pm(x,M))\E(\overline w^\pm (y,M))|
    \; \le \; 
    \Const \ed^{- c |x-y|^{1/2}}.
    $$
\end{Lemma}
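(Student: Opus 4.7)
The strategy follows the spirit of Lemma~\ref{lem: exp decay}: approximate $\overline{w}^+(x)$ and $\overline{w}^+(y)$ by random variables that depend only on disorder in disjoint intervals, and control the resulting approximation errors. The situation is however more delicate because $\overline{w}^+(x)$ is an infinite sum whose terms depend, through the values $Z(\cdot)$, on disorder over all of $\Z$. Without loss of generality I treat $\overline{w}^+$ with $x<y$, and set $d=|x-y|$, $K=\lfloor d/4\rfloor$, $r=K$.

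Define the \emph{truncated–localized} version
\[
\tilde w^+(x) \; = \; \sum_{y'=x+1}^{x+K} |y'-x|^2 \Big(\prod_{x<y''<y'} \varphi_>(Z_{I(y'',r)}(y''))\Big)\, \varphi_<(Z_{I(y',r)}(y')),
\]
and analogously $\tilde w^+(y)$. Each depends only on the disorder in $[x-r, x+K+r]\subset[x-d/4,x+d/2]$ (respectively $[y-d/4,y+d/2]$), and these two intervals are disjoint since $d>3d/4$. Hence $\tilde w^+(x)$ and $\tilde w^+(y)$ are independent. The standard decomposition
\[
\E[\overline w^+(x)\overline w^+(y)]-\E[\overline w^+(x)]\E[\overline w^+(y)] \; = \; \E[(\overline w^+(x)-\tilde w^+(x))\,(\overline w^+(y)-\E \overline w^+(y))] + \E[(\tilde w^+(x)-\E\tilde w^+(x))\,(\overline w^+(y)-\tilde w^+(y))]
\]
then reduces the claim, via Cauchy–Schwarz and the uniform $L^2$ bound on $\overline w^+$ provided by \eqref{eq: extension consecutive bad events}, to an estimate of the form $\|\overline w^+(\cdot)-\tilde w^+(\cdot)\|_2\le \Const\,\ed^{-c\, d^{1/2}}$.

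The main observation, and the point where the $1/2$ exponent is produced, is that the truncation error $\overline w^+(x)-\overline w^{+,(K)}(x)=\sum_{y'>x+K}|y'-x|^2 P_{x,y'}$ can only be non-zero on the event $\{Z(y'')>M-1\ \forall y''\in(x,x+K]\}$, since the factor $\varphi_>(Z(y''))$ would otherwise vanish for some $y''\le x+K$. This event is contained in $\{w^+(x,M-1)>K^2\}$, whose probability is bounded by $\Const\,\ed^{-c\,K^{1/2}}$ thanks to Lemma~\ref{pro: consecutive bad events} (with $M$ replaced by $M-1$). Applying Hölder's inequality together with the polynomial-in-$p$ bound on $\E[(\overline w^+)^{2p}]$ that follows from \eqref{eq: extension consecutive bad events}, one obtains $\|\overline w^+(x)-\overline w^{+,(K)}(x)\|_2\le \Const\,\ed^{-c'\,K^{1/2}}$. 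Note that this rate is crucially \emph{better} than the tail bound $\ed^{-cK^{1/6}}$ available for $\overline w^+$ itself, since we are not using the tail of $\overline w^+$ but the tail of the \emph{unsmoothed} $w^+$.

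The localization error $\overline w^{+,(K)}(x)-\tilde w^+(x)$ is easier. By the Lipschitz character of $\varphi_>$ and $\varphi_<$ one has $|P_{x,y'}(Z)-P_{x,y'}(Z_{I(\cdot,r)})|\le\Const\sum_{z\in(x,y']}|Z(z)-Z_{I(z,r)}(z)|$, and after summation with weights $|y'-x|^2\le K^2$ and union-bounding the event of Proposition~\ref{thm: local approx} over the at most $K$ relevant sites, one gets a deterministic bound $|\overline w^{+,(K)}(x)-\tilde w^+(x)|\le\Const K^5\ed^{-cr}$ on an event of probability $\ge 1-\Const K\ed^{-cr}$, while on the complement the trivial deterministic bound $\overline w^{+,(K)}(x)+\tilde w^+(x)\le 2K^3$ applies. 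With $r=K$ this gives $\|\overline w^{+,(K)}(x)-\tilde w^+(x)\|_2\le\Const\,\ed^{-c'\,K}$, which is negligible compared with the truncation error. Combining the two estimates with $K=\lfloor d/4\rfloor$ yields the desired bound $\ed^{-c\,|x-y|^{1/2}}$. The main technical obstacle is precisely the one addressed in the third paragraph: using the sharper tail of $w^+$, rather than the weaker one of $\overline w^+$, to gain the correct exponent $1/2$.
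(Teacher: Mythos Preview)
Your argument is correct and follows the same strategy as the paper: truncate the sum defining $\overline w^+$ using the tail of $w^+(\cdot,M-1)$ from Lemma~\ref{pro: consecutive bad events} (this is indeed the source of the exponent $1/2$), then localize $Z$ via Proposition~\ref{thm: local approx}, and exploit independence of the resulting approximants. The only slip is the claim that $\E[(\overline w^+)^{2p}]$ is polynomial in $p$---it is not, since the tail \eqref{eq: extension consecutive bad events} is only stretched-exponential---but this is harmless: a single Cauchy--Schwarz with fixed $p=2$ already gives $\|\overline w^+(x)-\overline w^{+,(K)}(x)\|_2 \le \Const\,\ed^{-c'K^{1/2}}$, so your argument goes through unchanged.
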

\begin{proof}
Let us fix some $M>0$ as well as some $x,y\in\Z$.
We consider the case of $\overline w^+$, as the case of $\overline  w^-$ is fully analogous.
Let $\ell = |x-y|/5$. 
We introduce a ``bad'' set $B_\ell$ defined as
\begin{align*}
	B_\ell 
	\; = \;& 
	\{ w^+(x,M-1)>\ell^2\}\cup\{ w^+(y,M-1)>\ell^2\} \\
	&\cup 
	\{\exists u\in\Z : (|u-x|\le \ell \text{ or } |u-y|\le \ell) \text{ and }|Z(u)-Z_\ell(u)|>\Const_0\ed^{-c_0\ell}\}
\end{align*}
where we have abbreviated $Z_{I(u,\ell)}(u)$ by $Z_\ell$, 
and where $\Const_0$ and $c_0$ are the constants provided by Proposition~\ref{thm: local approx}.

Provided $M$ is large enough, 
Lemma~\ref{pro: consecutive bad events} and Proposition~\ref{thm: local approx} imply that $\Proba(B_\ell) \le \Const \ed^{- c \ell^{1/2}}$ and, since $\overline w^+(x)$ has bounded moments of all orders by Lemma~\ref{pro: consecutive bad events overline}, we find
\begin{equation}\label{eq: approx product on set}
	\E(\overline  w^+(x,M) \overline  w^+(y,M))
%	\; &= \; 
%	\E(v(x)v(y) \chi_{v(x)\le\ell^4,v(y)\le\ell^4}) + \E(v(x)v(y) (1 - \chi_{v(x)\le\ell^4,v(y)\le\ell^4}))\\
	\; = \; 
	\E(\overline  w^+(x,M)\overline  w^+(y,M) \chi_{B_\ell^c}) + \mathcal O(\ed^{-c \ell^{1/2}}).   
\end{equation}
The bound  $w^+(x,M-1) \le \ell^2$ holds on $B_\ell^c$.
Hence, reasoning as in the proof of Lemma~\ref{pro: consecutive bad events overline}, we find that the sum over $y$ in the definition \eqref{eq: smooth w} of $\overline w^+(x)$ may be restricted to a sum over $y$ such that $x<y\le x+\ell$. 
We deduce from this that, on $B_\ell^c$, the variable $\overline w^+(x)$ is a Lipschitz continuous function of  $Z(y')$, $x<y'<x+\ell$, with Lipschitz constant bounded by $\Const \ell^4$. 
Indeed,  $\varphi_<$  and $\varphi_>$ have Lipschitz constant $\Const$, the number of factors in each product is at most $\ell$, and the number of terms in the sum is at most $\ell$.

This motivates the following definition: 
$$
	\overline w_{(\ell)}^+(x,M)
	\; = \; 
	\sum_{x<y\le x+\ell} |y-x|^2 \left(\prod_{x<y'<y} \varphi_>(Z_\ell(y'),M)\right) \varphi_< (Z_\ell(y),M).
$$
We define $\overline w_{(\ell)}^+(y,M)$ similarly. 
On $B_\ell^c$, by Lipschitz continuity of $\overline w^+$  and $\overline w^+$, and by Proposition~\ref{thm: local approx}, we then find
\begin{equation}\label{eq: lipschitz}
 	|\overline w^+(x,M) - \overline w_{(\ell)}^+(x,M)|, |\overline w^+(y,M) - \overline w_{(\ell)}^+(y,M)| \; \le \; \Const \ed^{-c \ell} .   
\end{equation}
We now use \eqref{eq: approx product on set}, \eqref{eq: lipschitz} and the fact that the variables $\overline w_{(\ell)}^+(x,M)$ and $\overline w_{(\ell)}^+(y,M)$ are independent, to conclude:
\begin{align*}
	\E(\overline w^+(x,M)\overline w^+(y,M)) 
	\; &= \;
	\E(\overline w_{(\ell)}^+(x,M) \overline w_{(\ell)}^+(y,M) \chi_{B_\ell^c})  + \mathcal O(\ed^{-c \ell^{1/2}})\\
	\; &= \; 
	\E(\overline w_{(\ell)}^+(x,M))\E(\overline w_{(\ell)}^+(y,M)) + \mathcal O(\ed^{-c \ell^{1/2}})\\
	\; &= \; 
	\E(\overline w_{(\ell)}^+(x,M)\chi_{B_\ell^c})\E(\overline w_{(\ell)}^+(y,M)\chi_{B_\ell^c}) + \mathcal O(\ed^{-c \ell^{1/2}})\\
	\; &= \; 
	\E(\overline w^+(x,M))\E(\overline w^+(y,M)) + \mathcal O(\ed^{-c \ell^{1/2}}).
\end{align*}
\end{proof}

Again, we deduce a strong law of large numbers from the decay of correlations: 
\begin{Proposition}\label{pro: lln for weight}
    There exist constants $\Const$ and $M_0$ such that, if $M\ge M_0$, 
    $$
    \limsup_{L\to\infty} \frac1{|\Lambda_L|}\sum_{x \in \Lambda_L} w_L^\pm(x,M) 
    \; \le \; \Const \qquad \text{a.s.}
    $$
\end{Proposition}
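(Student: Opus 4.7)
The plan is to deduce the statement from Lemma~\ref{pro: decay of correlations w} and the tail bound \eqref{eq: extension consecutive bad events}, following the exact same scheme as in the proof of Proposition~\ref{pro: lln}. The two ingredients that play the roles previously played by the $Z$-variables are exponential decay of correlations for $\overline w^\pm$ and uniformly bounded moments of all orders for $\overline w^\pm$. The extra difficulty is that the statement is phrased in terms of the finite-volume, non-smoothened weight $w_L^\pm$, whereas decay of correlations is available only for the infinite-volume smoothened weight $\overline w^\pm$.

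First I would reduce to $\overline w^\pm$. Define $\overline w_L^\pm$ by the obvious finite-volume analogue of \eqref{eq: smooth w}, using $Z_{\Lambda_L}$ in place of $Z$ and extending $Z_{\Lambda_L}$ by $0$ outside $\Lambda_L$; then $w_L^\pm(x) \le \overline w_L^\pm(x)$ by the same argument as for \eqref{eq: easy bound w w bar}. On the event where the conclusion of Proposition~\ref{thm: local approx} holds at scale $\ell=L^{1/3}$ for every $y\in \Lambda_L$ with $|y|\le L-L^{1/2}$, the Lipschitz continuity of $\varphi_<,\varphi_>$, combined with the fact that the effective length of the sum in \eqref{eq: smooth w} is controlled by $(\overline w^\pm(x))^{1/2}$ (which has polynomial tails by \eqref{eq: extension consecutive bad events}), yields an estimate of the form
$$
|\overline w_L^\pm(x) - \overline w^\pm(x)| \; \le \; \Const\, \ed^{-c L^{1/3}}\cdot P\bigl(\overline w^\pm(x)\bigr)
$$
for some polynomial $P$, off a set of exponentially small probability. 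A Borel-Cantelli argument identical to the one in Proposition~\ref{pro: lln} then gives $|\Lambda_L|^{-1}\sum_{x\in \Lambda_L}|\overline w_L^\pm(x)-\overline w^\pm(x)| \to 0$ almost surely, the boundary layer $\{|x|>L-L^{1/2}\}$ contributing negligibly since each $\overline w_L^\pm(x)$ has uniformly bounded first moment and the layer has sublinear cardinality.

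Next I would apply Theorem~6 of \cite{rlyons} to the stationary family $\{\overline w^\pm(x)\}_{x\in\Z}$. Exponential decay of correlations from Lemma~\ref{pro: decay of correlations w} is integrable, and the bounded second moments follow from \eqref{eq: extension consecutive bad events}, so the hypotheses of that theorem are met and yield
$$
\limsup_{L\to\infty} \frac{1}{|\Lambda_L|}\sum_{x\in \Lambda_L} \overline w^\pm(x) \; \le \; \sup_{x\in\Z}\E\bigl(\overline w^\pm(x)\bigr) \; \le \; \Const \qquad \text{a.s.}
$$
Combining with $w_L^\pm \le \overline w_L^\pm$ and the comparison of the previous step yields the claim.

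The main obstacle I anticipate is in the second step: controlling the effect of replacing $Z$ by $Z_{\Lambda_L}$ inside the nominally infinite sum \eqref{eq: smooth w}. One must truncate this sum at an $L$-dependent scale (for instance at the random scale $(\overline w^\pm(x))^{1/2}+L^{1/3}$), argue that the truncation error is controlled by the polynomial tails of $\overline w^\pm$, and then propagate the exponentially small pointwise perturbations of $Z$ through a product of Lipschitz factors whose length is itself random. Care is needed to make all error terms uniformly summable in $L$ for a Borel-Cantelli argument, but no new probabilistic input beyond Proposition~\ref{thm: local approx} and \eqref{eq: extension consecutive bad events} should be required.
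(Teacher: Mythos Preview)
Your proposal is correct and follows essentially the same approach as the paper: reduce from $w_L^\pm$ to $\overline w_L^\pm$ via \eqref{eq: easy bound w w bar}, compare $\overline w_L^\pm$ with the infinite-volume $\overline w^\pm$ in analogy with the first step of Proposition~\ref{pro: lln}, and then apply Lyons' SLLN using the decay of correlations from Lemma~\ref{pro: decay of correlations w}. The paper's own proof is a two-line sketch that simply invokes $w_L^\pm\le\overline w_L^\pm$ and says ``the proof then follows analogously as the proof of Proposition~\ref{pro: lln}'', so you have in fact spelled out in more detail the comparison step that the paper leaves implicit.
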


\begin{proof}
    Let $M$ be large enough for the claims below to hold.
    We first notice that, since $w_L^\pm \le \overline w_L^\pm $, we may prove the claim for $w_L^\pm$ replaced by $\overline w_L^\pm $.
    The proof parallels that of Proposition~\ref{pro: lln}.
    First, 
    \[
    \frac{1}{|\Lambda_L|} \sum_{x\in\Lambda_L} \overline w_L^+(x,M) - \overline w^+ (x,M)
    \; \to \; 0
    \qquad \text{a.s.}
    \]
    Indeed, just as in the proof of Proposition~\ref{pro: lln} and since our weights have bounded moments of all orders, we can focus on the terms with $|x|\le L - L^{1/2}$, and it then suffices to prove that, for any $\varepsilon > 0$, we can take $L$ large enough so that $\Proba(|\overline w_L^+(x,M) - \overline w^+(x,M)|>\varepsilon) \le \Const \ed^{- c L^{1/2}}$. 
    This is shown precisely as we showed that \eqref{eq: lipschitz} holds on $B_\ell^c$ in the proof of Lemma~\ref{pro: decay of correlations w}.
    The remainder of the proof is then a consequence of the decay of correlations established in Lemma~\ref{pro: decay of correlations w}. 
\end{proof}

\section{Completing the Proof}\label{sec: proof of the three propositions}

In this Section, we prove Propositions~\ref{pro: assumptions theorem 1} to \ref{pro: assumptions theorem 3}, 
thus completing the proof of Theorems~\ref{th: decorrelation} to \ref{th: Green Kubo}, 
and we indicate the few adaptations needed to prove Theorem~\ref{th: out of equilibrium}.
To begin, we provide an explicit construction of a $Z$-collection, as defined in Section~\ref{subsec: Z collections}, that is directly relevant for the proof of Proposition~\ref{pro: assumptions theorem 1}. 
This construction serves as a detailed blueprint for all $Z$-collections used in our proofs.  
Here and below, it is understood that the parameters $m, \kappa_0, \dots, \kappa_7, \Const_0, \dots, \Const_5$ appearing in the definition of $Z$-collections in Section~\ref{subsec: Z collections} can be assumed to be constants.

\subsection{An Instance of a $Z$-Collection}\label{sec: explicit Z collection}

Let us fix some $k_0\in\{0,\dots,|\Lambda_L|\}$, 
and let $u_{k_0}$ and $g_{k_0}$ be the functions that solve \eqref{eq: hyp 1 theo 1}. 
These functions are described in Section~\ref{sec: perturbative expansion}, and explicit expressions are provided in Proposition~\ref{pro: main result perturbation theory}.
In particular, $g_{k_0}$ is given by \eqref{eq: expression for fi} with $i=n+1$, i.e.\@ $g_{k_0} = f^{(n+1)}$: 
\begin{align*}%\label{eq: Z collection explicit 1}
    g_{k_0}
    \; = &\; 
    \sum_{(s,t)\in \mathcal U_{n+1}} \sum_{(k,\sigma)\in \mathcal R} 
    \sum_{j=1}^4 \delta(k_0 - k_j)
    \delta(k_{s_1}-k_{t_1})\dots \delta(k_{s_{n}} - k_{t_{n}}) \delta(\sigma_{s_1}+\sigma_{t_1})\dots\delta(\sigma_{s_{n}}+\sigma_{t_{n}})\nonumber\\
    &\frac{\imag \sigma_j\sigma_{t_1}\dots \sigma_{t_{n}}
    \hat H_{\mathrm{an}}(k'_1)\dots \hat H_{\mathrm{an}}(k'_{n+1})}
    {\Delta(k'_1,\sigma'_1)\Delta((k'_1,k'_2),(\sigma'_1,\sigma'_2)) 
    \dots \Delta((k'_1,\dots,k'_{n}),(\sigma'_1,\dots,\sigma'_{n}))}
    \; (\tilde a_k^\sigma(s,t) - \langle\tilde a_k^\sigma (s,t)\rangle).
\end{align*}
To derive this expression from \eqref{eq: expression for fi}, we used the following facts. 
First, we used the expression \eqref{eq: coef f1 k0} for $\hat f(k_1',\sigma_1')$. 
Second, we made it explicit that $\tilde a_k^\sigma$ depends on $(s,t)$, i.e.\@ on the choice of contractions. 
Finally, we used that $\langle g_k \rangle = 0$ to subtract the averages $\langle\tilde a_k^\sigma (s,t)\rangle$. 
It is not necessary to do this here, but will become so in the proof of Proposition~\ref{pro: assumptions theorem 3}, and we chose to showcase how this property can be used in advance. 

We further modify the above expression in two ways. 
First, instead of considering $(k,\sigma)\in\mathcal R$ in the second summation, 
we fix some $\sigma$ and consider $k\in\mathcal R(\sigma)$ with $k\in\mathcal R(\sigma)$ if and only if $(k,\sigma)\in\mathcal R$. 
Second, we replace the product over delta factors by imposing $k$ to belong to a certain set: 
Given $s,t,j,\sigma$, we say that $k \in \mathcal C(k_0,s,t,j,\sigma)$ if and only if 
\begin{equation}\label{eq: product of deltas}
    k_0 = k_j, \quad 
    k_{s_1} = k_{t_1}, \quad \dots, \quad k_{s_n} = k_{t_n}, \quad 
    \sigma_{s_1} = - \sigma_{t_1}, \quad \dots , \quad \sigma_{s_n} = \sigma_{t_n}. 
\end{equation}
This set is thus empty if the constraints over $\sigma$ are not satisfied. 
With these definitions, we rewrite
\begin{multline}
    g_{k_0}
    \; = \; 
    \sum_{\substack{(s,t)\in \mathcal U_{n+1},\\ 1 \le j \le 4}} \sum_{\sigma}
    \sum_{k\in \mathcal R(\sigma)\cap\mathcal C(k_0,s,t,j,\sigma)} \\
    \frac{\imag\sigma_j\sigma_{t_1}\dots \sigma_{t_{n}}
    \hat H_{\mathrm{an}}(k'_1)\dots \hat H_{\mathrm{an}}(k'_{n+1})}
    {\Delta(k'_1,\sigma'_1)
    \dots \Delta((k'_1,\dots,k'_{n}),(\sigma'_1,\dots,\sigma'_{n}))}
    \; (\tilde a_k^\sigma(s,t) - \langle\tilde a_k^\sigma (s,t)\rangle).
    \label{eq: 2nd form g k0 instance}
\end{multline}

Let now $x\in\Lambda_L$ and let $0<q<1$ be a fractional exponent. 
We will define $Z_{\Lambda_L}(x)$ to be an upper bound on $\sum_{k_0=1}^{|\Lambda_L|} |\psi_{k_0}(x)|^2 \langle g_{k_0}^2\rangle^q$. 
To estimate $\langle g_{k_0}^2 \rangle$, 
we use the second item in Corollary~\ref{cor: decay of correlations} in Appendix~\ref{sec: gibbs},
\begin{equation}\label{eq: crude bound on the a}
     |\langle \tilde a_k^\sigma(s,t) ; 
     \tilde a_{\underline k}^{\underline\sigma}(\underline s,\underline t)\rangle| 
    \; \le \; \Const.
\end{equation}
Using the subadditivity of the map $v\mapsto v^q$ ($v\ge 0$), we find 
\begin{multline}\label{eq: 1st concrete Z collection}
    \sum_{k_0=1}^{|\Lambda_L|} |\psi_{k_0}(x)|^2 \langle g_{k_0}^2\rangle^q
    \; \le \; 
    \Const
    \sum_{\substack{(s,t),(\underline s,\underline t)\in \mathcal U_{n+1},\\ j,\underline j}} 
    \sum_{\sigma,\underline\sigma}
    \sum_{k_0}
    \sum_{\substack{k\in \mathcal R(\sigma)\cap\mathcal C(k_0,s,t,j,\sigma)\\
    \underline k\in \mathcal R(\underline\sigma)\cap\mathcal C(k_0,\underline s,\underline t,\underline j,\underline\sigma)}} \\
    \frac{|\psi_{k_0}(x)|^{2q}|\hat H_{\mathrm{an}}(k'_1)\dots \hat H_{\mathrm{an}}(k'_{n+1}) \hat H_{\mathrm{an}}(\underline k'_1)\dots \hat H_{\mathrm{an}}(\underline k'_{n+1})|^q}
    {|\Delta(k'_1,\sigma'_1)
    \dots \Delta((k'_1,\dots,k'_{n}),(\sigma'_1,\dots,\sigma'_{n}))
    \Delta(\underline k'_1,\underline\sigma'_1)
    \dots \Delta((\underline k'_1,\dots,\underline k'_{n}),(\underline\sigma'_1,\dots,\underline\sigma'_{n}))|^q}.
\end{multline}

The right hand side of this expression is taken to define $Z_{\Lambda_L}(x)$.
We observe that the expression \eqref{eq: 1st concrete Z collection} still makes sense if $\Lambda_L$ is replaced by any interval $I\subset\Z$, provided $x\in I$ and the eigenpairs are interpreted as eigenpairs of $\mathcal H_I$. 
Doing this replacement yields thus a definition of a collection of variables $Z_I(x)$, with $x\in I$ and $I\subset\Z$ an interval. 
We now proceed to show that this forms a $Z$-collection, as defined in Section~\ref{subsec: Z collections}, and that $\mu = \kappa_2/(\kappa_2\kappa_3 +2)$ can be made larger than $4$ by taking $q$ small enough. 

Let us first identify the various elements of the definition: 
\begin{enumerate}
    \item 
    The sum over $r$ in \eqref{eq: form of Z(x)} represents the two first summations in \eqref{eq: 1st concrete Z collection}, i.e. the sums over $r = (s,t,\underline s,\underline t,j,\underline j,\sigma,\underline\sigma)$. This is a finite sum.

    \item 
    To avoid confusion, let us denote by $\mathbf k$ what is denoted by $k$ in Section~\ref{subsec: Z collections}, and we rewrite the two last sums in \eqref{eq: 1st concrete Z collection} as a sum over $\mathbf k = (k_0,k,\underline k_0,\underline k)$. So, $\mathbf k$ are $d$-tuples with 
    $$
        d \; = \; 2 (1 + d_1 + 4n) \; = \; 2 (1 + 4(n+1)).
    $$
    For every $r$, we set $d(r) = d$ as above, with $d(r)$ as defined in Section~\ref{subsec: Z collections}. 
    The fact that $d(r)$ is independent of $r$ will not hold true for all $Z$-collections that we will have to consider. 

    \item 
    We define the random variables $D_{I,r}(\mathbf k)$ to be the denominator in \eqref{eq: 1st concrete Z collection}, and the variables $N_{I,x,r}(\mathbf k)$ to be the numerator in \eqref{eq: 1st concrete Z collection}. It is a direct consequence of their definitions that these variables are measurable functions of the eigenvalues or eigenpairs of $\mathcal H_I$ respectively. 

    \item 
    The set $\mathcal Z_I^{(1)}(r)$ is such that $\mathbf k = (k_0,k,\underline k) \in \mathcal Z_I^{(1)}(r)$ if and only if $k\in \mathcal C(k_0,s,t,j,\sigma)$ and $\underline k \in C(k_0,\underline s,\underline t,\underline j,\underline\sigma)$. 
    Our definition of $\mathcal C$ guarantees that $\mathcal Z_I^{(1)}(r)$ is an element of $\mathcal A(d(r))$, cf.~\eqref{eq: Z 1 Z 2}. 

    \item 
    Similarly, the set $\mathcal Z_I^{(2)}(r)$ is such that $\mathbf k = (k_0,k,\underline k) \in \mathcal Z_I^{(1)}(r)$ if and only if $k\in\mathcal R(\sigma)$ and $\underline k\in\mathcal R(\underline\sigma)$. The definition of $\mathcal R$ in \eqref{eq: definition of the R set} implies that $\mathcal Z^{(2)}_I(r)$ is an element of $\mathcal A(d(r))$, cf.~\eqref{eq: Z 1 Z 2}. 
\end{enumerate}

With all these definitions, we prove Properties \eqref{eq: bounds on Dr} to \eqref{eq: condition 3 Z} and that $\mu\to\infty$ as $q\to 0$: 
\begin{enumerate}
    \item
    Property \eqref{eq: bounds on Dr}. 
    Since, the spectrum of $\mathcal H_I$ is bounded as in \eqref{eq: nu - nu +}, the factors $\Delta$, as defined in \eqref{eq: definition of denominator}, are smooth and bounded functions of the eigenvalues of $\mathcal H_I$. Therefore, \eqref{eq: bounds on Dr} holds with $\kappa_0 = q$. 

    \item\label{it: 2nd property}
    Property~\eqref{eq: condition 1 Z}. 
    We first use the bound 
    \begin{equation}\label{eq: proba to be bounded for cond 1 Z}
        \Proba\left(\min_{\mathbf k \in \mathcal Z_I^{(2)}(r)}D_r(\mathbf k) \le \varepsilon \right)
        \; \le \; 
        \sum_{i=1}^{2n} 
        \Proba\left(\min_{\mathbf k \in \mathcal Z_I^{(2)}(r)}|\Delta_i|\le \varepsilon^{\frac{1}{2n q}}\right)
    \end{equation}
    where we have used the notation $\Delta_i$ with $1\le i \le 2n$ to denote each of the $\Delta$ factors in the denominator of  \eqref{eq: 1st concrete Z collection}. 
    Each of the random variables $\min_{\mathbf k \in \mathcal Z_I^{(2)}(r)}|\Delta_i|$ is of the form $Q$, as defined in \eqref{eq: Q random variable}, with $m$ there being at most $4n$.  Hence, Theorem~\ref{thm: denominator} yields 
    \[
    (\ref{eq: proba to be bounded for cond 1 Z})
    \; \le \;
    \Const n |I|^{4n} \varepsilon^{\frac{1}{2nq 2 (4n)(4n+1)}}.
    \]
    This establishes~\eqref{eq: condition 1 Z}, with $\kappa_2\to\infty$ as $q\to 0$. 
  
    \item\label{it: 3rd property} Property \eqref{eq: condition 2 Z}. 
    The expression for $\hat H_{\an}$ is provided in \eqref{eq: H an k}. 
    From the bound \eqref{eq: definition big a} on the components of eigenvectors, 
    we deduce the generic bound 
    \[
    |\hat H_{\an}(l)|
    \; \le \;
    \Const A_{I,x}^2
    \left(1 + (R(x,(l)))^8\right) 
    \ed^{-c \max_{1 \le i,i' \le 4}|\mathbf x(l_i) - \mathbf x(l_{i'})|}
    \]
    for all $l = (l_1,\dots,l_4)\in\{1,\dots,|I|\}^4$, 
    and with $R$ as defined in \eqref{eq: the definition of R}. 
    Using this bound for all factors $\hat H_{\an}$ in \eqref{eq: 1st concrete Z collection} and using again \eqref{eq: definition big a} for the factor $|\psi_{k_0}(x)|^2$, we find 
    \begin{multline*}
    (N_{I,x,r}(\mathbf k))^{1/q}
    \; \le \; 
    \Const A_{I,x}^{1 + 4(n+1)}
    \left(1 + (R(x,\mathbf k))^{4 + 16(n+1)} \right)\\
    \ed^{-c\left( |x-\loccen(k_0)| + \max_{1 \le i,i' \le 4}|
    \mathbf x((k'_1)_i) - \mathbf x((k'_{1})_{i'})| + \dots 
    + \max_{1 \le i,i' \le 4}|
    \mathbf x((\underline k'_{n+1})_i) - \mathbf x((\underline k'_{n+11})_{i'})|\right)}.
    \end{multline*}
    taking into account the connectivity implied by the constraints 
    $k\in\mathcal C(k_0,s,t,j,\sigma)$ and $k'\in\mathcal C(k_0,\underline s,\underline t,\underline j,\underline\sigma)$, we find that the last exponent is lower-bounded by $R(x,\mathbf k)$. 
    Hence, \eqref{eq: condition 2 Z} is established with 
    \[
        \kappa_3 \; = \; q(1 + 4(n+1)), 
        \qquad 
        \kappa_4 \; = \; cq.
    \]

    \item 
    Property \eqref{eq: condition 3 Z}. 
    To avoid clutter, we assume that the product of eigenfrequencies in the denominator of the expression \eqref{eq: H an k} that defines $\hat H_{\an}$ has been absorbed into the constant $\Const$ in \eqref{eq: 1st concrete Z collection}, which is possible thanks to the bound \eqref{eq: nu - nu +} on the spectrum. 
    Thanks to the generic bound $||a|^q - |b|^q|\le |a-b|^q$ for any $a,b\in\R$, we see that we need to get a bound on the absolute value of the difference 
    \begin{multline*}
    \sum_{y_1,\dots,y_{2(n+1)}}
    (\psi_{k_0}(x))^2 \psi_{k_1}(y_1)\dots \psi_{k_4}(y_1) \psi_{k_5}(y_2) \dots
    \psi_{\underline k_{4(n+1)}}(y_{2(n+1)}) - \\
    (\psi_{k_0}'(x))^2 \psi_{k_1}'(y_1)\dots \psi_{k_4}'(y_1) \psi_{k_5}'(y_2) \dots
    \psi_{\underline k_{4(n+1)}}'(y_{2(n+1)}).
    \end{multline*}
    Using the discrete analog of Leibniz product rule, this difference is rewritten as a sum over $2(n+2)$ differences, each of them featuring a single difference of the type $\psi_{k_i}(y_j) - \psi_{k_i}'(y_j)$. 
    The absolute value of each of these differences can be bounded by $\|\psi_{k_i} - \psi_{k_i}'\|$, which can then be factorized out of the sum over $y_1,\dots,y_{2(n+1)}$.
    We are left with a sum over $2(n+2)$ terms, each of which is a sum over $y_1,\dots,y_{2(n+1)}$. They can all be bounded by 1. 
    To see this, note that in the sum over $y$ in the expression \eqref{eq: H an k} defining $\hat{H}_{\an}$, we can bound any two out of the four factors by $1$ and still conclude that the sum is bounded by $1$.
    This shows \eqref{eq: condition 3 Z} with $\kappa_5 = \kappa_6 = 0$ and $\kappa_7 = q$. 
    Note that later, it will not always be possible to obtain a deterministic bound, and we will have to take $\kappa_5,\kappa_6$ non-zero.

    \item 
    As shown in items \ref{it: 2nd property} and $\ref{it: 3rd property}$ above, $1/\kappa_2,\kappa_3 \to 0$ as $q\to\infty$, which guarantees $\mu>4$ for $q$ small enough. 
\end{enumerate}
We have thus established that the collection $(Z_{I}(x))_{I,x}$ defined bellow \eqref{eq: 1st concrete Z collection} is a $Z$-collection and that $\mu> 4$ for $q>0$ small enough.

\subsection{Proof of Proposition~\ref{pro: assumptions theorem 1}}

As in Section~\ref{sec: explicit Z collection}, let us fix some $k_0\in\{0,\dots,|\Lambda_L|\}$, and let $u_{k_0}$ and $g_{k_0}$ be the functions that solve \eqref{eq: hyp 1 theo 1}. 
In the same way as we built there a $Z$-collection based on the function $g_{k_0}$, we can construct a $Z$-collection based on $u_{k_0}$. 
The only extra difficulty is that $u_{k_0}$ is not directly given by \eqref{eq: expression for ui} for some value of $i$, but instead $u_{k_0} = u^{(1)} + \dots + \lambda^{n-1}u^{(n)}$. 
This implies the presence of an extra finite summation over $1\le i,\underline i\le n$ in \eqref{eq: 1st concrete Z collection}. These new indices are incorporated into the index $r$. A consequence of this change is that the dimension $d(r)$ of the $\mathbf k$-tuples does now indeed depend on $r$. With these modifications, the proof in Section~\ref{sec: explicit Z collection} remains valid and carries over directly.

Hence, we conclude that there exists a $Z$-collection $(Z_I(x))_{I,x}$ such that
\begin{equation}\label{eq: bound ergodic average}
    \frac{1}{|\Lambda_L|} \sum_{k_0 = 1}^{|\Lambda_L|} 
    \left(\langle g_{k_0}^2 \rangle^q + \langle u_{k_0}^2 \rangle^q\right)
    \; = \; 
    \frac{1}{|\Lambda_L|} \sum_{x\in\Lambda_L} \sum_{k_0 = 1}^{|\Lambda_L|} 
    |\psi_{k_0}(x)|^2
    \left(\langle g_{k_0}^2 \rangle^q + \langle u_{k_0}^2 \rangle^q\right)
    \; \le \;
    \frac{1}{|\Lambda_L|} \sum_{x\in\Lambda_L} Z_{\Lambda_L}(x)
\end{equation}
with $\mu > 4$ provided $q>0$ is taken small enough. 
Let $M> 0$. 
On the one hand, the limit superior of the left hand side of \eqref{eq: bound ergodic average} is lower bounded by $M^q \nu(M)$. 
On the other hand, by the ergodic theorem in Proposition~\ref{pro: lln}, 
there exists a constant $\Const$ so that the limit superior of the right hand side of  \eqref{eq: bound ergodic average} is upper bounded by $\Const$. 
Hence 
$$
	\nu(M) \;\le \; \frac{\Const}{M^q}. 
$$

\subsection{Proof of Proposition~\ref{pro: assumptions theorem 2}}\label{subsec: proof prop 2}
This time, we fix some $x\in\Lambda_L$, 
we let $g_x$ and $u_x$ be the functions that solve \eqref{eq: hyp 1 theo 2}, 
and we construct a $Z$-collection $(Z_I(x))_{I,x}$ such that 
\[
    \langle g_x^2 \rangle + \langle u_x^2 \rangle 
    \; \le \;
    Z_{\Lambda_L}(x).
\]
This will suffice to establish Proposition~\ref{pro: assumptions theorem 2}. 
Indeed, for all $M>0$, $\ell_{0,L}(M) \le( w_L^+(0,M))^{1/2}$, with $\ell_0 = \ell_{0,L}(M)$ defined in \eqref{eq: def ell 0 proposition 2}, and with $w^+_L(0,M)$ defined in \eqref{eq: def w+L}.
Therefore the claim \eqref{eq: tail of x not distribution} follows from Lemma~\ref{pro: consecutive bad events}.
In addition, the fact that the bound is still satisfied for \(\limsup_{L \to \infty} \ell_{0,L}\) follows from \(\limsup_{L \to \infty} w_L^+(0, M) \leq w^+(0, M-1)\). This inequality, in turn, follows from the observation that \(w_L^+(0, M) \leq w^+(0, M-1)\) holds on the event where \(w^+(0, M-1) \leq (L/2)^2\) and \(|Z_L(x) - Z(x)| < 1/2\) for all \(1 \leq x \leq L/2\), and the probability of this event decays stretched exponentially with \(L\) by Lemma~\ref{pro: consecutive bad events} and Proposition~\ref{thm: local approx}.

To construct the above $Z$-collection, we follow the same steps as in Section~\ref{sec: explicit Z collection} and arrive at a bound analogous to \eqref{eq: 1st concrete Z collection}:
\begin{multline}\label{eq: 2nd concrete Z collection}
    \langle g_{x}^2\rangle
    \; \le \; 
    \Const
    \sum_{(s,t),(\underline s,\underline t)\in \mathcal U_{n+1}} 
    \sum_{\sigma,\underline\sigma}
    \sum_{\substack{k\in \mathcal R(\sigma)\cap\mathcal C(s,t,\sigma)\\
    \underline k\in \mathcal R(\underline\sigma)\cap\mathcal C(\underline s,\underline t,\underline\sigma)}} \\
    \frac{|\hat f_{x}(k'_1)\hat H_{\an}(k_2')\dots \hat H_{\mathrm{an}}(k'_{n+1}) 
    \hat f_{x}(\underline k'_1)\hat H_{\an}(\underline k_2')\dots \hat H_{\mathrm{an}}(\underline k'_{n+1})|}
    {|\Delta(k'_1,\sigma'_1)
    \dots \Delta((k'_1,\dots,k'_{n}),(\sigma'_1,\dots,\sigma'_{n}))
    \Delta(\underline k'_1,\underline\sigma'_1)
    \dots \Delta((\underline k'_1,\dots,\underline k'_{n}),(\underline\sigma'_1,\dots,\underline\sigma'_{n}))|}
\end{multline}
with 
\begin{equation}\label{eq: def f x k 1 prime}
    \hat f_x (k'_1) \; = \; (\psi_{k_1}(x-1) - \psi_{k_1}(x)) \psi_{k_2}(x),  
    \qquad
    k_1' \; = \;  (k_1,k_2). 
\end{equation}
In \eqref{eq: 2nd concrete Z collection}, the set $\mathcal C(s,t,\sigma)$ contains the values of $k$ that satisfy the constraint \eqref{eq: product of deltas}, excluding the equality $k_0 = k_j$, 
and the expression for $\hat f_x(k_1')$ stems from \eqref{eq: coef f1 j}, which is now used as the definition for $f$ in \eqref{eq: expression for fi}. 
We note that the part involving $\delta$-factors in \eqref{eq: coef f1 j} can be omitted as this is already implied by the constraint $(s,t),(\underline s,\underline t) \in \mathcal U_{n+1}$.

At this point, we can take over the arguments in Section~\ref{sec: explicit Z collection} with minor adaptations: 
The right-hand side of \eqref{eq: 2nd concrete Z collection} still makes sense for eigenpairs of $\mathcal H_I$ on any interval $I\subset\Z$ and any $x\in\Z$ and forms a $Z$-collections. Finally, as explained, in the proof of Proposition~\ref{pro: assumptions theorem 1}, a similar bound is derived for $\langle u_x^2 \rangle$, yielding to the definition of a second $Z$-collection. As the sum of two $Z$-collections is still a $Z$-collection, this concludes the proof.

\subsection{Proof of Proposition~\ref{pro: assumptions theorem 3}}

Let again $x\in\Lambda_L$, 
and $g_x$ and $u_x$ be the functions that solve \eqref{eq: hyp 1 theo 2}. 
Here, we construct a $Z$-collection $(Z_I(x))_{and I,x}$ such that 
\[
    \sum_{y\in \Lambda_L} |\langle g_x g_y\rangle| +  |\langle u_x u_y\rangle|
    \; \le \; 
    Z_{\Lambda_L}(x), 
\]
which will allow to establish Proposition~\ref{pro: assumptions theorem 3}. 
Indeed, $(w_L(x, M))^2 \le w_L^-(x, M)$ for any $M > 0$, where the left-hand side is defined in \eqref{eq: weight green kubo} with the dependence on $L$ and $M$ made explicit, and the right-hand side is defined in \eqref{eq: def w-L}. The result then follows from the law of large numbers in Proposition~\ref{pro: lln for weight}.

Let us construct this $Z$-collection, following the blueprint in Section~\ref{sec: explicit Z collection} with the additional difficulty that we need to control the sum over $y\in\Lambda_L$. 
Let us start with the analog of \eqref{eq: 2nd form g k0 instance}: 
\begin{multline*}
    g_{x}
    \; = \; 
    \sum_{(s,t)\in \mathcal U_{n+1}} \sum_{\sigma}
    \sum_{k\in \mathcal R(\sigma)\cap\mathcal C(s,t,\sigma)} \\
    \frac{\imag\sigma_{t_1}\dots \sigma_{t_{n}}
    f_{x}(k'_1)\hat H_{\mathrm{an}}(k'_2)\dots \hat H_{\mathrm{an}}(k'_{n+1})}
    {\Delta(k'_1,\sigma'_1)
    \dots \Delta((k'_1,\dots,k'_{n}),(\sigma'_1,\dots,\sigma'_{n}))}
    \; (\tilde a_k^\sigma(s,t) - \langle\tilde a_k^\sigma (s,t)\rangle)
\end{multline*}
with $\mathcal C(s,t,\sigma)$ as defined in the proof of Proposition~\ref{pro: assumptions theorem 2} and $\hat f_x(k_1')$ defined in \eqref{eq: def f x k 1 prime}.
It is only here that subtracting the average of $\tilde a_k^\sigma$ becomes crucial, as the crude bound \eqref{eq: crude bound on the a} no longer suffices to control the sum over $y$, requiring us to rely on the decay of correlations in the Gibbs state.
The third item in Corollary~\ref{cor: decay of correlations} in Appendix~\ref{sec: gibbs} yields
\[
    |\langle \tilde a_k^\sigma(s,t) ; 
    \tilde a_{\underline k}^{\underline\sigma}(\underline s,\underline t)\rangle| 
    \; \le \; \Const
    \sum_{z,\underline z}
    \left|\psi_l(z) \psi_{\underline l} (\underline z) \right| \ed^{-c d(z,\underline z)}
\]
with $l=(l_1,\dots,l_p)$ and $\underline l = (\underline l_1,\dots, \underline l_p)$ the non-contracted indices in $k$ and $\underline k$ respectively, 
$z = (z_1,\dots,z_p)$, $\underline z = (\underline z_1,\dots,\underline z_p)$, 
\begin{equation}\label{eq: the non contracted psi}
\psi_l(z) \; = \; \psi_{l_1}(z_1)\dots \psi_{l_1}(z_p),
\qquad
\psi_{\underline l} (\underline z)
\; = \; \psi_{\underline l_1}(\underline z_1) \dots \psi_{\underline l_1}(\underline z_p)
\end{equation}
and $p=d_1 + 2n = 2(n+1)$. 
Hence the bound 
\begin{multline}\label{eq: 3d concrete Z collection}
    \sum_{y\in \Lambda_L} |\langle g_x g_y\rangle|
    \; \le \; 
    \Const
    \sum_{(s,t),(\underline s,\underline t)\in \mathcal U_{n+1}} 
    \sum_{\sigma,\underline\sigma}
    \sum_{\substack{k\in \mathcal R(\sigma)\cap\mathcal C(s,t,\sigma)\\
    \underline k\in \mathcal R(\underline\sigma)\cap\mathcal C(\underline s,\underline t,\underline\sigma)}} \\
    \frac{\sum_y|\hat f_{x}(k'_1)\hat H_{\an}(k_2')\dots \hat H_{\mathrm{an}}(k'_{n+1}) 
    \hat f_{y}(\underline k'_1)\hat H_{\an}(\underline k_2')\dots \hat H_{\mathrm{an}}(\underline k'_{n+1})|
    \sum_{z,\underline z}
    \left|\psi_l(z) \psi_{\underline l} (\underline z) \right| \ed^{-c d(z,\underline z)}}
    {|\Delta(k'_1,\sigma'_1)
    \dots \Delta((k'_1,\dots,k'_{n}),(\sigma'_1,\dots,\sigma'_{n}))
    \Delta(\underline k'_1,\underline\sigma'_1)
    \dots \Delta((\underline k'_1,\dots,\underline k'_{n}),(\underline\sigma'_1,\dots,\underline\sigma'_{n}))|}.
\end{multline}

As in Section~\ref{sec: explicit Z collection}, we note that the right-hand side of \eqref{eq: 3d concrete Z collection} keeps making sense for eigenpairs of $\mathcal H_I$ on any interval $I\subset\Z$ and any $x\in\Z$ and we show that it forms a $Z$-collections.
With straightforward adjustments, the identification of the various elements of the definition of a $Z$-collection proceeds as in Section~\ref{sec: explicit Z collection}. 
While the proofs of Properties~\eqref{eq: bounds on Dr} and \eqref{eq: condition 1 Z} carry over, proving Properties~\eqref{eq: condition 2 Z} and \eqref{eq: condition 3 Z} requires some attention:

\bigskip 

\textbf{Property~\eqref{eq: condition 2 Z}}. 
We rewrite the numerator as 
\begin{multline}\label{eq: expression for N k in in proof prop 3}
    N_{I,x,r}(\mathbf k)
    \; = \; 
    \sum_{z,\underline z} | \psi_l(z) \psi_{\underline l} (\underline z)|^{1/2}\\
    \sum_y
    \left| \hat f_{x}(k'_1)\hat H_{\an}(k_2')\dots \hat H_{\mathrm{an}}(k'_{n+1})\right|
    \left| \hat f_{y}(\underline k'_1)\hat H_{\an}(\underline k_2')\dots \hat H_{\mathrm{an}}(\underline k'_{n+1})\right|
    \left( |\psi_l(z) \psi_{\underline l} (\underline z)|^{1/2} \ed^{-c d(z,\underline z)}\right). 
\end{multline}
Using \eqref{eq: definition big a}, the first factor above is bounded as
\begin{equation}\label{eq: bound psi for sum over z}
     | \psi_l(z) \psi_{\underline l} (\underline z)|^{1/2}
     \; \le \; 
     \Const A_{I,x}^{2(n+1)/2} \left( 1 + (R(x,(l,\underline l)))^{8(n+1)/2} \right)
     \ed^{-c |z_1 - \loccen(l_1)| + \dots + |\underline z_{2n+1} - \loccen(\underline l_{2n+1})|}
\end{equation}
while the last one is bounded as
\begin{align}
     \left| \psi_l(z) \psi_{\underline l} (\underline z) \right|^{1/2} \ed^{-c d(z,\underline z)}
     \; &\le \; 
     A_{I,x}^{2(n+1)/2} \left( 1 + (R(x,(l,\underline l)))^{8(n+1)/2} \right)
     \ed^{-c \left(d(z,\loccen(l)) + d(z,\underline z) + d(\underline z,\loccen(\underline l))\right)} \nonumber\\
     \; &\le \; 
     A_{I,x}^{2(n+1)/2} \left( 1 + (R(x,(l,\underline l)))^{8(n+1)/2} \right) 
     \ed^{-c d(\loccen(l),\loccen(\underline l))}. 
     \label{eq: bound last factor in prop 6 12 after 8 10}
\end{align}
Hence, using the bound \eqref{eq: bound psi for sum over z} to perform the sum over $z,\underline z$ in \eqref{eq: expression for N k in in proof prop 3}, proceeding as in Section~\ref{sec: explicit Z collection} for the two intermediate factors in \eqref{eq: expression for N k in in proof prop 3}, and noting that $d(\loccen(l),\loccen(\underline l)) \ge d(\loccen(k),\loccen(\underline k))$, yields
\begin{equation}\label{eq: bound on N with many decay factors}
    N_{I,x,r}(\mathbf k) \; \le \;
    \Const 
    A_{I,x}^{p(n)} (1 + R(x,\mathbf k)^{4p(n)})
    \sum_y \ed^{-\frac{c}2 R(y,\underline k)}
    \left(\ed^{-c R(x,k)} \ed^{-\frac{c}2 R(y,\underline k)} 
    \ed^{-cd(\loccen(k),\loccen(\underline k))}\right)
\end{equation}
with $p(n) = 2(1+2n) + 2(n+1) = 4 + 6n$. Now, since
\[
    R(x,k) + R(y,\underline k) + d(\loccen(k),\loccen(\underline k))
    \;\ge\;
    \frac12 \left( R(x,k) + R(y,\underline k) + |x-y| \right)
    \; \ge \; \frac12 R(x,\mathbf k), 
\]
we conclude that 
\[
    N_{I,x,r}(\mathbf k) \; \le \;
    \Const 
    A_{I,x}^{p(n)} (1 + R(x,\mathbf k)^{4p(n)})
    \ed^{- c R(x,\mathbf k)},
\]
and this establishes Property~\eqref{eq: condition 2 Z}. 

\bigskip
\textbf{Property~\eqref{eq: condition 3 Z}}. 
An additional complication arises compared to the proof of Property~\eqref{eq: condition 3 Z} for the $Z$-collection discussed in Section~\ref{sec: explicit Z collection}: we must handle the summations over \( z, \underline{z}, y \) that appear in \eqref{eq: expression for N k in in proof prop 3}. 
To address this, instead of the generic bound \( |\psi(u) - \psi'(u)| \leq \|\psi - \psi'\| \) used there, we apply 
\begin{equation}\label{eq: advanced bound psi diference}
    |\psi(u) - \psi'(u)| 
    \leq 
    \|\psi - \psi'\|^{1/2} \big(|\psi(u)|^{1/2} + |\psi'(u)|^{1/2}\big),
\end{equation}
which results in an exponent \( \kappa_7 \) in Property~\eqref{eq: condition 3 Z} that will be two times smaller. 
By factoring out the norms of the differences of eigenvectors, we are left with a finite sum of expressions like \eqref{eq: expression for N k in in proof prop 3}, where each eigenvector now gains an additional power of \( 1/2 \), and part of the $\psi_j$ are replaced by $\psi'_j$. 
Since exponential decay as a function of \( R(x, \mathbf{k}) \) is no longer required, estimating the equivalent of \eqref{eq: expression for N k in in proof prop 3} becomes simpler: the last factor in \eqref{eq: expression for N k in in proof prop 3} can now be bounded by \( 1 \), instead of the bound in \eqref{eq: bound last factor in prop 6 12 after 8 10}. 
We note that this simplification proves necessary: we wouldn't have been able to establish the equivalent of \eqref{eq: bound last factor in prop 6 12 after 8 10}, because we have no obvious control on the distance between the localization center of a mode $\psi_j$ and the one of $\psi'_j$.
We end thus up with an expression of the type \eqref{eq: bound on N with many decay factors} where the last factor is absent, which suffices to prove Property~\eqref{eq: condition 3 Z}.

\subsection{Proof of Theorem~\ref{th: out of equilibrium}}
We detail the adaptations needed to complete the proof of Theorem~\ref{th: out of equilibrium}. 
The proof parallels that of Theorem~\ref{th: current}, that rests on Proposition~\ref{pro: assumptions theorem 2}.
Below we denote the time by $\theta$ to avoid confusion with the indices $s,t$ in use above. 

First, we modify the statement of Proposition~\ref{pro: assumptions theorem 2} to make it useful here. 
The set $B'(M)$ defined in \eqref{eq: 2d bad set}, in the assumptions of Proposition~\ref{pro: assumptions theorem 2}, 
needs to be replaced by the set 
\begin{equation}\label{eq: def tilde B prime M}
    \tilde B'(M) \; = \; 
    \{ x\in\Lambda_L : \exists \, {\theta} \ge 0 : \langle u_x^2 (\theta) \rangle_{\mathrm{ne}} \ge M \text{ or } \langle g_x^2 (\theta) \rangle_{\mathrm{ne}} \ge M\}. 
\end{equation}
Theorem~\ref{th: out of equilibrium} is then established using this modified version of Proposition~\ref{pro: assumptions theorem 2}, in the same way as Theorem~\ref{th: current} is derived from Proposition~\ref{pro: assumptions theorem 2} in Section~\ref{subsec: proof of Theorem 2}. 
Indeed, the only bound that is modified is \eqref{eq: j 0 t to be modified for last theorem}: $\langle H_x^2\rangle$, $\langle u_{\ell_0}^2\rangle$ and $\langle g_{\ell_0}^2 \rangle$ are changed into $\sup_\theta\langle H_x^2(\theta)\rangle$, $\sup_\theta\langle u_{\ell_0}^2(\theta)\rangle$ and $\sup_\theta\langle g_{\ell_0}^2 (\theta)\rangle$ respectively. 
The first of these suprema is bounded by a constant using Assumption~\ref{as: time evolved ensemble}, and the two other ones are bounded by $M$ using the definition of $\tilde B'(M)$ in \eqref{eq: def tilde B prime M}. 
This remainder of the proof carries over as such. 

Second, we need to prove the modified version of Proposition~\ref{pro: assumptions theorem 2}, where $B'(M)$ is replaced by $\tilde B'(M)$ defined in \eqref{eq: def tilde B prime M}. 
The proof parallels that in Section~\ref{subsec: proof prop 2}. 
We fix some $x\in\Lambda_L$ and we let $g_x$ and $u_x$ be the functions that solve \eqref{eq: hyp 1 theo 2}, 
and we construct a $Z$-collection $(Z_I(x))_{I,x}$ such that 
\[
    \sup_\theta\langle g_x^2 (\theta) \rangle + 
    \sup_\theta\langle u_x^2 (\theta) \rangle 
    \; \le \;
    Z_{\Lambda_L}(x),
\]
which, by the same argument as in Section~\ref{subsec: proof prop 2}, implies the modified version of Proposition~\ref{pro: assumptions theorem 2}. 

To construct this collection, we need to establish a bound on $\sup_\theta \langle g_x^2(\theta)\rangle$ analogous to \eqref{eq: 2nd concrete Z collection}. 
The derivation of this bound involved \eqref{eq: crude bound on the a}, which can no longer be used here. Instead, using \eqref{eq: new expansion a k sigma} together with Assumption~\ref{as: time evolved ensemble} yields
\begin{equation}\label{eq: modified crude bound on the a}
     |\langle \tilde a_k^\sigma(s,t;\theta)
     \tilde a_{\underline k}^{\underline\sigma}(\underline s,\underline t;\theta)\rangle| 
    \; \le \; \Const \sum_{z,\underline z} |\psi_l(z)\psi_{\underline l}(\underline z)|
\end{equation}
with $l=(l_1,\dots,l_p)$ and $\underline l = (\underline l_1,\dots, \underline l_p)$ the non-contracted indices in $k$ and $\underline k$ respectively, 
$z = (z_1,\dots,z_p)$, $\underline z = (\underline z_1,\dots,\underline z_p)$, and $\psi_l(z),\psi_{\underline l}(\underline z)$ as in \eqref{eq: the non contracted psi}.
First, we note that the left-hand side of \eqref{eq: modified crude bound on the a} now contains a simple product instead of a correlator, as in \eqref{eq: crude bound on the a}, because we cannot subtract the average here. Second, all the time dependence is in the left-hand side of \eqref{eq: modified crude bound on the a}, while the bound itself is time-independent.
Using \eqref{eq: modified crude bound on the a}, we arrive at a bound analogous to \eqref{eq: 2nd concrete Z collection}: 
\begin{multline}\label{eq: 4th concrete Z collection}
    \langle g_{x}^2\rangle
    \; \le \; 
    \Const
    \sum_{(s,t),(\underline s,\underline t)\in \mathcal U_{n+1}} 
    \sum_{\sigma,\underline\sigma}
    \sum_{\substack{k\in \mathcal R(\sigma)\cap\mathcal C(s,t,\sigma)\\
    \underline k\in \mathcal R(\underline\sigma)\cap\mathcal C(\underline s,\underline t,\underline\sigma)}} \\
    \frac{|\hat f_{x}(k'_1)\hat H_{\an}(k_2')\dots \hat H_{\mathrm{an}}(k'_{n+1}) 
    \hat f_{x}(\underline k'_1)\hat H_{\an}(\underline k_2')\dots \hat H_{\mathrm{an}}(\underline k'_{n+1})|
    \sum_{z,\underline z} |\psi_l(z)\psi_{\underline l}(\underline z)|}
    {|\Delta(k'_1,\sigma'_1)
    \dots \Delta((k'_1,\dots,k'_{n}),(\sigma'_1,\dots,\sigma'_{n}))
    \Delta(\underline k'_1,\underline\sigma'_1)
    \dots \Delta((\underline k'_1,\dots,\underline k'_{n}),(\underline\sigma'_1,\dots,\underline\sigma'_{n}))|}.
\end{multline}
From there on, the $Z$-collection can be constructed as in Section~\ref{subsec: proof prop 2}, using in addition the more advanced bound~\ref{eq: advanced bound psi diference} to control the sum over $z,\underline z$ when proving Property~\eqref{eq: condition 3 Z}.

\appendix
\section{Localization Estimates} \label{sec: localization}

\subsection{Between Global and Local Eigenvectors}
Here we show how eigenpairs of $\mathcal H_\Z$ can be approximated by eigenpairs of $\mathcal H_{\Lambda_\ell}$ and vice versa, for given $\ell \in \N^*$. 
To simplify notations, we will denote $\mathcal H_\Z$ by $\mathcal H$ and $\mathcal H_{\Lambda_\ell}$ by $\mathcal H_\ell$.
This can be generalized to the comparison of eigenpairs on two intervals $I\subset I'\subset\Z$ at the cost of more cumbersome expressions. 
We will consider functions in $L^2(\Lambda_\ell)$ as functions in $L^2(\Z)$ by extending them by 0 outside $\Lambda_\ell$.  
The main tool is the following lemma that we learned from \cite{elgart2016eigensystem}: 
\begin{Lemma}\label{lem: globaltolocal}\phantom{a}
\begin{enumerate}
\item
Let $(\psi,\nu^2)$ be an eigenpair of $\mathcal H$
and let $P_{\nu^2,a}=P_{\nu^2,a}(\mathcal H_{\ell})$ be the spectral projection on the interval $[\nu^2-a,\nu^2+a]$ of $\mathcal H_\ell$. 
There exists a constant $\Const$ such that 
$$
    \|(1-P_{\nu^2,a}) \psi \| \; \leq \; \frac{\Const}{a} \| \chi_{\Lambda_{\ell-1}^c}\psi \|.
$$
\item
Let $(\psi,\nu^2)$ be an eigenpair of $\mathcal H_{\ell}$ and
let now $P_{\nu^2,a}=P_{\nu^2,a}(\mathcal H)$ be the spectral projection on the interval $[\nu^2-a,\nu^2+a]$ of $\mathcal H$.
There exists a constant $\Const$ such that 
$$
    \|(1-P_{\nu^2,a}) \psi \| 
    \; \leq \; \frac{\Const}{a} \|\chi_{\partial\Lambda_\ell}\psi \|
$$
where $\partial \Lambda_\ell=  \{ x\in \Lambda_\ell:   \dist(x,\Z \setminus \Lambda_\ell) = 1)\}$.
\end{enumerate}
\end{Lemma}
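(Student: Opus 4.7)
The plan is to base both parts on the elementary spectral-theorem bound
\[
  \|(1-P_{\nu^2,a})\phi\| \;\le\; \frac{1}{a}\|(A-\nu^2)\phi\|,
\]
valid for any self-adjoint $A$ with spectral projection $P_{\nu^2,a}$ onto $[\nu^2-a,\nu^2+a]$ and any $\phi$ in the domain. The remaining task, in each of the two cases, is to estimate the right-hand side by exploiting that $\psi$ is an eigenfunction of a \emph{nearby} operator that differs from $A$ only at the boundary of $\Lambda_\ell$. In other words, the proof reduces to writing out the commutator $\mathcal H-\mathcal H_\ell$ explicitly.

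For part 1, I would apply the inequality to $A=\mathcal H_\ell$ and $\phi=\chi_{\Lambda_\ell}\psi$; since $P_{\nu^2,a}$ is a projection on $L^2(\Lambda_\ell)$, I would extend it to $L^2(\Z)$ by pre-composition with $\chi_{\Lambda_\ell}$, so that the ``missing'' component $\chi_{\Lambda_\ell^c}\psi$ is trivially bounded by $\|\chi_{\Lambda_{\ell-1}^c}\psi\|$. Using $\mathcal H\psi=\nu^2\psi$, one has $(\mathcal H_\ell-\nu^2)\chi_{\Lambda_\ell}\psi=(\mathcal H_\ell\chi_{\Lambda_\ell}-\chi_{\Lambda_\ell}\mathcal H)\psi$, and a direct computation with the discrete Laplacian shows this is supported on $\partial\Lambda_\ell=\{\pm L\}$ and involves only values of $\psi$ at $\pm L$ and $\pm(L{+}1)$; all of these sites lie in $\Lambda_{\ell-1}^c$, giving $\|(\mathcal H_\ell-\nu^2)\chi_{\Lambda_\ell}\psi\|\le\Const\,\|\chi_{\Lambda_{\ell-1}^c}\psi\|$, hence the claim.

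For part 2, I would apply the same abstract bound to $A=\mathcal H$ and to $\psi$ extended by zero outside $\Lambda_\ell$. Because $\mathcal H_\ell\psi=\nu^2\psi$, the relevant quantity is $(\mathcal H-\nu^2)\psi=(\mathcal H-\mathcal H_\ell)\psi$, which is again concentrated near $\partial\Lambda_\ell$: on $\Lambda_{\ell-1}$ the two operators agree; at $x=\pm L$ the free-versus-zero boundary mismatch in the Laplacian produces a term proportional to $\eta\psi(\pm L)$; and at the outer sites $x=\pm(L{+}1)$ the Laplacian picks up another term of the same type. Thus $(\mathcal H-\nu^2)\psi$ is supported on $\partial\Lambda_\ell$ together with its one-step neighbourhood outside, and its $L^2$-norm is controlled by $\Const\,\|\chi_{\partial\Lambda_\ell}\psi\|$, yielding the second estimate.

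The only real subtleties are bookkeeping: first, treating the discrete Laplacian consistently at the free-boundary sites (so that the commutator $\mathcal H-\mathcal H_\ell$ really is supported in a bounded neighbourhood of $\partial\Lambda_\ell$), and second, giving a clean meaning to the projection $P_{\nu^2,a}$ as an operator on $L^2(\Z)$ in part 1. Both are routine once the commutator is expanded explicitly, so I expect the proof to be essentially computational with no genuine obstacle.
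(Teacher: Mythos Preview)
Your proposal is correct and follows essentially the same route as the paper: the spectral-calculus bound $\|(1-P_{\nu^2,a})\phi\|\le a^{-1}\|(A-\nu^2)\phi\|$, combined with the observation that $(\mathcal H_\ell-\mathcal H)\psi$ (respectively $(\mathcal H-\mathcal H_\ell)\psi$) is supported near $\partial\Lambda_\ell$ and controlled by the values of $\psi$ there. The paper's proof is terser and does not spell out the restriction/extension bookkeeping you mention, but the argument is the same.
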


\begin{proof}
We note that 
$$  
    \|(\mathcal H_{\ell}-\nu^2)\psi\| \;\leq\; \| (\mathcal H_{\ell}-\mathcal H) \psi \|  \; \le \; \Const \|\chi_{\Lambda_{\ell-1}^c}\psi\|.
$$
Furthermore, by spectral calculus,
$$
	\|(1-P_{\nu^2,a}) \psi \| 
	\;\leq\;  
	\left\|(1-P_{\nu^2,a}) \frac{1}{\mathcal H_{\ell}-\nu^2} \right\| \, \| (\mathcal H_{\ell}-\nu^2) \psi \|  
	\;\leq\; 
	\frac{\Const}{a} \| \chi_{\Lambda_{\ell-1}^c}\psi\|.
$$
The other item is proven similarly.
\end{proof}

Now we derive some more practical estimates.
We recall that the minimal level spacing $\Delta_\ell$ is defined in \eqref{eq: minimal level spacing}. 
Given $\gamma_1,\gamma_2>0$, let us define the event 
\begin{equation}\label{eq: definition s}
	S_\ell(\gamma_1,\gamma_2) \; = \; 
	\{\Delta_\ell \geq e^{-\gamma_1\ell}, \quad   A_\ell \leq e^{\gamma_2\ell}, \quad   A_\Z \leq e^{\gamma_2\ell}\}.
\end{equation}
We also introduce the exponent $\beta$ defined by
$$
	4\beta \; = \; \frac{1}{3\xi} -2\gamma_1-\gamma_2
$$
where $\xi$ is the localization length introduced in Section~\ref{sec: preliminary Anderson}.
We will assume that $\gamma_1$ and $\gamma_2$ are sufficiently small so that 
\begin{equation}\label{eq: condition on beta and gamma}
\beta \; > \; 0, \qquad \gamma_2  \; < \;  \frac{1}{12 \xi}.
\end{equation}
In the following lemmas, the expression ``for $\ell$ large enough" means that there exists a constant such that the statements hold true for all $\ell$ greater than this constant.
We will repeatedly use the bound \eqref{eq: definition big a} to estimate the tails of eigenvectors. 
We remind that $\lscal \cdot,\cdot\rscal$ denotes the scalar product on $L^2(\Z)$.

\begin{Lemma}\label{lem: from global to local eigenpairs}
Let $(\psi,\nu^2)$ be an eigenpair of $\mathcal H$ such that $|\loccen(\psi)| \leq (2/3)\ell$.
Assume that the event $S_\ell(\gamma_1,\gamma_2)$ holds.
For $\ell$ large enough, there exists a unique eigenpair $(\psi',(\nu^2)')$ of $\mathcal H_{\ell}$ such that 
$$
|\lscal\psi ,\psi'\rscal|^2 \;\geq\; 1- e^{-2\beta\ell}, \qquad |\nu^2-(\nu^2)'| \;\leq\; e^{-\beta \ell}.
$$ 
Moreover, if $(\psi_1,\nu_1^2)$ and $(\psi_2,\nu_2^2)$ are two distinct such eigenpairs of $\mathcal H$, then the corresponding eigenpairs  $(\psi_1',(\nu_1^2)')$ and $(\psi_2',(\nu_2^2)')$ of $\mathcal H_\ell$ are also distinct. 
\end{Lemma}
\begin{proof}
Assuming that  $S_\ell(\gamma_1,\gamma_2)$ holds,
we apply the first item of Lemma \ref{lem: globaltolocal} with $a=(1/4)e^{-\gamma_1\ell}$:
$$
	\|(1-P_{\nu^2,a}(\mathcal H_\ell))\psi \|^2 
	\;\leq\; 
	\frac{\Const}{a^2} A_\Z \ell^4 e^{-\ell/3\xi} 
	\; \leq \; 
	\Const \ell^4 e^{-4\beta\ell}.
$$
By the level spacing condition in $S_\ell(\gamma_1,\gamma_2)$, the spectral projection $P_{\nu^2,a}$ is one-dimensional 
and hence we get a unique eigenpair $(\psi',(\nu^2)')$ satisfying all requirements for $\ell$ large enough.
Indeed
$$
    1 - |\lscal\psi ,\psi'\rscal|^2
    \; = \; 
    \|(1-P_{\nu^2,a}(\mathcal H_\ell))\psi \|^2, 
    %\Const \ell^4 e^{-2\beta\ell} \;\geq\;   \; = \;  .
$$
hence the first claim follows from the bound obtained above. 
To get the second bound, we observe that 
\[
    (\nu^2 - (\nu^2)' )\psi
    \; = \; 
    (\mathcal H - \mathcal H_\ell) \psi
     + (\mathcal H_\ell - (\nu^2)') (\psi - \psi').
\]
Hence, with a right choice of gauge for $\psi'$, we find 
\begin{align*}
    |\nu^2 - (\nu^2)'|
    \; &\le \; 
    \|(\mathcal H - \mathcal H_\ell)\psi\| + \Const \|\psi - \psi'\|
    \; \le \;
    \Const \| \chi_{\Lambda_{\ell - 1}^c} \psi \| + \Const  (1 - |\lscal\psi,\psi'\rscal|^2)^{1/2}\\
    \; &\le \; 
    \Const \left(\ell^4 A_\Z e^{-\ell/3\xi}\right)^{1/2} + \Const \ell^2 e^{-2 \beta \ell}
    \; \le \; e^{-\beta \ell}
\end{align*}
for $\ell$ large enough, which proves the first part of the lemma.  
The second part follows from the fact that, if $\psi_1'$ and $\psi_2'$ were not distinct, then the overlap $|\lscal\psi_1,\psi_2\rscal|$ would be close to one, which is impossible since $\psi_1$ and $\psi_2$ ar orthogonal. 
\end{proof}

\begin{Lemma}\label{lem: from local to global eigenpairs}
Let $(\psi,\nu^2)$ be an eigenpair of $\mathcal H_\ell$ such that $|\loccen(\psi)| \leq \ell/2$.  
Assume that the event $S_\ell(\gamma_1,\gamma_2)$ holds. 
For $\ell$ large enough, there exists a unique eigenpair $(\psi',(\nu^2)')$ of $\mathcal H$ such that 
$$
    |\lscal\psi,\psi'\rscal|^2 
    \;\geq\; 
    1-\ed^{-2\beta\ell}, \qquad |\nu^2-(\nu^2)'| 
    \;\leq\; 
    \ed^{-\beta \ell}, 
    \qquad 
    |\loccen(\psi')| \leq 2\ell/3.
$$
Moreover, if $(\psi_1,\nu_1^2)$ and $(\psi_2,\nu_2^2)$ are two distinct such eigenpairs of $\mathcal H_\ell$, then the corresponding eigenpairs  $(\psi_1',(\nu_1^2)')$ and $(\psi_2',(\nu_2^2)')$ of $\mathcal H$ are also distinct. 
\end{Lemma}

\begin{proof}
Assume that  $S_\ell(\gamma_1,\gamma_2)$ holds, and let $a=(1/4)e^{-\gamma_1\ell}$. 
Let also $((\psi'_j, (\nu_j^2)'))_j$ be the eigenpairs of $\mathcal H$ such that  $(\nu_j^2)' \in [\nu^2-a,\nu^2+a]$. % ordered according to increasing $|x(E')|$. 

Let us first show that, for $\ell$ large enough, there is at least one such eigenpair with $|\loccen(\psi'_j)| \leq (2/3)\ell$. 
Indeed, from Lemma \ref{lem: globaltolocal} with $a$ as above, we find 
$$
	\| (1-P_{\nu^2,a}(\mathcal H))\psi \|^2 
	\;\leq\; 
	\frac{\Const}{a^2} \ell^4 A_\ell e^{-\ell/2\xi} 
	\; \leq \;  
	\ed^{-2\beta\ell}
$$
for $\ell$ large enough. 
Let us write 
\begin{equation}\label{eq: normalization}
	\| P_{\nu^2,a}(\mathcal H)\psi \|^2 \; = \; \sum_{j}| \lscal\psi'_j , \psi\rscal|^2.
\end{equation}
The contribution to this sum from eigenpairs $(\psi'_j,(\nu_j^2)')$ with $|\loccen(\psi'_j)| > (2/3)\ell$ is bounded as
$$
    \sum_{j: |\loccen(\psi'_j)| > (2/3)\ell} |\lscal\psi'_j,\psi\rscal|^2 
    \; \le \;
    \sum_{j: |\loccen(\psi'_j)| > (2/3)\ell}
    \Const \ell^{10} A_\Z A_\ell e^{-\ell/6\xi} 
    \; \le \; 
    \Const \ell^{13} e^{-(1/6\xi - 2 \gamma_2) \ell}, 
$$
where we have used that the number of eigenfunctions with given localization center grows at most quadratically with the distance of the localization center to origin.
Thanks to the assumption $\gamma_2 < 1/12\xi$, this tends to $0$ as $\ell\to\infty$.
Since the left hand side of \eqref{eq: normalization} is close to $1$ for $\ell$ large enough, we conclude that there exists  one eigenpair $(\psi'_j, (\nu_j^2)')$ such that $|\loccen(\psi'_j)| \leq (2/3)\ell$.

Lemma~\ref{lem: from global to local eigenpairs} applies to this eigenpair: 
For $\ell$ large enough, we find a corresponding eigenpair $((\nu_j^2)'',\psi''_j)$ for $\mathcal H_\ell$. 
Looking into the proof of Lemma~\ref{lem: from global to local eigenpairs} reveals that $|(\nu^2_j)'-(\nu^2_j)''|\leq a$. 
Hence, 
thanks to the level spacing condition in the event $S_\ell(\gamma_1,\gamma_2)$, the eignpair $((\nu_j^2)'',\psi''_j)$ is  equal to $(\psi,\nu^2)$. 
Moreover, since for large enough $\ell$, the overlap $|\lscal \psi,\psi'_j\rscal|$ is close to 1, there can be at most one such eigenpair $(\psi'_j,(\nu_j^2)')$.
We conclude using Lemma~\ref{lem: from global to local eigenpairs}.

The second part of the lemma is shown as the second part of Lemma~\ref{lem: from global to local eigenpairs}.
\end{proof}

Finally, we combine the two lemmas above to establish a bijection between local and global eigenvectors on a suitably chosen set. 
Assume that 
the event $S_\ell(\gamma_1,\gamma_2)$ holds and define the sets
    \begin{align*}
        X \; &= \; \{ (\psi,\nu^2) \text{ eigenpair of }\mathcal H_{\ell}: |\loccen(\psi)| \leq \ell/2 \}, \\
	\qquad 
	Y \; &= \; \{(\psi',(\nu^2)') \text{ eigenpair of }\mathcal H: |\loccen(\psi')| \leq 2\ell/3 \}.
    \end{align*}
    For $\ell$ large enough, there exists a bijection $\Phi : X \to \Phi(X)\subset Y$ such that for any $(\psi,\nu^2)\in X$, $\Phi(\psi,\nu^2)$ is the eigenpair provided by Lemma~\ref{lem: from local to global eigenpairs}, and for any $(\psi',(\nu^2)')\in \Phi(X)$, $\Phi^{-1}(\psi',(\nu^2)')$ is the eigenpair furnished by Lemma~\ref{lem: from global to local eigenpairs}. 
    Moreover
\begin{Lemma}\label{lem: bijection local global}
    If $(\psi',(\nu^2)') \in Y\backslash \Phi(X)$, 
    $$
	|\loccen(\psi')| \;\ge\; \ell/4.  
    $$
\end{Lemma}
\begin{proof}
On the one hand, by Lemma~\ref{lem: from global to local eigenpairs}, there exists an eigenpair $(\psi'', (\nu^2)'')$ of $\mathcal H_\ell$ so that 
\begin{equation}\label{eq: one hand}
	|\lscal\psi'' ,\psi'\rscal|^2 \;\geq\; 1- \ed^{-2\beta\ell}, \qquad |(\nu^2)''-(\nu^2)'| \;\leq\; \ed^{-\beta \ell}.
\end{equation}
On the other hand, we get the bound 
$$
    |\lscal \psi' , \psi''\rscal|^2
    \; \le \; 
    \Const A_\Z A_\ell \ell^{10} e^{-|\loccen(\psi'')-\loccen(\psi')|/\xi}
    \; \le \; 
    \Const \ell^{10} e^{2\gamma_2 \ell - |\loccen(\psi'')-\loccen(\psi')|/\xi}.
$$
We know that $|\loccen(\psi'')|>\ell/2$, as otherwise $(\psi',(\nu^2)')$ would belong to $\Phi(X)$. Hence, if $|\loccen(\psi')|<\ell/4$, we would have 
$$
    |\lscal \psi' , \psi''\rscal|^2
    \; \le \; 
    \Const \ell^{10} e^{-\ell(1/4\xi - 2\gamma_2)}
$$
which converges to $0$ as $\ell\to \infty$ since we assume $\gamma_2\le 1/12\xi$, 
and this would imply a contradiction with \eqref{eq: one hand}.
\end{proof}

\section{Decay of Correlation in the Gibbs State} \label{sec: gibbs}
We state the needed results on the decay of correlations in the Gibbs state. 
Given a function $u$ on $\R^{2|\Lambda_L|}$, we denote by $\mathrm{supp}(u) \subset \Lambda_L$ the smallest set of points such that $u$ is constant as a function of 
$(q_x, p_x)$ if $x \notin \mathrm{supp}(u)$ (it is thus not the usual support of a function on $\R^{2|\Lambda_L|}$). 
Given two functions $u,v$ on $\R^{2|\Lambda_L|}$, we denote by $d(u,v)$ the distance between $\mathrm{supp}(u)$ and $\mathrm{supp}(v)$.
\begin{Proposition}\label{pro: decay of correlations}
	There exist constants $\Const,c$ such that, 
	for any polynomials on $\R^{2|\Lambda_L|}$ satisfying $\langle u \rangle = \langle v \rangle = 0$, it holds that 
	$$
		|\langle uv \rangle| \; \le \; \Const \ed^{- c d(u,v)} \langle (\nabla u)^2\rangle^{1/2}\langle (\nabla v)^2\rangle^{1/2} \qquad a.s.
	$$
\end{Proposition}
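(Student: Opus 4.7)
The plan is to combine the Helffer--Sj\"ostrand covariance representation with a Combes--Thomas resolvent estimate on the Witten Laplacian on $1$-forms. Since $\langle u\rangle = \langle v\rangle = 0$, the quantity $\langle uv\rangle$ coincides with the covariance $\langle u;v\rangle$.

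The first step is to record uniform log-concavity of the Gibbs measure. Viewing $H$ as a function on $\R^{2|\Lambda_L|}$, the Hessian $H''$ is block diagonal in the $(q,p)$ splitting: the $p$-block is the identity and the $q$-block equals $\mathcal H_{\Lambda_L} + 3\lambda\,\mathrm{diag}(q_x^2)$. Since $\mathcal H_{\Lambda_L}\geq \nu_-^2 I$ almost surely and $\lambda\geq 0$, one has $H''(q,p)\geq \kappa\, I$ pointwise with $\kappa = \min(1,\nu_-^2) > 0$ deterministic and independent of $\lambda$, $L$ and the disorder. The Helffer--Sj\"ostrand identity then expresses
\begin{equation*}
\langle u;v\rangle \;=\; \langle \nabla u, \mathcal A^{-1} \nabla v\rangle_{L^2(\mu;\,\R^{2|\Lambda_L|})},
\end{equation*}
where $\mathcal A = \mathcal L\otimes I + H''(q,p)$ is the Witten Laplacian on $1$-forms; here $\mathcal L$ is the non-negative $L^2(\mu)$-generator of the Dirichlet form of $\mu$, acting scalarly on each component of a vector field, and $H''$ acts by multiplication by the Hessian matrix. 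Uniform log-concavity yields $\mathcal A \geq \kappa$ as a self-adjoint operator on $L^2(\mu;\,\R^{2|\Lambda_L|})$.

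Next I would run a Combes--Thomas argument on $\mathcal A$. The only off-diagonal-in-position entries of $H''$ come from the nearest-neighbor coupling $-\eta$, and $\mathcal L\otimes I$ is diagonal in the position label $x$; in particular $\mathcal A$ is nearest-neighbor in $x$. Conjugating by $e^{\mu\phi(x)}$ with $\phi:\Lambda_L\to\R$ a $1$-Lipschitz function and $\mu>0$ small, the standard commutator bound gives $\| e^{-\mu\phi}\mathcal A e^{\mu\phi} - \mathcal A\|\leq \Const\eta\mu$, whence $e^{-\mu\phi}\mathcal A e^{\mu\phi}\geq \kappa/2$ and therefore $\| e^{-\mu\phi}\mathcal A^{-1} e^{\mu\phi}\|\leq 2/\kappa$ for $\mu$ small enough. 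Choosing $\phi(x) = \mathrm{dist}(x,\mathrm{supp}(u))$, the components of $\nabla u$ vanish outside $\mathrm{supp}(u)$ (where $\phi \equiv 0$) and those of $\nabla v$ vanish outside $\mathrm{supp}(v)$ (where $\phi \geq d(u,v)$). Writing $\langle \nabla u, \mathcal A^{-1} \nabla v\rangle = \langle e^{\mu\phi}\nabla u,\, (e^{-\mu\phi}\mathcal A^{-1} e^{\mu\phi})\, e^{-\mu\phi}\nabla v\rangle$ and applying Cauchy--Schwarz delivers the announced bound with $c=\mu$ and $\Const = 2/\kappa$.

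The main technical obstacle is the rigorous setup of the Witten Laplacian on $1$-forms and of the Helffer--Sj\"ostrand identity for unbounded polynomial observables in the presence of the quartic confining potential; this amounts to checking that polynomials lie in the form domain of $\mathcal A$ and that the representation extends by approximation using moment bounds on $\mu$, which are easy consequences of $\lambda\geq 0$ and uniform log-concavity. The Combes--Thomas step itself is insensitive to $\lambda$: the $\lambda$-dependent piece of $H''$ is on-site and non-negative, so it commutes with any position-multiplier $e^{\mu\phi(x)}$ and drops out of the commutator estimate, which is why the final constants $\Const,c$ can be taken independent of $\lambda$.
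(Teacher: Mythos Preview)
Your proposal is correct and is precisely the Helffer--Sj\"ostrand/Combes--Thomas argument that the paper defers to its references (Helffer 1998, 1999; Bodineau--Helffer 2000; and Lemma~2 of \cite{WDR_Huveneers_Olla}); the paper gives no self-contained proof, and you have supplied the sketch it omits, including the key observations that the Hessian is uniformly bounded below by $\min(1,\nu_-^2)$ and that its only off-diagonal entries are the nearest-neighbor $-\eta$, so the Combes--Thomas constants are independent of $\lambda$ and $L$.
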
 
\begin{proof}
See Lemma~2 in \cite{WDR_Huveneers_Olla}. 
See also the original references \cite{HELFFER_1998}\cite{HELFFER_1999}\cite{bodineau_helffer_2000}.
\end{proof}

We will exploit this result through the following corollary. 
We will make use of the following notations: 
Given $m\ge 1$, given $k \in \{1,\dots, |\Lambda_L|\}^m$ and given $\sigma \in \{\pm\}^m$, 
let $a_k^\sigma = a_{k_1}^{\sigma_1} \dots a_{k_m}^{\sigma_m}$, with $a_{k_j}^{\sigma_j}$ defined in \eqref{eq: pqa transform} for $1\le j \le m$.
Given $x\in \Lambda_L^m$, let also $\psi_k(x) = \psi_{k_1}(x_1)\dots\psi_{k_m}(x_m)$, where $\psi_{k_j}$ is an eigenvector of $\mathcal H_{\Lambda_L}$. 
\begin{Corollary}\label{cor: decay of correlations}
Let $m\ge 1$. There exist constants $\Const_m,c$ such that  
\begin{enumerate}
	\item 
	if $f$ is a monomial in $(q_x,p_x)_{x\in \Lambda_L}$ of degree at most $m$, then 
	$$
		|\langle f \rangle| \; \le \; \Const_m, 
	$$
	\item 
	for all $k \in \{1,\dots, |\Lambda_L|\}^m$ and $\sigma \in \{\pm\}^m$, 
	$$
		|\langle a_k^\sigma \rangle| \; \le \; \Const_m, 
	$$
	and in particular $\langle E_k;E_k\rangle \le \Const_2$, 
	\item 
	for all $k = (k_1,\dots,k_{m})\in \N$ and $k'=(k_1',\dots,k_{m'}')$, with $m'\le m$, 
	$$
		|\langle a_k^\sigma;a_{k'}^{\sigma'}\rangle| \; \le \; \Const_m \sum_{x,x'}|\psi_k(x)\psi_{k'}(x')| \ed^{-c d(x,x')}.
	$$
\end{enumerate}
\end{Corollary}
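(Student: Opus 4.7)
The strategy is to reduce all three claims to (i) Gaussian-like moment bounds coming from a deterministic lower bound on the Hamiltonian, and (ii) a local eigenfunction decomposition of $a_k^\sigma$ that lets us invoke Proposition~\ref{pro: decay of correlations}.

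For Part 1, I would start from the deterministic lower bound
$$H(q,p) \; \ge \; H_\har(q,p) \; \ge \; \tfrac12 \lscal p,p\rscal + \tfrac{\nu_-^2}{2}\lscal q,q\rscal,$$
which follows from $\lambda \ge 0$, $H_\an \ge 0$, and $\mathcal H_{\Lambda_L} \ge \nu_-^2 I$ (cf.~\eqref{eq: nu - nu +}). This pointwise dominates the Gibbs density by a Gaussian density whose covariance is bounded in terms of $\nu_-^2$, so $\langle |q_x|^r\rangle$ and $\langle |p_x|^r\rangle$ are uniformly bounded for every fixed $r$. Applying Hölder's inequality to any monomial of degree at most $m$ then yields Part 1.

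For Parts 2 and 3 the key observation is the local decomposition derived from \eqref{eq: pqa transform},
$$a_{k_j}^{\sigma_j} \; = \; \sum_{y\in\Lambda_L} \psi_{k_j}(y)\, b_y^{k_j,\sigma_j}, \qquad b_y^{k_j,\sigma_j} \; = \; \tfrac{\nu_{k_j}^{1/2}}{\sqrt{2}} q_y -\tfrac{\imag \sigma_j}{\sqrt{2}\nu_{k_j}^{1/2}} p_y,$$
which, after taking products, gives $a_k^\sigma = \sum_{y\in\Lambda_L^m} \psi_k(y)\, B_y^{k,\sigma}$ with $B_y^{k,\sigma} = \prod_{j=1}^m b_{y_j}^{k_j,\sigma_j}$. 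The coefficients of $B_y^{k,\sigma}$ are bounded uniformly thanks to $\nu_-\le\nu_{k_j}\le \nu_+$, and its support is contained in $\{y_1,\dots,y_m\}$. For Part 2, I would write $|\langle a_k^\sigma\rangle| \le \sum_y |\psi_k(y)|\,|\langle B_y^{k,\sigma}\rangle|$, control $|\langle B_y^{k,\sigma}\rangle|$ by Part 1 applied factor by factor, and close the sum by Cauchy--Schwarz together with $\sum_{y_j}\psi_{k_j}(y_j)^2=1$. The bound $\langle E_k;E_k\rangle \le \Const$ then follows because $E_k$ is a quadratic form in $(\lscal q,\psi_k\rscal,\lscal p,\psi_k\rscal)$ with bounded coefficients.

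For Part 3, the same decomposition gives
$$\langle a_k^\sigma; a_{k'}^{\sigma'}\rangle \; = \; \sum_{y,y'} \psi_k(y)\psi_{k'}(y')\,\langle B_y^{k,\sigma}; B_{y'}^{k',\sigma'}\rangle,$$
and I would apply Proposition~\ref{pro: decay of correlations} to the zero-mean shifts $B_y^{k,\sigma}-\langle B_y^{k,\sigma}\rangle$ and $B_{y'}^{k',\sigma'}-\langle B_{y'}^{k',\sigma'}\rangle$ (handling the complex coefficients by splitting into real and imaginary parts, which only costs a factor of $4$). Since $d(B_y,B_{y'})$ is at least $d(\{y_1,\dots,y_m\},\{y'_1,\dots,y'_{m'}\})$, and since $\nabla B_y^{k,\sigma}$ is a polynomial of degree at most $2m-1$ with uniformly bounded coefficients, Part 1 gives $\langle(\nabla B_y^{k,\sigma})^2\rangle^{1/2} \le \Const$, and this combines with the exponential factor to give the stated bound. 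The main obstacle is really only notational: tracking supports and gradient degrees carefully so that the constants end up depending only on $m$ and the fixed data $\eta,\nu_-,\nu_+$; no new probabilistic input beyond the deterministic lower bound on $H$ and Proposition~\ref{pro: decay of correlations} is required.
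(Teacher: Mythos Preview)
Your Part~3 is essentially the paper's argument and is correct. There are, however, genuine gaps in Parts~1 and~2.

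\medskip
\textbf{Part 1.} The pointwise bound $H\ge G:=\tfrac12\lscal p,p\rscal+\tfrac{\nu_-^2}{2}\lscal q,q\rscal$ gives $e^{-H}\le e^{-G}$, but this does \emph{not} dominate the normalized density: one gets
\[
\langle |q_x|^r\rangle_H \;\le\; \frac{Z_G}{Z_H}\,\langle |q_x|^r\rangle_G,
\]
and the ratio $Z_G/Z_H$ grows exponentially in $|\Lambda_L|$ (already at $\lambda=0$ it contains the factor $\prod_k \nu_k/\nu_-$). So the Gaussian-domination step fails to give bounds uniform in $L$. The paper instead proves Part~1 \emph{inductively} using Proposition~\ref{pro: decay of correlations} itself: $\langle q_x^2\rangle=\langle q_x;q_x\rangle\le\Const\langle(\nabla q_x)^2\rangle=\Const$, and then
\[
\langle q_x^{m+2}\rangle=\langle q_x^{(m+2)/2};q_x^{(m+2)/2}\rangle+\langle q_x^{(m+2)/2}\rangle^2,
\]
with both terms controlled by the inductive hypothesis since $\langle(\nabla q_x^{(m+2)/2})^2\rangle\le\Const\langle q_x^m\rangle$.

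\medskip
\textbf{Part 2.} Expanding first and then bounding termwise loses too much: after $|\langle a_k^\sigma\rangle|\le\sum_{y\in\Lambda_L^m}|\psi_k(y)|\,|\langle B_y^{k,\sigma}\rangle|$ with $|\langle B_y^{k,\sigma}\rangle|\le\Const$, the remaining sum is $\Const\prod_j\|\psi_{k_j}\|_1$, and Cauchy--Schwarz with $\|\psi_{k_j}\|_2=1$ only yields an extra $|\Lambda_L|^{m/2}$, not a uniform constant. The paper avoids this by \emph{not} expanding in sites: it runs the same inductive splitting as in Part~1 directly on $a_k^\sigma$, using that $|\nabla a_{k_j}^{\sigma_j}|^2=\Const\sum_y|\psi_{k_j}(y)|^2=\Const$ is a deterministic constant (this is where the $\ell^2$-normalization is actually used). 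The Leibniz expansion of $\nabla(a_{k_1}^{\sigma_1}\cdots a_{k_r}^{\sigma_r})$ then reduces the gradient second moment to lower-order $a$-moments, closing the induction.
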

\begin{proof}
To simplify the notation, we will not explicitly indicate the dependence of the constants on $m$.
Let us start with item 1. 
Since the Gibbs measure is a product measure in the $p$ variables, 
and since $\langle q_x^{m}\rangle = 0$ for $m$ odd, it is enough to get a bound on $\langle q_x^{m}\rangle$ with $m$ even ad $x\in \Lambda_L$. 
For $m=2$, we deduce from Proposition~\ref{pro: decay of correlations} that 
$$
		\langle q_x^2 \rangle \; = \;  \langle q_x q_x \rangle \; \le \; \Const.
$$
Let us next assume that the result holds for $m\ge 2$, and let us prove it for $m+2$: 
$$
	\langle q_x^{m+2}\rangle 
	\; = \; 
	\langle q_x^{(m+2)/2};q_x^{(m+2)/2}\rangle + \langle q_x^{(m+2)/2}\rangle^2 
	\; \le \; 
	\Const \langle (\nabla q_x^{(m+2)/2})^2 \rangle + \langle q_x^{(m+2)/2}\rangle^2 
	\; \le \; \Const
$$
where the first bound follows from Proposition~\ref{pro: decay of correlations}, and the second from our inductive hypothesis. 

We prove item 2 in a similar way. 
The definition \eqref{eq: pqa transform} may be rewritten as 
\begin{equation}\label{eq: new expansion a k sigma}
	a_{k_j}^{\sigma_j}
	\; = \; 
	\sum_{x\in \Lambda_L} \psi_{k_j}(x) a_{k_j}^{\sigma_j}(x)
	\qquad \text{with} \qquad 
	a_{k_j}^{\sigma_j}(x) \; := \; \frac1{\sqrt 2}\left(\nu_{k_j}^{1/2} q_x - \frac{\mathrm{i}\sigma_{k_j}}{\nu_k^{1/2}}p_x\right).
\end{equation}
Again, it is enough to prove the result for $m$ even. 
For $m=2$, we deduce from Proposition~\ref{pro: decay of correlations} that
$$
	\langle a_{k_1}^{\sigma_1} a_{k_2}^{\sigma_2} \rangle 
	\; \le \; 
	\Const \langle (\nabla a_{k_1}^{\sigma_1})^2 \rangle^{1/2} \langle (\nabla a_{k_2}^{\sigma_2})^2 \rangle^{1/2}
	\; \le \; 
	\Const 
$$
where the last bound follows by applying the first item of this corrolary, and from the normalization of the eigenvectors $\psi_{k_1},\psi_{k_2}$.
Let us next assume that the result holds for $m\ge 2$ and let us prove it for $m+2$. 
We decompose
$$
	\langle a_{k_1}^{\sigma_1} \dots a_{k_{m+2}}^{\sigma_{m+2}}\rangle 
	\; = \; 
	\langle a_{k_1}^{\sigma_1} \dots a_{k_{m/2}+1}^{\sigma_{m/2+1}} ; a_{k_{m/2}+2}^{\sigma_{m/2+2}} \dots a_{k_{m+2}}^{\sigma_{m+2}}\rangle 
	+ 
	\langle a_{k_1}^{\sigma_1} \dots a_{k_{m/2}+1}^{\sigma_{m/2+1}}\rangle\langle a_{k_{m/2}+2}^{\sigma_{m/2+2}} \dots a_{k_{m+2}}^{\sigma_{m+2}}\rangle .
$$
The second term in the right hand side is bounded by the inductive hypothesis. For the first one, we apply Proposition~\ref{pro: decay of correlations} and get the bound
$$
	\Const \langle (\nabla a_{k_1}^{\sigma_1} \dots a_{k_{m/2}+1}^{\sigma_{m/2+1}})^2\rangle^{1/2}  
	\langle (\nabla a_{k_{m/2}+2}^{\sigma_{m/2+2}} \dots a_{k_{m+2}}^{\sigma_{m+2}})^2\rangle^{1/2}.
$$
Expanding the gradients according to Leibniz rule and using again the normalization of the wave functions, 
we deduce that this is bounded by a constant thanks to our inductive hypothesis.    

Item 3 is shown by expanding $a_k^\sigma$  and $a_{k'}^{\sigma'}$ as in \eqref{eq: new expansion a k sigma},
and then using Proposition~\ref{pro: decay of correlations} and item 1 of this corrolary. 
\end{proof}

\bibliographystyle{plain}
\bibliography{bibilography}
\end{document}